\documentclass{iopjournal}

\usepackage[T1]{fontenc}
\usepackage{helvet}
\usepackage{mathtools}
\usepackage{amssymb}
\usepackage{bm}
\usepackage{tikz-cd}

\usepackage{setspace}
\allowdisplaybreaks[2]
\graphicspath{{figures/}}
\DeclareGraphicsExtensions{.pdf,.png,.eps}

\usepackage[final,tracking=true,expansion=true]{microtype}
\microtypecontext{spacing=nonfrench}

\hypersetup{
  unicode            = true,
  pdfauthor          = {Daxx W Delucchi},
  pdftitle           = {Unitarity and Finite Self-Energy of a Massive Classical Point Charge as a Naked Point Singularity},
  pdfdisplaydoctitle = true
}

\DeclareRobustCommand{\dd}{\mathop{}\!\mathrm{d}}

\newcommand{\Order}[1]{\mathcal{O}\bigl(#1\bigr)}

\DeclarePairedDelimiter{\abs}{\lvert}{\rvert}
\DeclarePairedDelimiter{\norm}{\lVert}{\rVert}

\DeclareMathOperator{\Dom}{Dom}

\DeclareMathOperator{\Ker}{Ker}
\DeclareMathOperator{\supp}{supp}

\DeclareRobustCommand{\RN}{Reissner--Nordstr\"om}
\DeclareRobustCommand{\KS}{Kerr--Schild}

\DeclareRobustCommand{\calR}{\ensuremath{\mathcal{R}}}
\DeclareRobustCommand{\Fried}{\mathrm{F}}

\DeclareRobustCommand{\Ltwo}{\ensuremath{L^2(0,\infty)}}
\DeclareRobustCommand{\Honezero}{\ensuremath{H_0^1(0,\infty)}}
\DeclareRobustCommand{\HE}{\ensuremath{\mathcal{H}_E}}

\newtheorem{theorem}{Theorem}[section]

\newtheorem{lemma}[theorem]{Lemma}
\newtheorem{corollary}[theorem]{Corollary}

\newcommand{\qedsymbol}{\ensuremath{\square}}

\newenvironment{proof}[1][Proof]{%
  \par\noindent\textbf{#1.}\ }%
  {\hfill$\qedsymbol$\par}

\usepackage[capitalize,nameinlink,noabbrev]{cleveref}
\crefname{theorem}{Theorem}{Theorems}
\crefname{proposition}{Proposition}{Propositions}
\crefname{lemma}{Lemma}{Lemmas}
\crefname{corollary}{Corollary}{Corollaries}

\makeatletter

\renewcommand*\l@section{\@dottedtocline{1}{1.5em}{2.3em}}
\renewcommand*\l@subsection{\@dottedtocline{2}{3.8em}{3.2em}}
\renewcommand*\l@subsubsection{\@dottedtocline{3}{7.0em}{4.1em}}

\def\@dottedtocline#1#2#3#4#5{%
  \ifnum #1>\c@tocdepth \else
    \addpenalty\@secpenalty
    \addvspace{0em}
    {\leftskip #2\relax
     \rightskip \@tocrmarg
     \parfillskip -\rightskip
     \parindent #2\relax
     \@afterindenttrue
     \interlinepenalty\@M
     \leavevmode
     \@tempdima #3\relax
     \advance\leftskip \@tempdima
     \null\nobreak\hskip -\leftskip
     {#4}%
     \nobreak\hfill
     \nobreak
     \hb@xt@\@pnumwidth{%
       \hfil{\ifnum#1=1\bfseries\fi #5}%
     }%
     \par}%
  \fi}

\newcommand{\myack}[1]{%
  \section*{Acknowledgments}
  \addcontentsline{toc}{section}{Acknowledgments}
  #1%
}

\newcommand{\appsection}[1]{%
  \refstepcounter{section}
  \section*{Appendix~\thesection.\ #1}
  \addcontentsline{toc}{section}{Appendix~\thesection.\ #1}
}

\makeatother

\begin{document}

\articletype{Paper} 

\title{Unitary and finite self-energy of a massive classical point charge as a naked point singularity}

\author{Daxx W Delucchi$^1$\orcid{0009-0005-1196-1522}}

\affil{$^1$Department of Physics, Arizona State University, Tempe, USA}

\email{ddelucch@asu.edu}

\keywords{Einstein--Maxwell, Reissner--Nordstr\"om, point charge, naked singularity, Hardy inequality, Friedrichs extension, unitary evolution, self-field}

\begin{abstract}
We analyze linear Einstein--Maxwell perturbations of the superextremal \RN\ geometry, the nonlinear self-field of a single static point charge, in its static \KS\ rest frame. Using optical radial coordinates and gauge-invariant master variables, we reduce the radiative Einstein--Maxwell sectors to scalar wave equations on the half-line. The resulting master equations are of Regge--Wheeler type, with inverse-square potential cores at the optical apex and short-range tails at infinity. Hardy control at the apex, together with the short-range estimates at infinity, proves that the action-induced spatial $T$-energy forms are semibounded and closable. Closing these spatial forms determines their finite-energy configuration-form domains; adjoining the $L^2$ velocity components supplied by the kinetic energy gives the Cauchy-data energy spaces. The form representation theorem then yields the action-normalized Friedrichs Hamiltonians associated with the master operators. The static Coulomb field and its nonlinear gravitational backreaction are treated as the exact background. All radiative perturbations whose Cauchy data have finite total $T$-energy evolve uniquely and unitarily on the completed energy space. The naked singularity remains geometrically present but is dynamically silent: it carries no $T$-energy flux, supports no finite-energy bound state, and introduces no hidden finite-energy endpoint sector. We also construct the forward radiation field at future null infinity, obtaining a translation representation in which the conserved $T$-energy equals the squared $L^2$ norm of the radiation profile in retarded time.
\end{abstract}

\section{Introduction}

Classical electrodynamics assigns an infinite self-energy to a point charge: the
Coulomb energy integral diverges at the charge itself. In Einstein--Maxwell theory, the analogous
object is the \RN\ self-field, in which the Coulomb field and its nonlinear gravitational backreaction
form a single charged configuration. In the subextremal and extremal regimes, the central
singularity is hidden behind an event horizon, yielding a canonical scattering picture and a well-posed
self-field problem outside the black hole.

In the \emph{superextremal} regime, where $Q^2>M^2$, the would-be horizon disappears and the single classical point charge is exposed geometrically as the naked point singularity at $r=0$. From the perspective of wave propagation, this raises three closely related questions. Does the $T$-energy induced by the Einstein--Maxwell action close to a complete finite-energy space carrying unitary dynamics on the fixed superextremal \RN\ background? Does that closure determine the admissible behavior at the exposed singular endpoint without a supplemental boundary law? And does the presence of a naked curvature singularity necessarily destroy predictability?

Throughout this work, ``finite self-energy'' means finite Einstein--Maxwell $T$-energy for linear perturbations of the fixed superextremal \RN\ self-field background. The nonlinear Coulomb self-energy of the background remains divergent in the usual sense. The Einstein--Maxwell action and the static Killing symmetry first supply the perturbative $T$-energy as a quadratic form on the primitive core of smooth, compactly supported physical perturbations, whose elements already have finite energy. No finite-energy Hilbert space or singular-endpoint behavior is presupposed at that stage. The analytic question is whether this action-supplied form is semibounded and closable. Once that is proved, closure in its form topology determines the finite-energy configuration-form domain. Adjoining the $L^2$ velocity component supplied by the kinetic term gives the Cauchy-data energy space; membership in that completed space, rather than an independently imposed endpoint regularity condition, is what finite $T$-energy means.

Previous work on dynamics in non-globally-hyperbolic static spacetimes has framed the singularity problem in terms of self-adjoint evolution. Horowitz and Marolf characterize a static singularity as quantum mechanically regular when test-field evolution is uniquely determined without additional singular-endpoint data~\cite{HorowitzMarolf1995}. Under a full set of locality, causality, symmetry, and conserved-energy axioms, Ishibashi and Wald show more generally that admissible scalar-wave dynamics must arise from some positive self-adjoint realization of the spatial wave operator~\cite{IshibashiWald2003,IshibashiWald2004}. The present problem differs precisely at the point of selection: closure of the Einstein--Maxwell $T$-energy itself fixes the relevant realization.

Among the closest coupled-field antecedents, Moncrief develops the gauge-invariant Hamiltonian reduction of gravitational and electromagnetic perturbations of \RN\ black holes~\cite{Moncrief1975RN}. Giorgi establishes linear stability throughout the subextremal range $|Q|<M$~\cite{GiorgiRNFullSubextremal}, while Apetroaie studies decay, non-decay, and polynomial blow-up of coupled gauge-invariant quantities along the extremal horizon $|Q|=M$~\cite{ApetroaieExtremalRN}. These horizon problems do not encounter an exposed $r=0$ endpoint. Crossing to $|Q|>M$ changes the inner radial problem: the horizon disappears and the singularity is exposed at finite optical distance, requiring the energy-domain construction studied here.

At that exposed endpoint, Stalker and Tahvildar-Zadeh construct Friedrichs dynamics from the natural conserved-energy form of a massless, chargeless, spherically symmetric scalar probe on prescribed \RN\ geometry~\cite{StalkerTahvildarZadeh2004}. Gibbons, Hartnoll, and Ishibashi likewise provide a finite-physical-energy endpoint-selection precedent for vacuum perturbations of negative-mass Schwarzschild, obtaining stability with the selected condition~\cite{GibbonsHartnollIshibashi2005}. The distinction here concerns the dynamical theory whose energy is completed. The zeroth-order \RN\ solution $(g,F)$ remains fixed, while the evolving fields are the coupled linearized Einstein--Maxwell perturbations $(\delta g,\delta F)$ themselves.

Scalar evolution on superextremal \RN\ is nonunique when the singular endpoint is treated through freely specifiable boundary data~\cite{ChirentiSaaSkakala2012}. Dotti, Gleiser, and collaborators construct algebraically special unstable modes in singular Reissner--Nordstr\"om and Kerr regions~\cite{DottiGleiserPullin2007,DottiGleiser2010,DottiReview2022}. Their formal solutions are retained; the question addressed below is their membership in the domain obtained by completing the coupled Einstein--Maxwell $T$-energy.

The present problem also differs from quantum test-field analyses of naked Reissner--Nordstr\"om backgrounds~\cite{BelgiornoMartelliniBaldicchi2000,KapengutKiesslingLingTahvildarZadeh2025}. Those works study the self-adjointness and spectrum of a Dirac operator on a prescribed geometry. Here the Einstein--Maxwell self-field remains the classical dynamical system: its quadratic action produces the master $T$-energy, and closure of that energy determines the physical domain and Friedrichs Hamiltonian. No quantum matter field or independently prescribed singular-endpoint condition enters the construction.

More broadly, the present problem belongs to the Einstein--Maxwell self-field tradition in which the electromagnetic and gravitational fields are treated as dynamical structures rather than as fields assigned to an externally prescribed particle. This perspective lies within the older geometric Einstein--Maxwell tradition associated with Rainich's formulation of electrodynamics in general relativity and with the geometrodynamical program of Misner and Wheeler~\cite{Rainich1925,MisnerWheeler1957}. The present paper is not a Rainich reconstruction or a geon construction; the connection is motivational and concerns the treatment of electromagnetic and gravitational fields as a self-consistent geometric system. Nor is it a soliton model of an elementary particle. It addresses a prerequisite for any such program: whether a charged self-field with a singular core possesses a well-defined finite-energy state space and deterministic dynamics without an independently imposed endpoint law. Extended-field and solitonic particle models, including proposals in which electromagnetic self-interaction produces internal or quantum-like motion, motivate the same preliminary question. For examples of this broader but mathematically distinct research direction, see Faber's discussion of particles as stable topological solitons and L\'opez's analysis of extended electromagnetic self-interaction~\cite{Faber2012,Lopez2020}. Neither construction is used in the proof below; they provide only historical and physical motivation for asking whether a self-field theory possesses a well-defined finite-energy state space. The present paper constructs neither an extended particle nor a model of internal or quantum-like self-motion; it establishes the finite-energy linear dynamics of the fixed \RN\ self-field.

In this work, by contrast, we begin with the quadratic $T$-energy induced by the second variation of the Einstein--Maxwell action on smooth, compactly supported gauge-invariant perturbations. This energy is not an auxiliary regularity condition imposed after the master equations have been derived. It is the conserved physical norm supplied by the Einstein--Maxwell theory itself. On this primitive core the energy is finite because the perturbations are compactly supported away from the endpoints; no completed finite-energy space is assumed at this stage.

The remaining question is whether this action-supplied quadratic invariant
admits a semibounded closed completion in the singular geometry and thereby
determines a complete evolution law. Hardy control of the inverse-square core
at the apex, together with the short-range form estimates at infinity, proves
that each mode form $q_\pm$, initially defined on
$C_0^\infty(0,\infty)$, is semibounded and closable. Its closure,
denoted by the same symbol and distinguished by its stated domain,
determines rather than presupposes the finite-$T$-energy configuration-form
domain. Adjoining the $L^2$ velocity component supplied by the kinetic term
gives the Cauchy-data energy space, and the representation theorem then
determines the associated Friedrichs Hamiltonian $H_\pm^\Fried$. Thus the
Einstein--Maxwell action supplies the physical core form; Hardy and tail
control prove that its completion exists; closure determines the
configuration-form domain; the kinetic term supplies the velocity component;
and the representation theorem supplies the self-adjoint spatial generator.
The Friedrichs realization is not selected independently from a family of
endpoint conditions; it is determined by representation of the closed
action-induced energy form.

The novelty claimed here is therefore not the gauge-invariant master reduction, the Hardy and Friedrichs machinery, finite-energy endpoint selection as a general principle, or the abstract spectral and radiation tools separately. It is the construction of the coupled linearized superextremal Einstein--Maxwell finite-energy theory from the action-supplied core form and the resulting determination, within the Cauchy-data space built from that closed form, of the physical status of the Dotti--Gleiser branch, the spectral content, and the complete radiation representation.

Because $Q^2>M^2$, the static Killing field $T=\partial_t$ remains timelike throughout the regular region and simultaneously determines the global rest frame and the conserved $T$-energy. The optical radius identifies each static slice with the half-line $x\in(0,\infty)$, with the curvature singularity represented by the finite endpoint $x=0$ and spatial infinity by $x=\infty$. In these coordinates the gauge-invariant Einstein--Maxwell master equations become singular Schr\"odinger-type equations with inverse-square cores at the apex and short-range tails at infinity.

The master reduction preserves more than the equations of motion. It carries the quadratic Einstein--Maxwell action and its $T$-energy into decoupled master-channel actions and energies. The endpoint problem thereby becomes a precise question about the forms generated by the physical energy: whether they are semibounded and closable on the optical half-line, which self-adjoint operators their closures represent, and whether the resulting finite-energy dynamics is completely radiative.

For each radiative master field, the spatial $T$-energy is initially the form
\begin{align*}
q_\pm[\phi]
:=
\frac12
\int_0^\infty
\left(
\abs{\partial_x\phi(x)}^2
+
V_\pm(x)\abs{\phi(x)}^2
\right)\,\dd x,
\qquad
\phi\in C_0^\infty(0,\infty).
\end{align*}
The corresponding formal spatial expression is
\begin{align*}
H_\pm^{\mathrm{formal}}
:=
-\partial_x^2+V_\pm(x),
\qquad
x\in(0,\infty).
\end{align*}
No endpoint domain or self-adjoint realization is imposed at this stage. The Hardy and tail estimates yield a semibounded closure with form domain $\Honezero$, and the associated finite-$T$-energy Cauchy-data space is
\begin{align*}
\HE
:=
\Honezero\times\Ltwo.
\end{align*}
This space is not selected before the physical energy is known. It is the completion forced by that energy. The closed form is therefore the precise point at which the physical and mathematical constructions coincide: its domain is the set of finite-$T$-energy configurations, and its associated operator is the Friedrichs Hamiltonian. The operator does not define the physical energy; the physical energy defines the operator.

This constitutive ordering is not a single linear chain. Geometry first
supplies the conserved $T$-energy, whose mode forms close to the physical
energy space and determine the Friedrichs Hamiltonians. At
$H_\pm^\Fried$ the argument bifurcates: the Friedrichs branch yields
unitary bulk evolution, while the Doob branch exposes the positive-square
structure and, together with the short-range spectral analysis, the purely
absolutely continuous spectrum. The two branches then reconverge, together
with the asymptotic scattering geometry, in the unitary radiation
representation:
\[
\begin{tikzcd}[
  column sep=2.35em,
  row sep=2.0em,
  cells={nodes={inner sep=1pt}}
]
&&&& U_\pm(t)
  \arrow[rd]
& {}
\\
E_T
  \arrow[r]
&
q_\pm\!\restriction_{C_0^\infty(0,\infty)}
  \arrow[r]
&
\overline{q_\pm}
  \arrow[r]
&
H_\pm^\Fried
  \arrow[ru,"{\text{Friedrichs}}"]
  \arrow[rd,swap,"{\text{Doob}}"]
&&
\calR_+
  \arrow[r]
&
L^2(\mathbb R_u)
\\
&&&&
\substack{
\sigma(H_\pm^\Fried)
=
\sigma_{\mathrm{ac}}(H_\pm^\Fried)
=[0,\infty)
}
  \arrow[ru]
& {}
\end{tikzcd}
\]
Here $U_\pm(t)$ denotes the unitary wave group on the finite-energy Cauchy-data
space $\HE$, and $\overline{q_\pm}$ is used only schematically for the closure
of the core form; throughout the analysis, the closed form continues to be denoted
by the same symbol $q_\pm$. The upper branch shows that no finite-$T$-energy
information is lost under time evolution. The lower branch shows that none
is hidden in a bound or threshold sector. Their convergence at $\calR_+$
expresses the complete unitary visibility of the finite-energy dynamics at
future null infinity.

The upper branch begins with the Friedrichs realization represented by
the closed form $q_\pm$. Once the action-derived energy form has been closed, no further endpoint realization remains to be selected within the resulting theory. The admissible apex behavior is already encoded in the form domain, and no endpoint law, endpoint energy, or boundary degree of freedom is added to the bulk Einstein--Maxwell theory. The singularity is controlled not by supplementing Einstein--Maxwell but by completing the energy form that the theory itself supplies.

This energy-space construction fixes the status of the Dotti--Gleiser instability modes within the action-selected Einstein--Maxwell theory~\cite{DottiGleiserPullin2007,DottiGleiser2010}. Their algebraically special profiles realize the non-Friedrichs apex branch and have divergent Einstein--Maxwell $T$-energy. They therefore solve the bare master equations only in endpoint classes broader than the physical Hilbert space generated by the Einstein--Maxwell energy. This exclusion is not imposed retrospectively. The same action-derived $T$-energy that generates the dynamics also fixes the admissible state space, and the Dotti branch lies outside its completed form domain. Thus, although the profiles remain formal solutions in broader endpoint classes, no Dotti--Gleiser instability exists in the action-selected finite-$T$-energy dynamics.

The lower branch is supplied by the Doob ground-state transform, which
gives a second exact representation of the same action-derived spatial
energy. A strictly positive zero-energy profile that realizes the Friedrichs behavior at the apex gives
\begin{align*}
\norm{A_\pm u}_{L^2}^2=2q_\pm[u],
\qquad
H_\pm^\Fried=A_\pm^*A_\pm.
\end{align*}
The Doob profile does not select another Hamiltonian. The Hamiltonian has
already been fixed by the closed $T$-energy form; the Doob identity exposes
its positive factorization and rewrites the quadratic form $q_\pm$ as
one-half the squared norm of the corresponding first-order operator. This
is the lower branch of the diagram: the Hamiltonian is already fixed, and
the Doob representation makes its positivity spectrally transparent.
Together with the short-range analysis, it yields purely absolutely
continuous spectrum $[0,\infty)$, with no finite-energy eigenstate or
zero-energy resonance in any fixed radiative channel.

The two branches reconverge in the radiation theory. For each independent
master sector, the absence of point spectrum and of a zero-energy resonance
has a direct physical consequence. The forward radiation map gives the
channelwise isometric radiation representation
\begin{align*}
\calR_+:\HE\longrightarrow L^2(\mathbb{R}_u),
\qquad
E_\pm[\Phi_\pm]=\norm{\calR_+[\Phi_\pm]}_{L^2(\mathbb{R}_u)}^2.
\end{align*}
The independent channel maps reassemble orthogonally on the complete radiative Einstein--Maxwell phase space. Under this representation, bulk evolution becomes translation in retarded time. The zero-flux theorem excludes endpoint exchange, the absence of point and threshold spectrum excludes localized storage, and the radiation isometry excludes a nonradiating finite-energy remainder. Every degree of freedom in the finite-$T$-energy radiative phase space therefore has a complete representation at future null infinity.
The completed bulk data and the assembled radiation profiles are therefore equivalent representations of the same finite-energy radiative state. Read from bulk to radiation, this unitary equivalence excludes information loss: no admissible degree of freedom is absorbed at the apex, retained in a bound or threshold sector, or annihilated by the radiation map. Read in reverse, complete radiation data determine a unique bulk finite-energy state. These are the injective and surjective aspects of the same equivalence; boundary completeness holds because the action-derived theory leaves no hidden finite-energy singularity sector to recover.

The theory does not overcome the singularity by removing it. The superextremal \RN\ background remains a genuine curvature singularity. The construction instead shows that Einstein--Maxwell derives enough physical and mathematical structure to leave the endpoint with no independently prescribed finite-energy dynamics. The action supplies the core energy form; Hardy and tail control prove that it closes; closure determines the configuration-form domain; adjoining the $L^2$ velocity component gives the Cauchy-data energy space; the representation theorem supplies the Friedrichs Hamiltonian; the Doob identity exposes its positive-square structure; and the spectral and radiation theorems leave no trapped or hidden finite-energy sector. The singularity remains geometrically present but is dynamically silent.

This conclusion has a limited yet important implication for cosmic censorship. The present analysis is linear and is performed on a fixed superextremal background; it does not establish formation in gravitational collapse, prove nonlinear stability, or disprove cosmic censorship. Within the sector studied here, however, it shows that
\begin{align*}
\text{naked singularity}
\;\not\Longrightarrow\;
\text{loss of deterministic finite-energy dynamics}.
\end{align*}
The action-selected evolution is unique and unitary; the apex neither absorbs nor emits $T$-energy; and no finite-energy bound state is supported there. The result therefore questions, rather than merely refining, the predictability-based argument associated with cosmic censorship.

The construction proved here is classical. The space $\HE$ is the finite-$T$-energy Cauchy-data space for a single radiative master field, and $\calR_+$ is its classical unitary radiation representation. The complete radiative theory is obtained by the orthogonal assembly of the independently evolving master sectors. The possible use of this structure in a one-particle construction and Fock-space quantization is stated in the conclusion; no quantum theory is proved in the present paper.

\section{Rest frame and optical coordinate}\label{sec:rest-frame}

We begin by fixing the rest frame of the superextremal \RN\ solution and introducing the optical geometry that underlies the later functional analysis. Throughout we use units with $G=c=1$ and signature $(-+++)$. In this language, ray optics provides a concise way to encode how null geodesics probe the geometry; see Ortaggio, Podolský, and Žofka~\cite{OrtaggioPodolskyZofka2009} for background. We work with the superextremal member of the Reissner--Nordström family, viewed as a static Kerr--Schild spacetime describing the nonlinear self-field of a single point charge.

Let $(\mathbb{R}^{1,3},\eta)$ be Minkowski spacetime with Cartesian coordinates $(t,x^1,x^2,x^3)$ and metric
\begin{align*}
\eta_{\mu\nu}\,\dd x^\mu\dd x^\nu
= -\dd t^2+\delta_{ij}\,\dd x^i\dd x^j,\qquad r:=\sqrt{\delta_{ij}x^i x^j}.
\end{align*}
In Kerr--Schild form the physical metric $g$ is written as
\begin{equation}
g_{\mu\nu}
=\eta_{\mu\nu}+2H(r)\,k_\mu k_\nu,
\label{eq:KS-ansatz}
\end{equation}
where $H\colon(0,\infty)\to\mathbb{R}$ is a radial scalar function and $k$ is a null one-form. This covector $k$ is null with respect to both $\eta$ and $g$, and it is geodesic for both Levi--Civita connections~\cite{OrtaggioPodolskyZofka2009}. We choose the null covector
\begin{align*}
k_\mu \,\dd x^\mu := \dd t + \frac{x_i}{r}\,\dd x^i,
\end{align*}
so that
\begin{align*}
\eta^{\mu\nu}k_\mu k_\nu
= -1 + \delta^{ij}\frac{x_i}{r}\frac{x_j}{r}
= -1 + \frac{\delta_{ij}x^i x^j}{r^2}
= 0.
\end{align*}
Thus, $k$ is null and, via $k^\mu := \eta^{\mu\nu}k_\nu$, defines a radial null vector field. Since the Kerr--Schild ansatz is invariant under
$k\mapsto -k$, we take the future-pointing representative.
In these terms the inverse metric takes the form
\begin{align*}
g^{-1} = \eta^{-1} - 2H\,k\otimes k,
\end{align*}
and $k$ is null and geodesic for both $g$ and $\eta$ (a standard Kerr--Schild property). All deviation from flatness is carried by the null square $2H\,k\otimes k$ of this principal congruence.

Solving the Einstein--Maxwell equations in this class, with a static spherically symmetric Maxwell field, uniquely produces the \RN\ family with parameters $(M,Q)$~\cite{OrtaggioPodolskyZofka2009}. In more general Robinson--Trautman--type ansätze one works with a hypersurface-orthogonal null congruence aligned with the Maxwell field~\cite{OrtaggioPodolskyZofka2009}. Here we simply fix the Maxwell 2-form to be the Coulomb field in Minkowski space,
\begin{align*}
F = \frac{Q}{r^{2}}\,\dd t\wedge \dd r,
\end{align*}
so that the coupled equations $G_{\mu\nu}=8\pi T_{\mu\nu}[F]$ are satisfied if and only if
\begin{align*}
H(r)=\frac{M}{r}-\frac{Q^{2}}{2r^{2}},
\end{align*}
with $M$ the ADM mass and $Q$ the total electric charge. This normalization agrees with the distributional Kerr--Schild treatment of the \RN\ family and with the perturbative analysis of the \RN\ branch~\cite{BalasinNachbagauerRN,DottiGleiser2010}. In the Kerr--Schild rest frame, the electromagnetic contribution to the energy--momentum tensor is exactly the tensor-valued distribution supported at $r = 0$ computed in Section~2 of~\cite{BalasinNachbagauerRN}. In this sense, the superextremal \RN\ geometry is the classical self-field of a single point charge.

Passing from Cartesian coordinates $(t,x^i)$ to standard spherical coordinates $(t,r,\theta,\varphi)$ with $x^i = r\,\omega^i(\theta,\varphi)$, the Kerr--Schild ansatz~\eqref{eq:KS-ansatz} becomes
\begin{align*}
\dd s^2
= -\dd t^2+\dd r^2+r^2\dd\Omega^2
+2H(r)\,(\dd t+\dd r)^2,
\end{align*}
so the $t$--$r$ mixing is explicit. To recover the standard static form, we perform an $r$-dependent redefinition of the time coordinate,
\begin{align*}
t_s = t + \psi(r),
\qquad
\psi'(r)=\frac{2H(r)}{-1+2H(r)}.
\end{align*}
Writing
\begin{align*}
f(r):=1-2H(r)=1-\frac{2M}{r}+\frac{Q^2}{r^2},
\end{align*}
we have $-1+2H(r)=-f(r)$, so the denominator in $\psi'(r)$ vanishes exactly at the zeros of $f$. In the superextremal regime $Q^{2}>M^{2}$, the function $f(r)$ has no real zeros, as we show below. Consequently, $\psi$ is smooth on $(0,\infty)$, the time redefinition is globally defined on the static region, and no new coordinate singularities are introduced. In the new coordinates $(t_s,r,\theta,\varphi)$, the line element takes the familiar static, spherically symmetric form
\begin{align}
\dd s^2 &=
 - f(r)\,\dd t_s^2
 + f(r)^{-1}\,\dd r^2
 + r^2\,\dd\Omega^2, \label{eq:RNmetric}\\
f(r) &= 1 - \frac{2M}{r} + \frac{Q^2}{r^2}, \nonumber
\end{align}
where $\dd\Omega^2$ is the round metric on $\mathbb{S}^2$. We henceforth relabel $t_s$ as $t$ and work exclusively in this static coordinate system. The vector field $T:=\partial_t$ is Killing and hypersurface-orthogonal, with integral curves given by the worldlines of static observers.

In these static coordinates,
\begin{align*}
g(T,T)=g_{tt}=-f(r),
\end{align*}
so the sign of $f(r)$ determines the causal character of $T$. Geometrically, $f$ measures the competition between the attractive effect of the mass and the repulsive effect of the charge. Formally, this shows up by factoring out the strictly positive weight $r^{-2}$ and introducing the quadratic polynomial
\begin{align*}
p(r):=r^{2}f(r)=r^{2}-2Mr+Q^{2},\qquad r\in\mathbb{R}.
\end{align*}
For every $r>0$, we have $r^{2}>0$, so $\operatorname{sgn} f(r)=\operatorname{sgn} p(r)$ on $(0,\infty)$. Completing the square gives
\begin{align*}
p(r)
=r^{2}-2Mr+Q^{2}
=(r^{2}-2Mr+M^{2})+(Q^{2}-M^{2})
=(r-M)^{2}+\bigl(Q^{2}-M^{2}\bigr).
\end{align*}
Since $(r-M)^{2}\ge 0$ and $Q^{2}-M^{2}>0$, we find $p(r)>0$ for all $r$, hence $f(r)=p(r)/r^{2}>0$ on $(0,\infty)$. Thus, in the superextremal regime, the static Killing field $T=\partial_{t}$ is everywhere timelike on
$r>0$ and there is no zero of $f$ corresponding to an event horizon; the static coordinates
$(t,r,\theta,\varphi)$ provide the global rest frame used in the master-field reduction
of Section~\ref{sec:master-reduction}. 

On the static slice $\{t=\mathrm{const}\}$, the induced Riemannian metric is
\begin{align*}
\gamma = f(r)^{-1}\,\dd r^2 + r^2\,\dd\Omega^2.
\end{align*}
For any static spacetime of the form~\eqref{eq:RNmetric}, the spatial projections of null geodesics are the geodesics of the optical metric $\gamma_{\mathrm{opt}}:=f(r)^{-1}\gamma$; see Perlick~\cite{Perlick2000} for details. In our case, this becomes
\begin{align}
\gamma_{\mathrm{opt}}
= f(r)^{-2}\,\dd r^2 + f(r)^{-1}r^2\,\dd\Omega^2. \label{eq:optical-metric-r}
\end{align}
The optical metric captures how light rays ``see'' the spatial geometry and is particularly well suited for scattering problems. To exploit this structure we introduce the \emph{optical radius} $x$ by
\begin{align}
\frac{\dd x}{\dd r}=f(r)^{-1}, \label{eq:x-def}
\end{align}
leaving the integration constant unspecified for the moment. Since $f(r)>0$ for all $r>0$ in the superextremal regime, $x(r)$ is smooth and strictly increasing, hence a smooth change of radial coordinate. Different additive constants simply shift the origin of the resulting half-line.\footnote{The asymptotics derived in Appendix~\ref{app:appA} show that $x(r)$ has a finite limit as $r\to0^+$ and diverges as $r\to\infty$.}

It is convenient to normalize $x$ so that the optical distance to the singularity is finite. Since $x(r)$ is strictly increasing, there is a unique additive constant for which
\begin{align*}
  \lim_{r\to0^+}x(r)=0,
\end{align*}
and we fix this choice from now on, writing
\begin{align}
  \frac{\dd x}{\dd r}=f(r)^{-1}, \qquad x(0):=\lim_{r\to0^+}x(r)=0.
\end{align}
The point $x = 0$ is not part of the manifold; it is the limiting endpoint corresponding to the curvature singularity at $r=0$. With this normalization, $r\mapsto x(r)$ is a diffeomorphism $(0,\infty)\to(0,\infty)$, so each static spatial slice is $(0,\infty)_x\times\mathbb{S}^2$, with $x=0$ and $x=\infty$ realized as geometric endpoints rather than actual points of the manifold. In terms of $x$, the optical metric~\eqref{eq:optical-metric-r} takes the spherically symmetric form
\begin{align*}
  \gamma_{\mathrm{opt}}
  = \dd x^2 + f\bigl(r(x)\bigr)^{-1}\,r(x)^2\,\dd\Omega^2,
\end{align*}
with $x$ serving as a radial coordinate on $(0,\infty)_x$.\footnote{Using the asymptotics of $x(r)$ derived in Appendix~\ref{app:appA}, one checks
that the coefficient $f(r(x))^{-1} r(x)^2$ agrees with $x^2$ up to lower-order
terms as $x\to\infty$, so $\gamma_{\mathrm{opt}}$ is an asymptotically Euclidean
scattering metric in the sense of S\'a Barreto~\cite{SaBarreto2003}. Equivalently, rewriting
$\gamma_{\mathrm{opt}}$ in the original radial coordinate $r$ and compactifying
spatial infinity with the defining function $\rho:=r^{-1}$ exhibits the static
slice as an asymptotically Euclidean scattering manifold,
enabling the application of Friedlander's radiation-field construction~\cite{Friedlander1980} via S\'a Barreto's generalization to scattering manifolds.}

The definition~\eqref{eq:x-def} also has a simple dynamical interpretation: it ensures that radial null geodesics have unit coordinate speed in $(t,x)$. Restricting~\eqref{eq:RNmetric} to radial curves ($\dd\Omega^2=0$) and imposing the null condition $\dd s^2=0$ yields
\begin{align}
0=-f(r)\,\dd t^2+f(r)^{-1}\,\dd r^2
\quad\Longrightarrow\quad
\frac{\dd t}{\dd r}=\pm f(r)^{-1}.
\end{align}
Using \eqref{eq:x-def}, $\dd x=f(r)^{-1}\dd r$, so along any radial null geodesic
\begin{align}
\frac{\dd t}{\dd x}=\pm1,
\end{align}
Thus $u:=t-x$ and $v:=t+x$ are constant along outgoing and ingoing
radial null geodesics, respectively. In the $(t,x)$-plane, radial null rays thus appear as straight lines of unit slope, exactly as in $(1+1)$-dimensional Minkowski space. This is the basic reason why the optical coordinate is so well adapted to scattering and radiation problems, and it underlies the wave-equation analysis carried out later (compare the physical-space treatment in~Section~3.4 of \cite{DafermosRodnianski2008} and the general scattering framework of~\cite{Friedlander1980}).

For superextremal \RN, integrating \eqref{eq:x-def} gives the closed-form expression~\cite{ChirentiSaaSkakala2012}
\begin{align}
\label{eq:x-closed-superext}
x(r) &= r + \frac{2M^{2}-Q^{2}}{\sqrt{Q^{2}-M^{2}}}\,
\arctan\!\left(\frac{r-M}{\sqrt{Q^{2}-M^{2}}}\right)
+ M \ln\!\bigl(r^{2}-2Mr+Q^{2}\bigr) + C,
\end{align}
with an arbitrary integration constant $C$. Imposing the normalization $x(0)=0$ fixes $C$ uniquely and identifies the optical origin with the location of the naked singularity at $r=0$ (see Appendix~\ref{ssec:A1} for the derivation of \eqref{eq:x-closed-superext}). A Taylor expansion of \eqref{eq:x-closed-superext} near $r=0$ and as $r\to\infty$ yields
\begin{align}
x(r) &= \frac{r^{3}}{3Q^{2}} + \Order{r^{4}} \qquad (r\to 0^+), \label{eq:x-apex-rto0}\\
x(r) &= r + 2M\ln r + \Order{1} \qquad (r\to \infty), \label{eq:x-apex-rtoinf}
\end{align}
so the naked singularity sits at finite optical distance while spatial infinity
is logarithmically stretched~\cite{DottiGleiser2010} (see
Appendix~\ref{ssec:A2} and~\ref{ssec:A3} for the derivation of \eqref{eq:x-apex-rto0} and \eqref{eq:x-apex-rtoinf}).

In this picture, the Coulomb divergence becomes a statement about the geometry of the optical half-line: with respect to the physical static slices, the Coulomb self-energy still diverges at $r=0$. What changes is that the strong warping of the optical volume element makes the corresponding $T$-energy density locally integrable as $x\to0$, so we work directly on $(0,\infty)_x$ and then complete with respect to this energy. In practice, this means that the geometric endpoints $x=0$ and $x=\infty$ become the endpoints of the closed $T$-energy form, rather than places where extra counterterms or boundary data have to be imposed.

The optical geometry also suggests a natural choice of function spaces. For each spherical harmonic mode
we regard the one-dimensional configuration space as the Hilbert space $\Ltwo$ of
square-integrable complex-valued functions of the optical radius $x$. The dense subspace
$C_0^\infty(0,\infty)\subset\Ltwo$ has a direct geometric interpretation: it consists of smooth
perturbations whose support stays away from both the naked singularity ($x\to0$) and spatial infinity
($x\to\infty$), so that the associated Einstein--Maxwell $T$-energy is automatically finite on each
static slice. In the analysis below we first write the spatial part of the $T$-energy as a quadratic
form on $C_0^\infty(0,\infty)$ and then extend it by closure. The choice of $\Ltwo$ as the ambient Hilbert space and $C_0^\infty(0,\infty)$ as the core thus reflects the optical half-line geometry and the use of
compactly supported variations in the action, rather than an \emph{a priori} boundary condition at $x=0$ or
$x=\infty$ or an independent endpoint regularity criterion. Finiteness of the action-derived energy remains the physical admissibility requirement; closure determines its completed domain. We make this precise in Sections~\ref{sec:hardy}--\ref{sec:friedrichs} when constructing the Friedrichs extension.

\section{Master field reduction}\label{sec:master-reduction}

On the superextremal \RN\ background $(g,F)$ constructed in Section~\ref{sec:rest-frame}, with metric~\eqref{eq:RNmetric}, we linearize the coupled Einstein--Maxwell system and recast the perturbations in terms of scalar master fields on the optical half-line. As in the Regge--Wheeler--Zerilli and Kodama--Ishibashi constructions, the full tensorial perturbation problem is organized into a finite set of gauge-invariant amplitudes, each of which obeys a $(1+1)$-dimensional wave equation with an effective potential. For the classical axial/polar perturbation framework and Schr\"odinger-type master-equation language, see Chandrasekhar~\cite{Chandrasekhar1983}; the modern gauge-invariant formalism used here follows Kodama and Ishibashi~\cite{IshibashiKodama2011PTPS}. In our setting, this reduction has a particularly clean interpretation: once expressed in optical coordinates, the self-field dynamics of the naked singularity becomes a family of Schr\"odinger-type problems for these master fields, with the geometry of the optical half-line directly reflected in the structure of the effective potentials.

Let $(\delta g,\delta F)$ denote smooth, compactly supported perturbations of the metric and Maxwell field on the static region $\{t\in\mathbb{R},\,r>0\}$. By spherical symmetry we decompose $(\delta g,\delta F)$ into scalar (polar) and vector (axial) tensor harmonics on $\mathbb{S}^2$, yielding, for each $(\ell,m)$, a finite set of scalar coefficient functions of $(t,r)$.
The linearized Einstein--Maxwell system is invariant under infinitesimal diffeomorphisms and Maxwell gauge transformations, generated by pairs $(\xi,\lambda)$ with $\xi$ a vector field on the background spacetime and $\lambda$ a Maxwell gauge one-form, so not all coefficients are physical. For each parity sector and each $(\ell,m)$, Kodama and Ishibashi construct explicit gauge-invariant combinations of the harmonic coefficients of $(\delta g,\delta F)$. These variables are fixed under the transformation
\begin{align}
(\delta g,\delta F) \mapsto (\delta g + \mathcal{L}_\xi g,\ \delta F + \mathcal{L}_\xi F + \dd\lambda).
\end{align}
They encode the physical content of the perturbations and already diagonalize the quadratic action: each mode evolves independently.\footnote{We follow the Kodama--Ishibashi conventions for the scalar (polar) and vector (axial) harmonic decompositions; see Sections~2--3 of \cite{IshibashiKodama2011PTPS} for details of the harmonics, their eigenvalues, and the corresponding gauge transformations.}
For each fixed multipole index $\ell\in\mathbb{N}_0$ and azimuthal number $m$, the perturbation splits into a \emph{vector-type} (axial) sector and a \emph{scalar-type} (polar) sector, each invariant under the linearized Einstein--Maxwell operator. The decomposition is orthogonal with respect to the background inner product induced by $g$ and the $\mathbb{S}^2$ measure, so different $(\ell,m)$ and different parity sectors decouple.

We denote by $\Phi_{V\pm,\ell m}(t,r)$ and $\Phi_{S\pm,\ell m}(t,r)$ the gauge-invariant combinations for the vector (axial) and scalar (polar) sectors, where the signs $\pm$ label the two coupled Einstein--Maxwell degrees of freedom in each parity. (We henceforth label the two independent physical master fields as ``$+$'' and ``$-$'', not to be confused with the parity labels $V$ and $S$, which distinguish the axial and polar sectors.)
In the channelwise calculations below, the fixed parity, master-branch,
and angular labels are suppressed whenever no confusion can arise; the
corresponding quantities are denoted by $\phi_{\pm,\ell m}$, $V_\pm$,
$q_\pm$, and $H_\pm^\Fried$.
Substituting the gauge-invariant variables into the linearized Einstein--Maxwell equations shows that, in each parity and master sector, all perturbations are captured by a single scalar master amplitude. In the usual tortoise coordinate $r_*$, these amplitudes satisfy Regge--Wheeler--type wave equations with effective potentials $V_{V\pm}(r)$ and $V_{S\pm}(r)$ depending only on the background function $f(r)$, the charge $Q$, and the multipole index $\ell$. The derivation is purely local on the $(t,r)$ orbit space and uses only that $f(r)>0$, so that $t$ is timelike and $r$ is a good radial coordinate. Neither the zeros of $f$ nor the presence of horizons enters the derivation. This permits analytic continuation to the superextremal regime. The resulting master equations take the form\footnote{Here $K\in\{1,0,-1\}$ encodes the curvature of the constant-$r$ symmetry orbits and $\Lambda$ is the cosmological constant; see Sections~2, 5, and 6 of \cite{IshibashiKodama2011PTPS}.}
\begin{align}
\partial_t^2\Phi_{V\pm,\ell m}-\partial_{r_*}^2\Phi_{V\pm,\ell m}+V_{V\pm}(r)\,\Phi_{V\pm,\ell m}&=0,\label{eq:master-V-rstar}\\[0.25em]
\partial_t^2\Phi_{S\pm,\ell m}-\partial_{r_*}^2\Phi_{S\pm,\ell m}+V_{S\pm}(r)\,\Phi_{S\pm,\ell m}&=0.\label{eq:master-S-rstar}
\end{align}
The explicit expressions for $V_{V\pm}$ and $V_{S\pm}$ are given in Equations~(5.23) and (6.23) of Ishibashi and Kodama~\cite{IshibashiKodama2011PTPS}. These potentials are smooth on $(0,\infty)$ and are rational functions of $r$ whose coefficients depend polynomially on $M$, $Q$, and $\ell$. The dependence on $\ell$ enters only through $k^2=\ell(\ell+1)$ and its shifted versions; see Appendix~\ref{app:appC}. In particular, for fixed $(M,Q)$ the potentials extend analytically to the superextremal regime $|Q|>M$, since their derivation uses only algebraic manipulations of $f(r)$ and its derivatives and does not rely on zeros of $f$ or the presence of a horizon.

In our superextremal \RN\ self-field problem, we use the optical radius $x$ introduced in~\eqref{eq:x-def}, which satisfies $\dd x/\dd r = f(r)^{-1}$ and hence coincides, up to an additive constant, with the tortoise coordinate $r_*$. In the superextremal regime $|Q|>M$, we have $f(r)>0$ for all $r>0$, and the integral defining $x$ converges at $r=0$. Thus, $x\colon(0,\infty)\to(0,\infty)$ is a smooth, strictly increasing diffeomorphism. After fixing the additive constant, we may regard $x$ as a global tortoise coordinate on the static region. For each $(\ell,m)$, we define optical master amplitudes by pulling back along $r\mapsto x$:
\begin{align*}
\phi_{+,\ell m}(t,x) := \Phi_{V+,\ell m}(t,r(x)),
\qquad
\phi_{-,\ell m}(t,x) := \Phi_{S-,\ell m}(t,r(x)),
\end{align*}
and similarly for the remaining two sector/branch combinations.
In terms of these fields the master equations \eqref{eq:master-V-rstar}--\eqref{eq:master-S-rstar} take the $(1+1)$-dimensional form
\begin{align}
\partial_t^2\phi_{\pm,\ell m}(t,x)
-\partial_x^2\phi_{\pm,\ell m}(t,x)
+V_\pm(x)\phi_{\pm,\ell m}(t,x)
=0,
\qquad
(t,x)\in\mathbb{R}\times(0,\infty).
\label{eq:master-wave}
\end{align}
with optical potentials
\begin{align}
V_\pm(x):=V_{\mathrm{sector},\pm}(r(x)).
\end{align}
In the channelwise calculations below, the fixed parity and angular labels are suppressed when no confusion can arise.

The master fields $\phi_{\pm,\ell m}$ are gauge-invariant by construction.\footnote{In this section, we consider smooth perturbations $(\delta g,\delta F)$
that are compactly supported on $\mathcal M$. The associated master fields
$\phi_{\pm,\ell m}$ are then smooth and compactly supported in $(t,x)$, so
$\phi_{\pm,\ell m}(t,\cdot)\in C_0^\infty(0,\infty)\subset H_0^1(0,\infty)\subset L^2(0,\infty)$ for each fixed $t$. Later we denote by
$\HE:=H_0^1(0,\infty)\times L^2(0,\infty)$
the completion of $C_0^\infty(0,\infty)\times C_0^\infty(0,\infty)$ in the norm induced by the
Einstein--Maxwell $T$-energy; belonging to $\HE$ is nothing more than
having finite $T$-energy in the sense of the quadratic form defined below. At
this stage we use only the inclusions
$C_0^\infty(0,\infty)\subset H_0^1(0,\infty)\subset L^2(0,\infty)$; no
self-adjoint realization of $-\partial_x^2+V_\pm$ or boundary condition at $x=0$ or $x=\infty$ is imposed.} The genuinely
radiative degrees of freedom are encoded in the gauge-invariant amplitudes
$\phi_{\pm,\ell m}$ with $\ell\ge2$, and these are the modes we focus on in the
remainder of the paper. The low-multipole sectors have a simple geometric meaning. For $\ell=0$
there is no nontrivial vector (axial) harmonic. The single scalar-type
gauge invariant reduces to a linearized shift of the ADM mass and
total charge $(M,Q)$. Such perturbations move within the two-parameter \RN\ family rather than exciting radiative degrees of freedom. For $\ell=1$, the vector (axial) sector is pure gauge. (The axial gauge-invariant variable vanishes once the constraints are imposed.) The scalar (polar) sector describes dipole perturbations of the Maxwell field and metric that can be
absorbed into a redefinition of the center-of-mass worldline and of the global
Maxwell gauge. Thus, after quotienting by the $\ell=0,1$ gauge and parameter variations, only the $\ell\ge2$ modes carry genuine radiative degrees of freedom.\footnote{For a systematic treatment of the low-multipole and exceptional modes in the Kodama--Ishibashi formalism, including the scalar $\ell=0,1$ sectors and their gauge structure, see Section~3 and Section~6.1 of~\cite{IshibashiKodama2011PTPS}.} In the superextremal, horizonless setting we work in the fixed-background single-source sector introduced above. We fix the background parameters $(M,Q)$ and quotient out the $\ell=0,1$ gauge and parameter variations, thereby removing the gauge and parameter directions from the radiative sector. The remaining $\ell\ge2$ modes are gauge-invariant master fields $\phi_{\pm,\ell m}$ on the optical half-line.

The master-field reduction is more than a gauge convenience: it gives an isometric identification of the physical degrees of freedom. The Kodama--Ishibashi construction, together with the quadratic action and
$T$-energy reduction carried out below, shows that, after quotienting by the linearized
diffeomorphism and Maxwell gauge freedoms and fixing the background parameters $(M,Q)$, the physical
linearized Einstein--Maxwell phase space around \RN\ is canonically isomorphic to the direct sum of
the master-field phase spaces, with the Einstein--Maxwell $T$-energy inducing exactly the quadratic
form on each mode. The map
\begin{align*}
  (\delta g,\delta F)\big/\text{gauge}
  \;\longleftrightarrow\;
  \{\phi_{\pm,\ell m},\partial_t\phi_{\pm,\ell m}\}_{\ell\ge2,\,m}
\end{align*}
is an isometry for the $T$-energy norm, so the master-wave dynamics on the optical half-line \emph{is exactly} the dynamics of the gauge-free linearized self-field. Viewed this way, the usual story is reversed. Rather than first solving the gauge-free Einstein--Maxwell equations and then fixing a gauge to obtain master waves, we take the gauge-invariant master fields as the primary degrees of freedom and reconstruct $(\delta g,\delta F)$ from them.

We now identify the conserved energy associated with the static Killing field $T=\partial_t$ directly from the Einstein--Maxwell action and express it in terms of the master variables $\phi_{\pm,\ell m}$. This provides the precise link between the geometric $T$-energy of the full perturbation $(\delta g,\delta F)$ and the quadratic form of the Schr\"odinger operator $-\partial_x^2+V_\pm$ on the optical half-line. The derivation is entirely local on $\mathcal M$ and uses only smooth,
compactly supported perturbations: at this stage we neither specify nor assume
any domain for the formal operator $-\partial_x^2+V_\pm$, and no boundary
condition at the optical endpoints is built into the construction.

The starting point is the Einstein--Maxwell action,
\begin{align*}
S[g,F]
=\frac{1}{16\pi}\int_{\mathcal{M}}\!\bigl(R(g)-F_{\mu\nu}F^{\mu\nu}\bigr)\sqrt{-g}\,\dd^4x,
\end{align*}
where $R(g)$ is the Ricci scalar of the spacetime metric $g$ and $F$ is the Maxwell 2-form. Let $(g,F)$ be the exact superextremal \RN\ background, and consider a one-parameter family of smooth fields
\begin{align}
g_\varepsilon=g+\varepsilon\,h+\Order{\varepsilon^{2}},\qquad
F_\varepsilon=F+\varepsilon\,\delta F+\Order{\varepsilon^{2}},
\end{align}
with $(h,\delta F)$ smooth and compactly supported in $\mathcal M$. Expanding $S[g_\varepsilon,F_\varepsilon]$ to second order in $\varepsilon$ gives $S[g,F] + \varepsilon S^{(1)}[h,\delta F] + \tfrac12\varepsilon^2 S^{(2)}[h,\delta F] + \dots$. Since $(g,F)$ solves the field equations, $S^{(1)}=0$, and the linearized Einstein--Maxwell equations are the Euler--Lagrange equations of the quadratic functional $S^{(2)}$. Writing $S^{(2)}=\int\mathcal{L}^{(2)}\sqrt{-g}\,\dd^4x$, we obtain a quadratic Lagrangian density $\mathcal{L}^{(2)}[h,\delta F]$ on the fixed background.

For any background Killing vector field $X$, applying Noether's theorem at the quadratic level yields a symmetric, bilinear stress tensor $T^{(2)}_{\mu\nu}[h,\delta F]$ and an associated conserved current
\begin{align}
J_\mu^X[h,\delta F]
:=T^{(2)}_{\mu\nu}[h,\delta F]\;X^\nu,
\qquad
\nabla^\mu J^X_\mu[h,\delta F]=0,
\label{eq:JT-def}
\end{align}
where $\nabla$ is the Levi--Civita connection of the background metric $g$. The current $J^X$ is quadratic in $(h,\delta F)$, and the conservation law follows from the Killing property of $X$ together with the linearized equations of motion.

In our static \RN\ background we take $X=T=\partial_t$. For a static slice $\Sigma_t:=\{t=\mathrm{const}\}\subset\mathcal M$ with future-directed unit normal $n^\mu$ and induced volume form $\dd\Sigma$, the associated $T$-energy on $\Sigma_t$ is
\begin{align}
E_{\mathrm{EM}}[h,\delta F](t)
:=\int_{\Sigma_t} J^T_\mu[h,\delta F]\;n^\mu\,\dd\Sigma.
\end{align}
By the divergence theorem and the conservation of $J^T$ this quantity is
independent of $t$, provided the fluxes through the lateral boundaries vanish. In the
superextremal case there is no horizon, so for compactly supported perturbations
the only possible boundaries are the optical endpoint $x=0$ and null infinity; for such perturbations the flux
through both boundaries vanishes by support.

Inserting the harmonic decomposition and the gauge-invariant master variables into $S^{(2)}$ and integrating over the sphere yields a sum over decoupled radiative channels. Suppressing the parity label in the mode notation, one has
\begin{align}
S^{(2)}
= \frac12\sum_{\pm,\ell\ge2,m}\int_{\mathbb{R}}\!L_{\pm,\ell m}(t)\,\dd t,
\end{align}
where the sum includes both parity sectors, all $\ell\ge2$, and the two Einstein--Maxwell polarizations in each parity. The mode Lagrangian is given by
\begin{align}
L_{\pm,\ell m}(t)
= \frac{1}{2}\int_0^\infty\!
\Big(
\abs{\partial_t\phi_{\pm,\ell m}(t,x)}^2
-\abs{\partial_x\phi_{\pm,\ell m}(t,x)}^2
-V_\pm(x)\,\abs{\phi_{\pm,\ell m}(t,x)}^2
\Big)\,\dd x.
\label{eq:L-mode}
\end{align}
The locality of the Kodama--Ishibashi transformation on the $(t,r)$ orbit space, together with the orthogonality of the spherical harmonics, implies that for each fixed $(\pm,\ell,m)$ the mode Lagrangian $L_{\pm,\ell m}$ depends only on the master amplitude $\phi_{\pm,\ell m}(t,x)$ and its first derivatives. The orthogonality relations for the scalar and vector harmonics ensure that different $(\ell,m)$ modes decouple. After discarding total derivatives in $t$ and $x$, the requirement that the Euler--Lagrange equation reproduce the master equation \eqref{eq:master-wave} and that the kinetic term be positive fixes the relative coefficients uniquely.\footnote{Compare the discussion of the $T$-energy for scalar waves on black-hole backgrounds in Dafermos--Rodnianski. The gauge-invariant variables used here coincide with those of Ishibashi--Kodama for four-dimensional Einstein--Maxwell backgrounds; the reduction here is in exact analogy with the Regge--Wheeler--Zerilli reduction.}

The same harmonic decomposition can be applied to the quadratic Einstein--Maxwell stress tensor $T^{(2)}_{\mu\nu}[h,\delta F]$ and to the associated $T$--current $J^T_\mu[h,\delta F]$ defined in \eqref{eq:JT-def}. Since $T^{(2)}_{\mu\nu}$ is quadratic in $(h,\delta F)$ and the transformation from the harmonic coefficients of $(h,\delta F)$ to the master variables $\phi_{\pm,\ell m}$ is linear and invertible on the radiative subspace for each fixed $(\ell,m)$, the flux of $J^T$ through a static slice $\Sigma_t$ decomposes orthogonally as a sum over $(\pm,\ell,m)$. Inserting the inverse Kodama--Ishibashi reconstruction of $(h,\delta F)$ from the master fields into $J^T_\mu n^\mu$, integrating over $\mathbb{S}^2$, and using the orthogonality relations for the scalar and vector harmonics, one obtains
\begin{align*}
E_{\mathrm{EM}}[h,\delta F](t)
=\int_{\Sigma_t}J^T_\mu[h,\delta F]\;n^\mu\,\dd\Sigma
= \sum_{\pm,\ell\ge2,m} E_{\pm,\ell m}(t),
\end{align*}
where $E_{\pm,\ell m}(t)$ is exactly the canonical Hamiltonian associated with the mode Lagrangian \eqref{eq:L-mode}. Thus, the geometric Einstein--Maxwell $T$-energy decomposes \emph{exactly} into a sum of canonical kinetic terms and Schr\"odinger-type spatial energies for the master operators $-\partial_x^2+V_\pm$ on the optical half-line.

For the canonical Hamiltonian calculation, we may treat each mode $\phi_{\pm,\ell m}$ as real-valued: the reality of the full metric and Maxwell perturbations is ensured by the standard pairing of $(\ell,m)$ and $(\ell,-m)$ modes in the spherical-harmonic decomposition. For each fixed $(\pm,\ell,m)$, the canonical momentum is $\pi_{\pm,\ell m}(t,x) = \partial_t\phi_{\pm,\ell m}(t,x)$, and the associated Hamiltonian is
\begin{align}
E_{\pm,\ell m}(t)
&:=\int_0^\infty\!
\Bigl(
\pi_{\pm,\ell m}(t,x)\,\partial_t\overline{\phi_{\pm,\ell m}(t,x)}
-\mathcal{L}_{\pm,\ell m}(t,x)
\Bigr)\,\dd x\nonumber\\
&= \frac{1}{2}\int_0^\infty\!
\Big(
\abs{\partial_t\phi_{\pm,\ell m}(t,x)}^2
+\abs{\partial_x\phi_{\pm,\ell m}(t,x)}^2
+V_\pm(x)\,\abs{\phi_{\pm,\ell m}(t,x)}^2
\Big)\,\dd x.
\label{eq:E-mode}
\end{align}

Differentiating $E_{\pm,\ell m}(t)$ in time and integrating by parts in $x$ shows that the bulk terms cancel once the equation of motion is used. For $\phi_{\pm,\ell m}\in C^\infty(\mathbb{R}_t;C_0^\infty(0,\infty))$, a calculation gives
\begin{align*}
\frac{\dd}{\dd t}E_{\pm,\ell m}(t)
&= \Re\int_0^\infty\!\Bigl(
\partial_t^2\phi_{\pm,\ell m}\,\overline{\partial_t\phi_{\pm,\ell m}}
+\partial_x\partial_t\phi_{\pm,\ell m}\,\overline{\partial_x\phi_{\pm,\ell m}}
+V_\pm\,\partial_t\phi_{\pm,\ell m}\,\overline{\phi_{\pm,\ell m}}
\Bigr)\,\dd x.
\end{align*}
Integrating the second term by parts in $x$ and using the master equation \eqref{eq:master-wave} to substitute
$\partial_t^2\phi_{\pm,\ell m}
=\partial_x^2\phi_{\pm,\ell m}-V_\pm\phi_{\pm,\ell m}$ shows that the potential terms cancel, and the remaining terms combine to give zero. Thus, for smooth, compactly supported solutions $\phi_{\pm,\ell m}$, the boundary terms vanish and we find $\frac{\dd}{\dd t}E_{\pm,\ell m}(t)=0$.\footnote{The analysis is carried out for smooth, compactly supported
solutions, for which all integrations by parts are justified and the boundary
terms at $x=0$ and $x=\infty$ vanish identically by support. At this stage $q_\pm$ is defined on $C_0^\infty(0,\infty)$; its closed extension and the associated Friedrichs operator will be constructed in Sections~\ref{sec:hardy}--\ref{sec:friedrichs}. Once that is done, the same
energy identity extends by density and unitarity to all solutions in the
corresponding energy space.} On a static slice $t=t_0$ one can then separate the ``kinetic'' and ``spatial'' contributions to $E_{\pm,\ell m}$. Writing $u(x):=\phi_{\pm,\ell m}(t_0,x)$ and $\dot u(x):=\partial_t\phi_{\pm,\ell m}(t_0,x)$, we obtain from~\eqref{eq:E-mode} that
\begin{align}
E_{\pm,\ell m}(t_0)
= \frac12\norm{\dot u}_{\Ltwo}^2+q_\pm[u],
\end{align}
where the quadratic form
\begin{align}
q_\pm[u]
:=\frac12\int_0^\infty\!\bigl(\abs{u'(x)}^2+V_\pm(x)\,\abs{u(x)}^2\bigr)\,\dd x,
\qquad
u\in C_0^\infty(0,\infty),
\label{eq:qpm-def}
\end{align}
is the spatial form entering the $T$-energy.\footnote{At this stage $q_\pm$ is defined only on the test space
$C_0^\infty(0,\infty)\subset L^2(0,\infty)$, and we do not yet assume that
it is closed or that $-\partial_x^2+V_\pm$ has a preferred self-adjoint
realization. In Sections~\ref{sec:hardy} and~\ref{sec:friedrichs}, together
with Appendices~\ref{app:appB}--\ref{app:appC}, we use the Hardy inequality
on $(0,\infty)$ and the potential hypotheses \eqref{eq:V-core}--\eqref{eq:short-range}
to show that $q_\pm$ is semibounded and closable on $\Ltwo$, that its closed extension
has domain $\Honezero$, and that the associated Friedrichs operator $H_\pm^\Fried$
is the canonical self-adjoint realization of $-\partial_x^2+V_\pm$ selected
by the Einstein--Maxwell $T$-energy, without imposing any additional boundary
condition at $x=0$ or $x=\infty$.}

This is consistent with the independent Hamiltonian reduction and intertwiner construction of Dotti and Gleiser, who obtain the same master operators in their treatment of the interior and superextremal regions of \RN\ spacetime; see, in particular, their reduction of the linearized field equations to $(1+1)$-dimensional wave equations with Schr\"odinger-type spatial operators. In our setting the static Coulomb field and its nonlinear gravitational backreaction are fully incorporated into the exact background, and the master variables $\phi_{\pm,\ell m}$ describe perturbations of this
self-field rather than test fields on a fixed geometry.

The full perturbative problem has therefore been reduced, without an independently imposed endpoint law, to determining whether the action-derived forms $q_\pm$ close on the optical half-line.

\section{Potential asymptotics and Hardy closure}\label{sec:hardy}

Throughout this section, the same
symbol $q_\pm$ denotes first the form on
$C_0^\infty(0,\infty)$ and subsequently its unique closed extension.
The stated domain identifies which stage of the construction is meant.

In this section, we prove that the core quadratic forms $q_\pm$ are closable and that their closed extensions are semibounded forms on $H_0^1(0,\infty)$. The point is that the effective potentials $V_\pm(x)$ in the master equations~\eqref{eq:master-wave} have just enough regularity and decay to be controlled by the sharp Hardy inequality on the half-line $(0,\infty)$.

Two regions govern the analysis: the optical apex and spatial infinity. Near the optical apex $x=0$, both $V_\pm$ look like inverse-square cores of Hardy type:
\begin{align*}
  V_\pm(x)\simeq C_\pm\,x^{-2},\qquad x\downarrow0,
\end{align*}
with $C_+<0<C_-$ and both $C_\pm$ strictly inside the Hardy window $(-1/4,\infty)$ (in fact $C_\pm\in(-1/4,3/4)$). Intuitively, the plus channel $(+)$ sees a shallow attractive $1/x^2$ well and the minus channel $(-)$ a repulsive $1/x^2$ barrier, but in both cases the singularity is too weak to upset the basic $H_0^1$ control coming from the Dirichlet energy. Toward spatial infinity, the potentials look like the familiar centrifugal barrier
$
  \frac{\ell(\ell+1)}{x^2}
$
plus a short-range tail $W_\pm(x)$ that lies in $L^1(1,\infty)$ and is therefore too small to spoil the Hardy control. 

We first promote these two features to abstract hypotheses on $V_\pm$, use them to prove that each $q_\pm$ is closable and semibounded, and then check---using the explicit formulas of Ishibashi--Kodama and Dotti--Gleiser together with the optical asymptotics in Appendix~\ref{app:appC}---that the hypotheses are indeed satisfied by the Einstein--Maxwell self-field on the superextremal \RN\ background.

\subsection{Form-boundedness at the apex}

First, near the optical apex $x=0$ the potentials have inverse-square cores of Hardy type:
\begin{equation}
\begin{aligned}
V_+(x)
&=
-\frac{2}{9}x^{-2}+o(x^{-2}),
&
V_-(x)
&=
\frac{4}{9}x^{-2}+o(x^{-2}).
\end{aligned}
\label{eq:V-core}
\end{equation}
so that, for each sign, there exists a constant $C_\pm\in(-\tfrac14,\tfrac34)$ and a function $r_\pm$ with $V_\pm(x)=C_\pm x^{-2}+r_\pm(x)$ and $\lim_{x\downarrow0}x^{2}r_\pm(x)=0$. These leading coefficients $-2/9$ and $4/9$ are consistent with the superextremal \RN\ master potentials derived in Ishibashi and Kodama~\cite{IshibashiKodama2011PTPS} and Dotti and Gleiser~\cite{DottiGleiser2010} for electromagnetic and gravitational multipoles after passing to the optical coordinate; see equations (6.32)--(6.33) of Ishibashi and Kodama~\cite{IshibashiKodama2011PTPS} for the parameter relations and Appendix~\ref{app:appB} for the detailed computation of the asymptotics. It is notationally convenient to parametrize the core coefficient by
\begin{align}
C_\pm=\nu_\pm^2-\tfrac14,\qquad \nu_\pm\in(0,1),
\label{eq:nu}
\end{align}
so that $V_\pm(x)=(\nu_\pm^2-\tfrac14)x^{-2}+o(x^{-2})$ as $x\downarrow0$. In particular, $C_\pm>-\tfrac14$ lies strictly inside the one-dimensional Hardy window in the sense of~Brézis and Marcus~\cite{BrezisMarc} (see Appendix~\ref{app:appB} for the full derivation of \eqref{eq:V-core} and \eqref{eq:nu}).
For the \RN\ self-field, the abstract constants become very concrete: in Appendix~\ref{app:appB}, we show that
\begin{align*}
  C_+ = -\frac{2}{9},\qquad C_- = \frac{4}{9},
\end{align*}
so that
\begin{align*}
  \nu_+ = \frac{1}{6},\qquad \nu_- = \frac{5}{6}.
\end{align*}
Thus, the $+$-channel contains a genuinely attractive inverse-square core, while the $-$-channel contains a repulsive core. The decisive identity is
\begin{align*}
-\frac29=-\frac14+\frac1{36}.
\end{align*}
The naked singularity therefore generates the local obstruction and its exact test simultaneously: the optical geometry fixes the attractive coefficient $-2/9$, while the critical inverse-square threshold $-1/4$ determines whether that coefficient destroys the energy principle. The positive margin $1/36$ proves that the attraction cannot overcome the Dirichlet contribution to the Einstein--Maxwell $T$-energy. Consequently, the singular form is semibounded and closable, so the action-derived finite-energy completion exists. In this precise sense, the singularity generates its own test and passes it.

The core form $q_\pm$ introduced in \eqref{eq:qpm-def} is defined on $C_0^\infty(0,\infty)$. We now show that assumptions \eqref{eq:V-core} and \eqref{eq:short-range} are exactly what is needed to close it as a semibounded quadratic form on $\Honezero$ and to identify the associated Friedrichs realization as the self-adjoint generator of the self-field evolution. The key input is the sharp Hardy inequality on the half-line. This is the standard quadratic-form construction of the Friedrichs extension for semibounded Schr\"odinger operators on the half-line; see, for example, the general discussion in~Section~2.3 of \cite{Teschl}.

On the half-line, the sharp Dirichlet Hardy inequality states that
\begin{align}
\int_0^\infty \!\abs{u'(x)}^2\,\dd x
\ \ge\ \frac14\!\int_0^\infty \frac{\abs{u(x)}^2}{x^2}\,\dd x,
\qquad u\in H_0^1(0,\infty).
\label{eq:hardy-main}
\end{align}
This is the half-line version of the sharp inequality on bounded intervals given in Equation~(0.1) of \cite{BrezisMarc}, obtained by exhaustion and scaling.
Equation~\eqref{eq:hardy-main} says that a function in $H_0^1(0,\infty)$ cannot pile up too much mass near the apex without paying a fixed derivative cost: the weight $x^{-2}$ measures how sharply $u$ concentrates as $x\downarrow0$. The Hardy term acts as a universal ``centrifugal barrier'' already present in the free Dirichlet energy on the half-line.

A simple way to see the borderline behavior is to test \eqref{eq:hardy-main} on the model functions
\begin{align*}
  u_\varepsilon(x) = x^{\frac12+\varepsilon}\chi(x),
\end{align*}
where $\chi$ is a smooth cutoff equal to $1$ near $0$ and compactly supported. As $\varepsilon\downarrow0$ the ratio of the right-hand side of \eqref{eq:hardy-main} to the left-hand side tends to $1/4$, illustrating both the sharpness of the Hardy constant and the special role of the exponent $1/2$ that will reappear in the analysis of the inverse-square cores. For a rigorous discussion of sharpness and non-attainment of the constant $1/4$ on $(0,1)$, together with logarithmic refinements of Hardy's inequality, see~Appendix~A of \cite{BrezisMarc}, in particular Lemmas~A.1--A.2 and Corollary~A.4.
Combining \eqref{eq:hardy-main} with the core decomposition $V_\pm(x)=C_\pm x^{-2}+r_\pm(x)$, we obtain for any $u\in C_0^\infty(0,\infty)$
\begin{align}
\int_0^\infty\!\Bigl(\abs{u'(x)}^2+C_\pm x^{-2}\abs{u(x)}^2\Bigr)\,\dd x
&\ge\Bigl(\tfrac14+C_\pm\Bigr)\int_0^\infty x^{-2}\abs{u(x)}^2\,\dd x\nonumber\\
&\ge\nu_\pm^{2}\!\int_0^\infty x^{-2}\abs{u(x)}^2\,\dd x
\ \ge\ 0,
\label{eq:core-lower}
\end{align}
since $\tfrac14+C_\pm=\nu_\pm^2>0$ by \eqref{eq:nu}. This is the half-line analog of the boundary-singularity Hardy inequalities on bounded domains studied in~Section~0 of \cite{BrezisMarc}, where the same critical constant $1/4$ controls the behavior near the boundary. In particular, the inverse-square cores with $C_\pm>-\tfrac14$ are not destabilizing: they contribute a nonnegative piece that is dominated by the Hardy weight $x^{-2}$. From the energy point of view, \eqref{eq:hardy-main} is exactly what prevents the Coulomb-type core at the optical apex from blowing up the $T$-energy: we never assume that $u$ has any special behavior at $x=0$ or $x=\infty$; we merely require $u\in C_0^\infty(0,\infty)$ and estimate the form. The Hardy control and the short-range tail then show that the quadratic form extends to a closed, semibounded form on $L^2(0,\infty)$ with domain $H_0^1(0,\infty)$, so the admissible endpoint behavior is completely encoded in the closure of the energy rather than in an imposed boundary condition.

By construction, $r_\pm$ is less singular than $x^{-2}$ at the apex. Thus, there exist $x_0\in(0,1)$ and $M_\pm>0$ such that
\begin{align}
\abs{r_\pm(x)}\le M_\pm\quad\text{for }x\in(0,x_0).
\end{align}
Applying the elementary bound
\begin{align}
\int_0^{x_0}\!\abs{u(x)}^2\,\dd x
\le x_0^2\!\int_0^{x_0}\frac{\abs{u(x)}^2}{x^2}\,\dd x
\le 4x_0^2\!\int_0^{x_0}\abs{u'(x)}^2\,\dd x,
\end{align}
which follows directly from \eqref{eq:hardy-main}, we obtain
\begin{align}
\bigg|\int_0^{x_0} r_\pm(x)\,\abs{u(x)}^2\,\dd x\bigg|
\le 4M_\pm x_0^2\!\int_0^{x_0}\abs{u'(x)}^2\,\dd x.
\label{eq:core-remainder}
\end{align}
Given any $\varepsilon>0$, we can choose $x_0$ small enough that $4M_\pm x_0^2<\varepsilon$,
and hence
\begin{align}
\bigg|\int_0^{x_0} r_\pm(x)\,\abs{u(x)}^2\,\dd x\bigg|
\le \varepsilon\!\int_0^{x_0}\abs{u'(x)}^2\,\dd x + C_\varepsilon\!\int_0^{x_0}\abs{u(x)}^2\,\dd x,
\end{align}
for some $C_\varepsilon\ge0$, which is the standard infinitesimal form-bound condition.
Moreover, by local boundedness of $V_\pm$ away from the apex, $r_\pm$ is bounded on
$[x_0,2]$, so the integral over $[x_0,2]$ contributes only an $L^2$ term.
In other words, $r_\pm$ is infinitesimally form-bounded with respect to the pure Dirichlet
form near the apex.

\subsection{Short-range tails as infinitesimal form perturbations}

Second, toward spatial infinity, the potentials split into a centrifugal barrier plus a short-range tail:
\begin{align}
V_\pm(x)=\frac{\ell(\ell+1)}{x^{2}}+W_\pm(x),\qquad
W_\pm&\in L^{1}\!\big((1,\infty)\big),\label{eq:short-range}
\end{align}
for each fixed multipole $\ell\ge2$. In the Einstein--Maxwell case, this follows from the explicit Kodama--Ishibashi master potentials specialized to $(\Lambda,K,n)=(0,1,2)$ and from the asymptotically flat optical map; see \eqref{eq:x-apex-rtoinf} and Appendix~\ref{app:appC}, together with the detailed Reissner--Nordstr\"om analysis in~Section~3 of \cite{IshibashiKodama2011PTPS} and the superextremal and inner-region potentials described in~Section~2 of \cite{DottiGleiser2010}.
Thus, for a fixed angular momentum $\ell$, the far-field profile of $V_\pm$ can be pictured as a slightly perturbed centrifugal barrier. If one temporarily ignores the tail and sets $W_\pm\equiv0$, the model operator
$
  -\partial_x^2 + \frac{\ell(\ell+1)}{x^2}
$
has explicit power-law solutions $x^{\ell+1}$ and $x^{-\ell}$, corresponding to the familiar rise and decay of spherical waves. The short-range term $W_\pm$ only produces lower-order corrections that are integrable in $x$; it can be thought of as a gentle bump riding on top of the centrifugal profile, too weak to alter the Hardy-controlled closure or the action-selected finite-energy space.

To control the tails $W_\pm$ on $(1,\infty)$, we use the standard fact that any $L^1$ potential is infinitesimally form-bounded with respect to the Dirichlet form on $(1,\infty)$ (see, for instance, Lemma~9.33 in \cite{Teschl} for the line or Inoue--Richard~\cite{InouRichard} for closely related half-line models). For completeness, we sketch the argument.

For $u\in C_0^\infty(1,\infty)$ and $x>1$, we start from
\begin{align}
\abs{u(x)}^2
&= -\int_x^\infty \frac{\dd}{\dd y}\bigl(\abs{u(y)}^2\bigr)\,\dd y
= -2\Re\!\int_x^\infty u'(y)\,\overline{u(y)}\,\dd y,
\end{align}
and by Cauchy--Schwarz,
\begin{align}
\sup_{x\ge1}\abs{u(x)}^2
&\le 2\int_1^\infty\!\abs{u'(y)}\,\abs{u(y)}\,\dd y
\le 2\biggl(\int_1^\infty\!\abs{u'(y)}^2\,\dd y\biggr)^{\!1/2}
\biggl(\int_1^\infty\!\abs{u(y)}^2\,\dd y\biggr)^{\!1/2}.
\end{align}
Hence
\begin{align}
\int_1^\infty\!\abs{W_\pm(x)}\,\abs{u(x)}^2\,\dd x
\le \norm{W_\pm}_{L^1(1,\infty)}\,
\sup_{x\ge1}\abs{u(x)}^2
\le 2\norm{W_\pm}_{L^1(1,\infty)}\,
\norm{u'}_{L^2(1,\infty)}\,\norm{u}_{L^2(1,\infty)}.
\end{align}
This is the usual $L^1$--$L^\infty$ estimate.
Applying the elementary inequality $2ab\le\varepsilon a^2+\varepsilon^{-1}b^2$ with $a=\norm{u'}_{L^2(1,\infty)}$ and $b=\norm{u}_{L^2(1,\infty)}$, we obtain for every $\varepsilon>0$ a constant $C_{\varepsilon,\pm}\ge0$ such that
\begin{align}
\bigg|\int_1^\infty W_\pm(x)\,\abs{u(x)}^2\,\dd x\bigg|
\le \varepsilon\!\int_1^\infty\abs{u'(x)}^2\,\dd x
+ C_{\varepsilon,\pm}\!\int_1^\infty\abs{u(x)}^2\,\dd x.
\label{eq:tail-formbound}
\end{align}
Thus, $W_\pm$ is infinitesimally form-bounded with respect to the Dirichlet form on $(1,\infty)$.
In more concrete terms, \eqref{eq:tail-formbound} says that the energy stored in the tail region $x\gg1$ is controlled by the gradient part of the Dirichlet form up to an arbitrarily small multiple, plus an $L^2$ remainder. One may view $W_\pm$ as a lower-order perturbation that slightly reshapes the centrifugal barrier but never overwhelms it: any loss in coercivity coming from the negative part of $W_\pm$ can be absorbed into the $\varepsilon\!\int\!\abs{u'}^2$ term.

The centrifugal term $\ell(\ell+1)x^{-2}$ in \eqref{eq:short-range} is of the same type as the apex core, but with positive coefficient; combining it with \eqref{eq:hardy-main} yields
\begin{align}
\int_1^\infty\!\Bigl(\abs{u'}^2+\frac{\ell(\ell+1)}{x^2}\abs{u}^2\Bigr)\,\dd x
\ge \int_1^\infty\abs{u'}^2\,\dd x,
\end{align}
so it simply strengthens the coercivity of the Dirichlet form at infinity.

\subsection{Global estimate on optical half-line}

The Hardy and short-range estimates of the previous subsection give precise control of the quadratic forms associated with the master potentials in two complementary regions: near the optical apex, where the inverse-square core dominates, and in the far field, where the potential is a small perturbation of the centrifugal barrier. The next step is to splice these local controls into a single global estimate on the half-line and to interpret the resulting form within the standard Kato--Lions--Milgram--Nelson framework.

Combining the local estimates \eqref{eq:core-lower}, \eqref{eq:core-remainder}, and \eqref{eq:tail-formbound} with a smooth cutoff partition of unity on $(0,\infty)$ yields the desired global bound. Let $\chi_0,\chi_1\in C^\infty([0,\infty))$ satisfy
\begin{align*}
0\le\chi_j\le1,\qquad
\chi_0(x)=1\ \text{for }x\le1,\qquad
\chi_0(x)=0\ \text{for }x\ge2,
\end{align*}
and
\begin{align*}
\chi_0^2+\chi_1^2\equiv1.
\end{align*}
For $u\in C_0^\infty(0,\infty)$ we split
\begin{align*}
u_j:=\chi_j u,\qquad j=0,1,
\end{align*}
so that $u=u_0+u_1$ is decomposed into an ``inner'' piece near the apex and an ``outer'' piece supported in the tail region. The IMS localization formula then gives a clean decomposition of the kinetic term:
\begin{align}
\int_0^\infty \abs{u'(x)}^2\dd x
= \sum_{j=0}^1\int_0^\infty\abs{(\chi_j u)'}^2\dd x
- \int_0^\infty\bigl(\chi_0'(x)^2+\chi_1'(x)^2\bigr)\abs{u(x)}^2\dd x.
\end{align}
The last term is the usual ``IMS error'': it only depends on the fixed cutoffs and is thus a bounded $L^2$ contribution. The potential part localizes exactly, because
$\abs{u}^2=\abs{u_0}^2+\abs{u_1}^2$:
\begin{align}
\int_0^\infty V_\pm(x)\abs{u(x)}^2\dd x
= \sum_{j=0}^1\int_0^\infty V_\pm(x)\abs{u_j(x)}^2\dd x.
\end{align}
Putting these together, the quadratic form can be written as
\begin{align}
q_\pm[u]
=\frac12\sum_{j=0}^1\int_0^\infty\Bigl(\abs{u_j'}^2+V_\pm\abs{u_j}^2\Bigr)\dd x
- \frac12\int_0^\infty\bigl(\chi_0'^2+\chi_1'^2\bigr)\abs{u}^2\dd x.
\end{align}
Since $\chi_j$ and their derivatives are smooth and compactly supported, the IMS error term
\begin{align*}
\int_0^\infty\bigl(\chi_0'^2+\chi_1'^2\bigr)\abs{u}^2\dd x
\end{align*}
is bounded by $C\int_0^\infty\abs{u}^2\dd x$ for some fixed $C\ge0$.

On $\supp u_0\subset(0,2)$ the decomposition $V_\pm=C_\pm x^{-2}+r_\pm(x)$, together with \eqref{eq:core-lower} and \eqref{eq:core-remainder}, shows that
\begin{align}
\int_0^\infty\Bigl(\abs{u_0'}^2+V_\pm\abs{u_0}^2\Bigr)\dd x
\ge -C_0\int_0^\infty\abs{u_0}^2\,\dd x
\end{align}
for some $C_0\ge0$. Thus, the inverse-square core, although attractive in one sector, is tamed by the Hardy inequality up to a harmless $L^2$ contribution.

On $\supp u_1\subset(1,\infty)$ we use the centrifugal-plus-tail decomposition
\begin{align*}
V_\pm=\ell(\ell+1)x^{-2}+W_\pm
\end{align*}
and the Hardy inequality together with \eqref{eq:tail-formbound} to obtain, for any fixed $\varepsilon\in(0,1)$,
\begin{align}
\int_0^\infty\Bigl(\abs{u_1'}^2+V_\pm\abs{u_1}^2\Bigr)\dd x
\ge (1-\varepsilon)\int_0^\infty\abs{u_1'}^2\,\dd x - C_1\int_0^\infty\abs{u_1}^2\,\dd x
\end{align}
for some $C_1\ge0$. In words, the centrifugal barrier and the short-range tail combine to give a nearly coercive estimate at infinity: one can retain an arbitrarily large fraction of the free Dirichlet energy, again at the price of a bounded $L^2$ term.

Summing the localized estimates for $u_0$ and $u_1$ and absorbing the bounded IMS error term
\begin{align*}
\frac12\int_0^\infty(\chi_0'^2+\chi_1'^2)\abs{u}^2\dd x
\end{align*}
into a global $L^2$ bound, we find constants $a>0$ and $b\in\mathbb{R}$ such that
\begin{equation}
q_\pm[u]
\ge a\norm{u'}_{L^2}^2 - b\norm{u}_{L^2}^2,
\qquad u\in C_0^\infty(0,\infty).
\label{eq:semicoercive-main}
\end{equation}
This is the global semicoercive estimate: apart from a lower-order $L^2$ term, the quadratic form controls the full $H_0^1$ gradient norm on the half-line.

It is convenient to read \eqref{eq:semicoercive-main} in terms of the free Dirichlet form
\begin{align}
  q_0[u] := \frac12\int_0^\infty\abs{u'(x)}^2\,\dd x.
\end{align}
The estimate \eqref{eq:semicoercive-main} can then be rewritten as a comparison between $q_\pm$ and $q_0$: for every $\varepsilon>0$ there exists $C_\varepsilon>0$ such that
\begin{align}
  q_\pm[u] \;\ge\; (1-\varepsilon)\,q_0[u] - C_\varepsilon\norm{u}_{L^2}^2.
\end{align}
Thus, the potential contribution is allowed to be negative, but only in a controlled way: it can at most reduce the free Dirichlet energy by a fixed fraction and a lower-order $L^2$ term. Because the constant in front of $q_0$ is strictly positive, the negative part of the potential has relative form bound strictly smaller than $1$ with respect to $q_0$ in the sense of Kato. In more geometric language, $q_\pm$ is an \emph{almost Dirichlet} form: the basic $H^1_0$ control coming from the gradient term cannot be destroyed by the attractive part of $V_\pm$, so the optical half-line never supports arbitrarily deep ``energy wells'' for the master fields.

\subsection{KLMN closure of the $T$-energy forms}

On the functional-analytic side, the semicoercive estimate places the optical potentials $V_\pm$ squarely within the classical Kato framework for small form perturbations of the free Dirichlet form $q_0$.
\begin{align*}
q_\pm[u] \;\ge\; (1-\varepsilon)\,q_0[u] - C_\varepsilon\norm{u}_{L^2}^2
\end{align*}
This estimate says that the negative part of $V_\pm$ has relative form bound strictly below one with respect to $q_0$: it may weaken, but cannot destroy, the $H_0^1$ control supplied by the gradient term. The Kato--Lions--Milgram--Nelson theorem (Kato, Chap.~VI) is then used in the standard way: it guarantees that adding $V_\pm$ to the free Dirichlet form still produces a densely defined, closed, semibounded form on $\Honezero$. This is exactly the step that promotes the test-space definition of $q_\pm$ to the closed $T$-energy form, denoted by the same symbol, on the optical half-line; at this stage no choice of self-adjoint extension has been made.

Recall from \eqref{eq:qpm-def} that on the Hilbert space $\Ltwo$ the core form is
\begin{align}
q_\pm[u]
:=\frac12\int_0^\infty\!\bigl(\abs{u'(x)}^2+V_\pm(x)\,\abs{u(x)}^2\bigr)\,\dd x,
\qquad
u\in C_0^\infty(0,\infty).
\end{align}
The Hardy and short-range estimates of the previous subsection show that $q_\pm$ is closable. Its densely defined, closed, symmetric, semibounded extension is denoted by the same symbol and has form domain
\begin{align}
\Dom(q_\pm)=\Honezero.
\end{align}
There exist constants $a>0$ and $b\in\mathbb{R}$ such that
\begin{align}
q_\pm[u]\ \ge\ a\,\norm{u'}_{L^2}^2 - b\,\norm{u}_{L^2}^2,
\qquad u\in\Honezero,
\end{align}
and the associated form norm
\begin{align}
\norm{u}_{q_\pm}^2 := q_\pm[u]+(\abs{b}+1)\,\norm{u}_{L^2}^2
\end{align}
is equivalent to the standard $\Honezero$ norm. In particular, $C_0^\infty(0,\infty)$ is dense in $\Honezero$ with respect to $\norm{\cdot}_{q_\pm}$, so it is a core for the closed form $q_\pm$.

The Hardy and tail estimates therefore complete the analytic task of this section: each core form $q_\pm$ closes uniquely to a densely defined, semibounded form on $H_0^1(0,\infty)$. The next section identifies the self-adjoint operator determined by this closed physical energy and determines the endpoint behavior encoded in its domain.

\section{Friedrichs extension and silent apex}\label{sec:friedrichs}

Section~\ref{sec:hardy} constructed the unique closed semibounded form $q_\pm$ with domain $H_0^1(0,\infty)$. We now identify the self-adjoint operator determined by that form and determine the apex behavior encoded in its domain. The passage introduces no new endpoint condition: the operator, its form domain, and its boundary behavior are consequences of the already completed Einstein--Maxwell $T$-energy.

The closed physical form $q_\pm$ determines the Friedrichs Hamiltonian $H_\pm^\Fried$, and $2q_\pm$ is its standard operator quadratic form. Recall from \eqref{eq:qpm-def} that the core form on $\Ltwo$ is
\begin{align*}
q_\pm[u]
:=\frac12\int_0^\infty\!\bigl(\abs{u'(x)}^2+V_\pm(x)\,\abs{u(x)}^2\bigr)\,\dd x,
\qquad
u\in C_0^\infty(0,\infty),
\end{align*}
and the preceding section proves that its closed extension, denoted by the same symbol, is a densely defined, closed, symmetric, semibounded form on $\Ltwo$ with form domain
\begin{align}
\Dom(q_\pm)=\Honezero.
\end{align}
More precisely, there exist constants $a>0$ and $b\in\mathbb R$ such that
\begin{align}
q_\pm[u]\ge a\norm{u'}_{L^2}^2-b\norm{u}_{L^2}^2,
\qquad u\in\Honezero,
\end{align}
and the form norm
\begin{align}
\norm{u}_{q_\pm}^2
:=q_\pm[u]+(\abs b+1)\norm{u}_{L^2}^2
\end{align}
is equivalent to the standard $\Honezero$ norm.

By Kato's representation theorem for closed, semibounded forms
(Theorem~VI.2.23 of \cite{Kato}), applied to the closed form $2q_\pm$,
there exists a unique lower-bounded self-adjoint operator
$H_\pm^\Fried$ characterized by
\begin{align}
2q_\pm[u,v]
&=
\big\langle H_\pm^\Fried u,v\big\rangle_{L^2},
\qquad
u\in\Dom(H_\pm^\Fried),\quad v\in\Honezero.
\end{align}
For $u\in\Dom(H_\pm^\Fried)$,
\begin{align*}
q_\pm[u]
=\frac12\big\langle u,H_\pm^\Fried u\big\rangle_{L^2},
\end{align*}

The operator domain is characterized intrinsically by
\begin{align}
\Dom(H_\pm^\Fried)
=\left\{
u\in H_0^1(0,\infty):
\exists f\in L^2(0,\infty)
\text{ such that }
2q_\pm[u,v]=\langle f,v\rangle_{L^2}
\quad\forall v\in H_0^1(0,\infty)
\right\},
\end{align}
with
\begin{align}
H_\pm^\Fried u:=f.
\end{align}
Equipped with
\begin{align}
\norm{u}_{\mathrm{graph}}^2
:= \norm{u}_{L^2}^2+\norm{H_\pm^\Fried u}_{L^2}^2,
\end{align}
the operator domain is a Hilbert space.

We now compare $H_\pm^\Fried$ with the minimal operator
\begin{align}
S_\pm u
:=
-u''+V_\pm u,
\qquad
u\in C_0^\infty(0,\infty)\subset L^2(0,\infty).
\end{align}
For $u,v\in C_0^\infty(0,\infty)$, integration by parts gives
\begin{align}
q_\pm[u,v]
=\frac12\int_0^\infty
\left(
u'\overline{v'}
+V_\pm u\overline v
\right)\,\dd x
=\frac12\langle S_\pm u,v\rangle_{L^2}.
\end{align}
Thus, if $S_\pm^\Fried$ denotes the Friedrichs extension of the minimal Schr\"odinger operator, then
\begin{align}
H_\pm^\Fried=S_\pm^\Fried.
\end{align}
Thus $H_\pm^\Fried$ is precisely the Friedrichs extension of the
minimal Schr\"odinger operator $-\partial_x^2+V_\pm$; the physical
spatial $T$-energy $q_\pm$ is one-half of its standard quadratic form.

The singular endpoint $x=0$ is in the limit-circle case for $H_\pm^{\mathrm{formal}}$ because the inverse-square cores
\begin{align}
V_\pm(x)
=(\nu_\pm^2-\tfrac14)\,x^{-2}+o(x^{-2}),\qquad \nu_\pm\in(0,1),
\end{align}
place $C_\pm:=\nu_\pm^2-\tfrac14$ strictly inside the one-dimensional Hardy window $(-\tfrac14,\tfrac34)$. In this regime, the general zero-energy solutions of $(-\partial_x^2+V_\pm)u=0$ near the apex behave like
\begin{align}
u(x)\sim A\,x^{\frac12+\nu_\pm}+B\,x^{\frac12-\nu_\pm},\qquad x\downarrow0,
\end{align}
with two linearly independent branches. Weyl's limit-circle/limit-point classification implies that every self-adjoint realization of $H_\pm^{\mathrm{formal}}$ on $(0,\infty)$ corresponds to a choice of a linear boundary condition that eliminates one real degree of freedom in $(A,B)$ at $x=0$.

The Friedrichs extension is characterized among these by the requirement that its form domain be the closure, in the form norm, of compactly supported smooth functions. Since $\Dom(q_\pm)=\Honezero$, every $u\in\Dom(H_\pm^\Fried)$ has vanishing $H^1$-trace at the apex,
\begin{align}
u(0)=0,
\end{align}
and the singular branch $x^{\frac12-\nu_\pm}$ is excluded from the asymptotics. Equivalently, $H_\pm^\Fried$ enforces the regular $x^{\frac12+\nu_\pm}$ behavior as $x\downarrow0$ and discards the more singular branch. The more singular branch would otherwise generate nonzero boundary terms in the integration-by-parts identity for $q_\pm$.

This is the domain-level meaning of the silent apex. The vanishing trace is neither a material wall nor an independently prescribed reflecting law; it is the endpoint behavior encoded by closure of the action-induced finite-energy form. For solutions generated by the associated Friedrichs operator, the Green boundary form and the corresponding $T$-energy flux vanish at $x=0$, leaving no independent endpoint channel through which energy can enter or leave.

Under the locality, causality, symmetry, and conserved-energy axioms formulated by Ishibashi and Wald, admissible scalar-wave dynamics must be generated by some positive self-adjoint realization of the spatial operator~\cite{IshibashiWald2003}. Their framework identifies the admissible operator-theoretic class but does not by itself eliminate every positive extension. In their AdS limit-circle notation, the Friedrichs realization corresponds to the generalized Dirichlet branch $a_\nu=0$~\cite{IshibashiWald2004}. A different positive realization is represented by a different closed form and endpoint domain and therefore contains endpoint structure not generated by the unmodified Einstein--Maxwell $T$-energy. Insisting that the conserved quantity be exactly that energy, with no added endpoint form, singles out $H_\pm^\Fried$ as the unique realization represented by the action-selected form.

The action-based finite-$T$-energy criterion should be distinguished from the curvature-based endpoint analysis used in the earlier instability work~\cite{DottiGleiserPullin2007}. For the algebraically special scalar profile examined in Appendix~\ref{app:DG-instability}, the non-Friedrichs $x^{1/3}$ branch is a formal solution but lies outside $\Dom(q_+)$; the action-selected closure instead retains the $x^{2/3}$ branch. The formal profile remains a solution in the broader endpoint theory, but no corresponding instability mode exists in the action-selected finite-$T$-energy dynamics.

\begin{theorem}\label{thm:friedrichs-silent-apex}
For either sign, let $V_\pm$ denote the corresponding master potential for a fixed multipole $\ell\ge2$ in the superextremal \RN\ background, and let $q_\pm$ be the core form \eqref{eq:qpm-def} on $C_0^\infty(0,\infty)$. Assume that the chosen potential satisfies the core and tail conditions \eqref{eq:V-core} and \eqref{eq:short-range}. Then:
\begin{enumerate}
\item The form $q_\pm$ closes uniquely with
\begin{align*}
\Dom(q_\pm)=H_0^1(0,\infty).
\end{align*}
\item The operator $H_\pm^\Fried$ has the same form domain and self-adjoint apex condition as the Friedrichs extension of the minimal Schr\"odinger operator $S_\pm$.
\item Every finite-$T$-energy solution of the master equation~\eqref{eq:master-wave} satisfies
\begin{align*}
  \phi_{\pm,\ell m}(t,\cdot)\in H_0^1(0,\infty)
\end{align*}
for each $t$, so the $H^1$-trace of $\phi_{\pm,\ell m}$ at $x=0$ vanishes. In particular, the
$T$-energy flux through the optical endpoint $x=0$ is zero and the apex is ``silent'' for the
self-field dynamics.
\end{enumerate}
\end{theorem}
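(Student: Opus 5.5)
The proof assembles the estimates of Section~\ref{sec:hardy} with Kato's representation theorem, one item at a time.

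\emph{Item 1.} Start from the global semicoercive bound \eqref{eq:semicoercive-main}. It is obtained by IMS localization from the Hardy core estimate \eqref{eq:core-lower}, the apex remainder bound \eqref{eq:core-remainder}, and the tail bound \eqref{eq:tail-formbound}. Combined with the elementary upper bound
\[
\abs{q_\pm[u]}\le C\bigl(\norm{u'}_{L^2}^2+\norm{u}_{L^2}^2\bigr),
\]
it shows that $\norm{\cdot}_{q_\pm}$ is equivalent to the $H^1$ norm on $C_0^\infty(0,\infty)$. The upper bound uses Hardy again for the $C_\pm x^{-2}$ piece and the same $L^1$--$L^\infty$ estimate for $W_\pm$.

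Closability then follows directly. A $q_\pm$-Cauchy sequence tending to $0$ in $L^2$ is $H^1$-Cauchy, so it converges in $H^1$ to a limit that must be $0$. Hence $q_\pm[u_n]\to0$ by continuity of $q_\pm$ in the $H^1$ topology. The closure domain is the $H^1$-closure of $C_0^\infty(0,\infty)$, which is $\Honezero$ by definition. Uniqueness of the closure is automatic, since the closure of a closable form is its smallest closed extension.

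\emph{Item 2.} Apply Kato's representation theorem (Theorem~VI.2.23 of \cite{Kato}) to $2q_\pm$. This yields $H_\pm^\Fried$, and the integration-by-parts identity
\[
q_\pm[u,v]=\tfrac12\langle S_\pm u,v\rangle \quad\text{on } C_0^\infty(0,\infty)
\]
shows that $2q_\pm$ is the closure of the form of $S_\pm$. By definition the Friedrichs extension $S_\pm^\Fried$ is the operator associated with this closure, so $H_\pm^\Fried=S_\pm^\Fried$, and in particular the two share the same form domain.

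To identify the apex condition concretely, use the limit-circle asymptotics
\[
u\sim A\,x^{\frac12+\nu_\pm}+B\,x^{\frac12-\nu_\pm}.
\]
Since $\nu_\pm\in(0,1)$, the branch $x^{\frac12-\nu_\pm}$ has
\[
\bigl(x^{\frac12-\nu_\pm}\bigr)' \sim x^{-\frac12-\nu_\pm},
\]
which is not square-integrable near $0$. Hence it is not in $\Honezero$ locally, and elements of $\Dom(H_\pm^\Fried)\subset\Honezero$ must have $B=0$.

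\emph{Item 3.} Interpret a finite-$T$-energy solution as $U_\pm(t)$ applied to Cauchy data in $\HE=\Honezero\times\Ltwo$. Here $U_\pm(t)$ is the wave group generated by $H_\pm^\Fried$ through the spectral calculus: $\cos(t\sqrt{H})$ and $\sin(t\sqrt{H})/\sqrt{H}$, after shifting by the lower bound if needed. It preserves the energy space, so $\phi(t,\cdot)\in\Honezero$ for every $t$.

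The vanishing trace then follows from the one-dimensional embedding $\Honezero\subset C_0([0,\infty))$. Indeed $\abs{u(x)}\le\sqrt{x}\,\norm{u'}_{L^2}$ for $u\in C_0^\infty$, and this bound extends by density.

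The main obstacle is the zero-flux claim, because the flux $\Re\bigl(\partial_t\phi\,\overline{\partial_x\phi}\bigr)\big|_{x\to0}$ is not obviously well defined for general energy-class data. I would first prove it for data in $\Dom(H_\pm^\Fried)\times\Honezero$, where the solution stays in the operator domain. For such data the solution behaves like $A(t)\,x^{\frac12+\nu_\pm}$ near $0$, so $\partial_t\phi\cdot\partial_x\phi=O(x^{2\nu_\pm})\to0$. The Green boundary term therefore vanishes, and the energy identity holds with no endpoint contribution. The statement then extends to all of $\HE$ by density and conservation of $E_\pm$ under the unitary group. I expect the justification of the asymptotic form for operator-domain elements, a variation-of-parameters argument with the $o(x^{-2})$ remainder, to be the most delicate step.
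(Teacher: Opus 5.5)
Your proposal is correct and follows essentially the paper's route: Hardy, tail and IMS estimates give semicoercivity and closure to $\Honezero$; Kato's representation theorem with the integration-by-parts identity gives $H_\pm^\Fried=S_\pm^\Fried$; the $x^{\frac12-\nu_\pm}$ branch is excluded from the form domain; and zero apex flux is obtained for operator-domain data and extended by density. Two small refinements: the step you flag as most delicate can be bypassed, because for $u\in\Dom(H_\pm^\Fried)$ and $v\in\Honezero$ cut off near $x=0$, comparing integration by parts on $(\delta,1)$ with $2q_\pm[u,v]=\langle H_\pm^\Fried u,v\rangle$ already gives $\lim_{x\downarrow0}u'(x)\overline{v(x)}=0$, which for $u=\phi(t,\cdot)$, $v=\partial_t\phi(t,\cdot)$ is exactly the vanishing Green boundary term; and since the apex remainder $r_\pm$ is only $o(x^{-2})$ and in fact unbounded (it is $\Order{x^{-5/3}}$ for the actual potentials), you should control it in both your lower and upper bounds by $\abs{r_\pm}\le\varepsilon x^{-2}$ near $0$ plus Hardy, not by the sup bound behind \eqref{eq:core-remainder}.
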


For operator-domain data, the Friedrichs endpoint condition makes the Green boundary form vanish at $x=0$; by density and continuity, the resulting finite-energy evolution has zero apex $T$-energy flux. This is the first, domain-level meaning of silence. The next section determines whether a bound or threshold sector survives, and the radiation analysis then determines whether any nonradiating finite-energy remainder remains.

\section{Doob ground-state representation and absence of bound states on the optical half-line}\label{sec:doob}

To understand the spectral and scattering properties of $H_\pm^\Fried$, we now exploit a Doob (ground-state) transform. The guiding idea is to factor out a distinguished zero-energy profile and reinterpret $H_\pm^\Fried$ as a manifestly positive operator acting on a rescaled unknown.

More concretely, we construct a strictly positive solution $u_{0,\pm}$ of the formal zero-energy equation $H_\pm^{\mathrm{formal}}u=0$ that realizes the Friedrichs behavior at the apex and satisfies the growth estimate~\eqref{eq:u0-growth} at infinity. This ground state induces a factorization
\begin{align*}
  H_\pm^\Fried = A_\pm^{\!*}A_\pm,
\end{align*}
with $A_\pm$ a first-order differential operator adapted to $u_{0,\pm}$. The factorization immediately shows that $H_\pm^\Fried$ is nonnegative and rules out any negative-energy spectrum. 

The remainder of the section constructs the positive profile, extends the resulting square identity from the test core to the closed energy form, and then combines that factorization with the short-range asymptotics to determine the spectrum.

For the remainder of this section, we fix a single superextremal Reissner--Nordstr\"om background, interpreted as the nonlinear self-field of an isolated point charge with ADM mass $M$ and total charge $Q$, in its static Kerr--Schild rest frame. We work throughout in the fixed-background single-source radiative sector defined as follows: we fix $M$ and $Q$, quotient out the $\ell=0,1$ gauge and parameter variations, and exclude static, finite-$T$-energy deformations in the radiative sectors $\ell\ge2$ that would change the higher static multipole structure of the spacetime, such as linearized deformations toward multi-center or strut-supported static configurations. In this fixed-background sector, the genuinely radiative degrees of freedom are the $\ell\ge2$ master fields $\phi_\pm$, and by definition there are no nontrivial static, finite-$T$-energy Einstein--Maxwell perturbations in these sectors.

On the Schr\"odinger side, we work with the formal operators
\begin{align}
  H_\pm^{\mathrm{formal}} = -\partial_x^2 + V_\pm(x),
\label{eq:H-formal}
\end{align}
whose potentials $V_\pm$ have the inverse-square cores of~\eqref{eq:V-core} and the short-range decomposition~\eqref{eq:short-range}. As in Section~\ref{sec:hardy}, we write near the optical apex $x=0$
\begin{equation}
V_\pm(x) = \bigl(\nu_\pm^2-\tfrac14\bigr)x^{-2} + R_\pm(x), \qquad
\nu_\pm\in(0,1),
\label{eq:V-apex}
\end{equation}
where the remainder $R_\pm$ inherits its regularity from the explicit small-$x$ expansions of the optical potentials in Section~\ref{sec:hardy}. Thus, close to the apex the potential is dominated by a universal Hardy-type inverse-square profile, with $R_\pm$ contributing only a softer correction.

The central object in this section will be a strictly positive zero-energy solution $u_{0,\pm}$ of $H_\pm^{\mathrm{formal}}u=0$, which plays the role of a ``ground-state'' profile along the half-line. Once such a solution has been constructed, any mode $u$ can be written as
\begin{align*}
  u(x)=u_{0,\pm}(x)\,v(x),
\end{align*}
so that the Doob transform rewrites the quadratic form $q_\pm$ as one-half the squared $L^2$ norm of the weighted derivative $\partial_xv$ in the metric $u_{0,\pm}^2\,\dd x$. In this representation, the sign of $q_\pm$ is controlled entirely by the positivity and growth properties of $u_{0,\pm}$.

We now record the endpoint and far-field behavior required of this positive zero-energy profile.

Thus, near the apex, the optical potential can be pictured as a pure inverse-square core
\begin{align*}
  (\nu_\pm^2-\tfrac14)x^{-2}
\end{align*}
coming from the Hardy geometry, plus a comparatively soft remainder $R_\pm(x)$ that does
not change the leading power law. The functions $u_{0,\pm}$ we construct below will
look, to first approximation, like the regular power $x^{\frac12+\nu_\pm}$
associated with the core, with $R_\pm$ only adding a small, integrable correction. At infinity, the potential instead has the form
\begin{align*}
V_\pm(x) = \frac{\ell(\ell+1)}{x^{2}} + W_\pm(x), \qquad
W_\pm\in L^1((1,\infty)), \qquad \ell\ge2.
\end{align*}
In the far-field region $x\gg1$ one may picture $V_\pm$ as a centrifugal
barrier $\ell(\ell+1)x^{-2}$ decorated by a short-range tail $W_\pm(x)$ that is integrable
in $x$. For the zero-energy equation $-u''+V_\pm u=0$ the centrifugal part alone has
power-law solutions $x^{\ell+1}$ and $x^{-\ell}$, corresponding to growing and decaying
radial profiles. The tail $W_\pm$ only perturbs these powers by lower-order corrections,
so the eventual ground state $u_{0,\pm}$ can be expected to interpolate between a
Friedrichs-type behavior near the apex and a far-field power-law branch at infinity.
The core forms $q_\pm$ associated with $H_\pm^{\mathrm{formal}}$ were defined
in~\eqref{eq:qpm-def} on $C_0^\infty(0,\infty)$. Their closed extensions, denoted by the same symbols, are densely
defined, closed, semibounded forms on $\Ltwo$ with form domain
$\Dom(q_\pm)=H_0^1(0,\infty)$; see Section~\ref{sec:hardy}.

These near- and far-field pictures set the stage for constructing the ground state $u_{0,\pm}$. The remainder of the section makes this precise: we construct a strictly positive zero-energy solution $u_{0,\pm}$ of $H_\pm^{\mathrm{formal}}u=0$ that realizes the Friedrichs (regular) behavior at the apex, identify the associated Doob (ground-state) representation of $q_\pm$ as a weighted Dirichlet form with weight $u_{0,\pm}^2$, and then combine this with the geometric classification of static radiative perturbations and the short-range tails~\eqref{eq:short-range} to rule out eigenvalues of the Friedrichs realizations $H_\pm^\Fried$.

\subsection{Construction of the ground state near the apex}

The first step is to single out a positive solution of the zero-energy equation that realizes the Friedrichs boundary behavior at the optical apex. This behavior is dictated by the inverse-square core of the potential and is well captured by the corresponding homogeneous model.

Within the local inverse-square construction below,
$\varphi_\pm$ and $\psi_\pm$ denote the two model zero-energy
fundamental solutions. They are not the time-dependent master fields
$\phi_{\pm,\ell m}(t,x)$ of Section~\ref{sec:master-reduction}.

Near $x=0$ we thus consider the homogeneous inverse-square operator
\begin{align*}
H_{\nu_\pm}:=-\partial_x^2+(\nu_\pm^2-\tfrac14)x^{-2},
\end{align*}
whose zero-energy equation admits the standard Frobenius fundamental system
\begin{align*}
\varphi_\pm(x)=x^{\frac12+\nu_\pm},\qquad
\psi_\pm(x)=x^{\frac12-\nu_\pm}.
\end{align*}
Here $\varphi_\pm$ is the ``regular'' branch at the apex and $\psi_\pm$ is the more singular one: both solve the homogeneous inverse-square equation, but
\begin{align*}
  \varphi_\pm(x)\sim x^{\frac12+\nu_\pm},\qquad
  \psi_\pm(x)\sim x^{\frac12-\nu_\pm},\qquad 0<\nu_\pm<1,
\end{align*}
so that $\varphi_\pm$ vanishes more rapidly as $x\downarrow0$. The exponents satisfy $0<\nu_\pm<1$, so both $\varphi_\pm$ and $\psi_\pm$ lie in $L^2(0,1)$ and the endpoint $x=0$ is in the limit-circle regime for the model operator; cf.\ the general one-dimensional discussion in~\cite[Ch.~9]{Teschl} and~\cite[Section~2]{DerezinsRichard}. The corresponding Wronskian
\begin{align*}
W(\varphi_\pm,\psi_\pm)(x)
:=\varphi_\pm(x)\psi_\pm'(x)-\varphi_\pm'(x)\psi_\pm(x),
\end{align*}
is a nonzero constant, independent of $x$. The Friedrichs form domain $\Honezero$ singles out the regular branch $\varphi_\pm$, and we will build $u_{0,\pm}$ as a small perturbation of $\varphi_\pm$ in the presence of the remainder $R_\pm$.

We write the full potential as in~\eqref{eq:V-apex} and consider the zero-energy equation
\begin{align*}
H_\pm^{\mathrm{formal}}u=0\quad\Longleftrightarrow\quad
-u''(x)+\bigl(\nu_\pm^2-\tfrac14\bigr)x^{-2}u(x)+R_\pm(x)u(x)=0.
\end{align*}
Relative to the fundamental system $(\varphi_\pm,\psi_\pm)$, variation of constants gives the
Volterra integral equation for any solution $u$:
\begin{align*}
u(x)=\varphi_\pm(x)
+\int_0^x K_\pm(x,y)\,R_\pm(y)\,u(y)\,\dd y,
\end{align*}
where
\begin{align*}
K_\pm(x,y)
:=\frac{\varphi_\pm(x)\psi_\pm(y)-\psi_\pm(x)\varphi_\pm(y)}{W(\varphi_\pm,\psi_\pm)}.
\end{align*}
This Volterra representation can be viewed as follows: $\varphi_\pm$ is the reference
solution for the pure inverse-square core, and the integral term propagates the effect of
the remainder $R_\pm$ from smaller optical radii $y$ out to $x$. For a fixed $x$, the
kernel $K_\pm(x,y)$ is regular as $y\downarrow0$ thanks to the explicit power-law control
of $\varphi_\pm$ and $\psi_\pm$, so the singularity of $V_\pm$ has been completely isolated
into the homogeneous model $H_{\nu_\pm}$.

Fix $x_0\in(0,1]$. On the triangle $\{(x,y):0<y\le x\le x_0\}$ the kernel $K_\pm$ is
continuous. For $0<y\le x\le 1$ we have
\begin{align*}
\abs{K_\pm(x,y)}
\le C\,\abs{\varphi_\pm(x)\psi_\pm(y)}+C\,\abs{\psi_\pm(x)\varphi_\pm(y)}
\le C\,x^{\frac12+\nu_\pm}y^{\frac12-\nu_\pm}
\le C\,x,
\end{align*}
since $0<\nu_\pm<1$ and $y\le x$. Thus, a direct estimate using the explicit expressions for
$\varphi_\pm$ and $\psi_\pm$ shows that
\begin{align*}
\abs{K_\pm(x,y)}\le Cx\qquad(0<y\le x\le x_0)
\end{align*}
for some $C>0$ independent of $x$ and $y$.

By the explicit small-$x$ expansion of $V_\pm$ from~\eqref{eq:V-core}
(and the fact that $V_\pm-(\nu_\pm^2-\tfrac14)x^{-2}$ extends continuously to $x=0$),
the remainder $R_\pm$ extends continuously to $[0,x_0]$ and is thus bounded and
integrable on $(0,x_0]$:
\begin{align*}
R_\pm\in L^1((0,x_0]),\qquad
\|R_\pm\|_{L^1(0,x_0]}\le C_{x_0}<\infty.
\end{align*}
Moreover, on the triangle $\{(x,y):0<y\le x\le x_0\}$ the kernel $K_\pm$
is continuous, and the estimate above shows that
\begin{align*}
\sup_{0<x\le x_0}\int_0^x \abs{K_\pm(x,y)}\,\abs{R_\pm(y)}\,\dd y
\le Cx_0\|R_\pm\|_{L^1(0,x_0)}\longrightarrow0
\quad\text{as }x_0\downarrow0.
\end{align*}
Thus, for $x_0$ sufficiently small the Volterra operator
\begin{align*}
(\mathcal T_\pm u)(x):=\int_0^x K_\pm(x,y)\,R_\pm(y)\,u(y)\,\dd y
\end{align*}
is a strict contraction on $C([0,x_0])$. The Picard iteration
\begin{align*}
u^{(0)}(x):=\varphi_\pm(x),\qquad
u^{(n+1)}(x):=\varphi_\pm(x)+(\mathcal T_\pm u^{(n)})(x)
\end{align*}
converges uniformly on $[0,x_0]$ to a unique continuous function $u_{0,\pm}$, which is
$C^2$ and solves $H_\pm^{\mathrm{formal}}u=0$ on $(0,x_0]$; see, for example,
\cite[Section~9.1]{Teschl} for Volterra equations with weakly singular kernels.

To obtain positivity near the apex, it is convenient to work in the weighted
supremum norm
\begin{align*}
\|u\|_{*,x_0}:=\sup_{0<x\le x_0}\left|\frac{u(x)}{\varphi_\pm(x)}\right|.
\end{align*}
Since $\varphi_\pm(x)=x^{\frac12+\nu_\pm}>0$ on $(0,x_0]$, this norm is equivalent to
the usual supremum norm on $[0,x_0]$.
The weight $\varphi_\pm$ simply factors out the expected leading behavior near the apex,
so that controlling $u/\varphi_\pm$ in the $\|\cdot\|_{*,x_0}$ norm is the same as
controlling $u$ itself, but with the regular branch of the inverse-square model already
built in. In particular, a fixed point of the Volterra map in this norm corresponds to a
solution $u$ that ``sticks close'' to $\varphi_\pm$ as $x\downarrow0$.

The Volterra operator $\mathcal T_\pm$ satisfies
\begin{align*}
\| \mathcal T_\pm u\|_{*,x_0}
\le \varepsilon\,\|u\|_{*,x_0}
\end{align*}
for some $\varepsilon\in(0,1)$ and $x_0>0$ small enough, by the estimate derived above. Thus, $\mathcal T_\pm$ is a strict contraction on the Banach space
$(C([0,x_0]),\|\cdot\|_{*,x_0})$.
One can picture the Picard iterates
\begin{align*}
  u^{(0)}=\varphi_\pm,\quad
  u^{(1)}=\varphi_\pm+\mathcal T_\pm\varphi_\pm,\quad
  u^{(2)}=\varphi_\pm+\mathcal T_\pm\varphi_\pm+\mathcal T_\pm^2\varphi_\pm,\ \dots
\end{align*}
as successively adding the effect of $R_\pm$ along the optical half-line, with each step
remaining uniformly close to $\varphi_\pm$ near $x=0$. The contraction property ensures that the successive corrections converge to a unique fixed profile $u_{0,\pm}$ that inherits the regular apex behavior of $\varphi_\pm$. This fixed profile satisfies
\begin{align*}
u_{0,\pm}=\varphi_\pm+\mathcal T_\pm u_{0,\pm}.
\end{align*}
Taking the weighted norm and using the contraction property yields
\begin{align*}
\left\|\frac{u_{0,\pm}}{\varphi_\pm}-1\right\|_{*,x_0}
=\left\|\frac{\mathcal T_\pm u_{0,\pm}}{\varphi_\pm}\right\|_{*,x_0}
\le\varepsilon\left\|\frac{u_{0,\pm}}{\varphi_\pm}\right\|_{*,x_0}.
\end{align*}
If we denote $M:=\|u_{0,\pm}/\varphi_\pm\|_{*,x_0}$, this implies
$(1-\varepsilon)M\le1$ and hence $M\le(1-\varepsilon)^{-1}$. In particular,
\begin{align*}
\left\|\frac{u_{0,\pm}}{\varphi_\pm}-1\right\|_{*,x_0}
\le M-1\le \frac{\varepsilon}{1-\varepsilon}.
\end{align*}
Choosing $x_0>0$ so small that $\varepsilon\le\tfrac12$, we obtain the uniform comparison
\begin{align*}
\frac12\le \frac{u_{0,\pm}(x)}{\varphi_\pm(x)}\le 2
\quad\text{for all }x\in(0,x_0],
\end{align*}
and in particular
\begin{equation}
u_{0,\pm}(x)>0 \quad\text{for all }x\in(0,x_0].
\label{eq:u0-positive-near0}
\end{equation}
Moreover, the Picard iteration shows that $u_{0,\pm}\to\varphi_\pm$ in
$\|\cdot\|_{*,x_0}$ as $x\downarrow0$, which yields the asymptotics
\begin{align*}
\frac{u_{0,\pm}(x)}{\varphi_\pm(x)}\to1,\qquad x\downarrow0,
\end{align*}
that is,
\begin{equation}
u_{0,\pm}(x)=x^{\frac12+\nu_\pm}\bigl(1+o(1)\bigr),
\qquad x\downarrow0,
\label{eq:u0-friedrichs}
\end{equation}
exactly the Friedrichs (regular) branch singled out by the form domain
$\Honezero$; compare the classification of Friedrichs boundary conditions for
inverse-square cores in~\cite[Section~2]{DerezinsRichard} and the discussion for
one-dimensional Schr\"odinger operators in~\cite[Ch.~9]{Teschl}.

\subsection{
\texorpdfstring
{Global continuation and large-$x$ behavior}
{Global continuation and large-x behavior}
}

We now continue $u_{0,\pm}$ to a global solution on $(0,\infty)$. On any finite interval
$[x_0,R]$ with $0<x_0<R<\infty$, the potential $V_\pm$ is continuous, so the initial value
problem for the second-order equation $H_\pm^{\mathrm{formal}}u=0$ with Cauchy data
\begin{align*}
u(x_0)=u_{0,\pm}(x_0)>0,\qquad u'(x_0)=u'_{0,\pm}(x_0)
\end{align*}
has a unique $C^2$ solution on $[x_0,R]$. Standard ODE continuation shows that this
solution extends uniquely to all $x>0$, and by uniqueness it coincides with the local
solution constructed above on $(0,x_0]$. We thus obtain a $C^2$ solution $u_{0,\pm}\in C^2((0,\infty))$ of $H_\pm^{\mathrm{formal}}u_{0,\pm}=0$ on $(0,\infty)$ with the apex asymptotics~\eqref{eq:u0-friedrichs}. In other words, $u_{0,\pm}$ is the unique zero-energy solution that matches the Friedrichs branch $x^{\frac12+\nu_\pm}$ at the optical apex; we show below that it remains strictly positive along the whole half-line.

The positivity~\eqref{eq:u0-positive-near0} on $(0,x_0]$ extends to all $x>0$.
Indeed, suppose for contradiction that there exists $x_\ast>0$ with
$u_{0,\pm}(x_\ast)=0$. Let
\begin{align*}
  x_\ast := \inf\{ x>0 : u_{0,\pm}(x)=0\},
\end{align*}
so $u_{0,\pm}(x)>0$ for $0<x<x_\ast$ by continuity, and $u_{0,\pm}(x_\ast)=0$.
Thus, $u_{0,\pm}$ attains its (nonnegative) minimum $0$ at the interior point
$x_\ast\in(0,\infty)$ of the interval $[0,x_\ast]$, and $u_{0,\pm}\ge0$ on $[0,x_\ast]$.

Since $u_{0,\pm}$ is $C^2$, elementary calculus implies
\begin{align*}
  u_{0,\pm}'(x_\ast)=0, \qquad u_{0,\pm}''(x_\ast)\ge0.
\end{align*}
Evaluating the differential equation
\begin{align*}
  -u_{0,\pm}''(x)+V_\pm(x)u_{0,\pm}(x)=0
\end{align*}
at $x=x_\ast$ and using $u_{0,\pm}(x_\ast)=0$ gives
\begin{align*}
  -u_{0,\pm}''(x_\ast)=0,
\end{align*}
so $u_{0,\pm}''(x_\ast)=0$ as well. Hence
\begin{align*}
  u_{0,\pm}(x_\ast)=u_{0,\pm}'(x_\ast)=0.
\end{align*}
By uniqueness of the Cauchy problem for the equation
$-u''+V_\pm u=0$ at $x=x_\ast$, this forces $u_{0,\pm}\equiv0$ on
$(0,\infty)$, contradicting~\eqref{eq:u0-positive-near0}. Thus
\begin{align*}
  u_{0,\pm}(x)>0\quad\text{for all }x>0.
\end{align*}

The profile $u_{0,\pm}$ interpolates between the two asymptotic regimes: near the apex it behaves like $x^{\frac12+\nu_\pm}$, while for large $x$ it gradually bends into its far-field power-law behavior as $x\to\infty$.

For large $x$, the decomposition~\eqref{eq:short-range} shows that the full potential $V_\pm$
is a short-range perturbation of the inverse-square barrier $\ell(\ell+1)x^{-2}$ in the
sense that $V_\pm-\ell(\ell+1)x^{-2}=W_\pm\in L^1((1,\infty))$. Comparing the
zero-energy equation with the homogeneous reference equation
\begin{align*}
-u''(x)+\frac{\ell(\ell+1)}{x^{2}}u(x)=0,
\end{align*}
whose Euler-type fundamental system is
\begin{align*}
\varphi_1(x)=x^{\ell+1},\qquad
\varphi_2(x)=x^{-\ell},
\end{align*}
and using standard variation-of-constants constructions for one-dimensional Schr\"odinger
operators with integrable tails (see, for instance, the treatment of short-range
potentials in~\cite[Sections~9.7, 11.2]{Teschl} and~\cite[Sections~4.1--4.4]{DerezinsRichard}),
one finds that every zero-energy solution of $H_\pm^{\mathrm{formal}}u=0$ admits an
asymptotic expansion of the form
\begin{equation}
u(x)=c_1\,x^{\ell+1}\bigl(1+o(1)\bigr)+c_2\,x^{-\ell}\bigl(1+o(1)\bigr),
\qquad x\to\infty,
\label{eq:u-general-asymp}
\end{equation}
for some constants $c_1,c_2\in\mathbb{C}$ depending on the solution. In particular,
$u_{0,\pm}$ has the form~\eqref{eq:u-general-asymp}. We will only need the rough growth
bound
\begin{equation}
\abs{u_{0,\pm}(x)}\le C(1+x)^{\ell+1}, \qquad x\ge1,
\label{eq:u0-growth}
\end{equation}
for some constant $C>0$, which follows immediately from~\eqref{eq:u-general-asymp}.

For the positive solution $u_{0,\pm}$ constructed above, we now show that the coefficient $c_1$ in~\eqref{eq:u-general-asymp} is \emph{nonzero} and that $u_{0,\pm}$ lies outside $L^2(0,\infty)$. Recall that for a general zero-energy solution we have
\begin{align*}
  u(x)
  = c_1 x^{\ell+1}\bigl(1+o(1)\bigr)
  + c_2 x^{-\ell}\bigl(1+o(1)\bigr)
  \quad\text{as }x\to\infty,
\end{align*}
with constants $c_1,c_2\in\mathbb{R}$. Since $\ell\ge2$, the decaying branch $x^{-\ell}$ belongs to $L^2((1,\infty))$ whereas the growing branch $x^{\ell+1}$ does not, so
\begin{align*}
  u\in L^2(1,\infty)
  \quad\Longleftrightarrow\quad
  c_1=0.
\end{align*}

Suppose, for contradiction, that $c_1=0$ for $u_{0,\pm}$. Then the large-$x$ expansion~\eqref{eq:u-general-asymp} yields
\begin{align*}
  u_{0,\pm}(x)
  = c_2 x^{-\ell}\bigl(1+o(1)\bigr)
  \quad\text{as }x\to\infty,
\end{align*}
and in particular $u_{0,\pm}\in H_0^1(0,\infty)$ and has finite $T$-energy. Using the master equation
\begin{align*}
  \partial_t^2\phi_\pm - \partial_x^2\phi_\pm + V_\pm(x)\phi_\pm = 0,
\end{align*}
and the Hardy control developed in Section~\ref{sec:hardy}, a standard elliptic-regularity argument shows that a suitable multiple of $u_{0,\pm}$ would then define a nontrivial static, finite-$T$-energy Einstein--Maxwell perturbation in the radiative sectors $\ell\ge2$.

In the fixed-background single-source sector defined above, we explicitly exclude such perturbations: once the background ADM mass $M$, total charge $Q$, and low-multipole gauge freedoms have been fixed, there are no nontrivial static, finite-$T$-energy Einstein--Maxwell modes in the radiative sectors. This contradiction shows that $c_1$ cannot vanish, so
\begin{align*}
  c_1\neq 0
  \quad\text{for }u_{0,\pm}.
\end{align*}

Consequently, $u_{0,\pm}$ realizes the \emph{growing} power-law branch at infinity. Combining~\eqref{eq:u-general-asymp} with the a priori bound~\eqref{eq:u0-growth}, we may summarize its behavior as
\begin{align*}
  u_{0,\pm}(x) \sim x^{\frac12+\nu_\pm}
  \quad\text{as }x\downarrow0,
  \qquad
  \abs{u_{0,\pm}(x)} \le C(1+x)^{\ell+1}
  \quad\text{for }x\ge1,
\end{align*}
for some constant $C>0$, and in particular $u_{0,\pm}\notin L^2(0,\infty)$.

The function $u_{0,\pm}$ is a single positive ground-state profile. Near the apex it behaves like $x^{\frac12+\nu_\pm}$, while for large $x$ it gradually bends into the far-field power-law regime set by the centrifugal tail of $V_\pm$. The fact that $u_{0,\pm}$ never changes sign, together with the growth control in~\eqref{eq:u0-growth}, is what makes it a suitable positive ground state for the Doob transform. From this point of view, the hypergeometric structure of the explicit mode solutions is not accidental. It is the same special-function pattern that underlies the solvable half-line models of Inoue and Richard~\cite{InouRichard}, where the resolvent, spectral density, and scattering data of $D_{\mu,\nu}$ can all be written in terms of the Gauss hypergeometric function and its asymptotics.

\subsection{
\texorpdfstring
{Doob transform and positivity of the $T$-energy forms}
{Doob transform and positivity of the T-energy forms}
}

We now use $u_{0,\pm}$ as a positive ground state to rewrite the
$T$-energy in a form where positivity is completely transparent. The basic
idea is to factor out the zero-energy profile and regard what remains
as a fluctuation around this background. Accordingly, for
$u\in C_0^\infty(0,\infty)$ we introduce the renormalized field
\begin{align*}
v:=\frac{u}{u_{0,\pm}},
\end{align*}
which measures $u$ relative to the zero-energy profile, so that $u=u_{0,\pm}v$ and
\begin{align*}
u'(x)=u_{0,\pm}'(x)v(x)+u_{0,\pm}(x)v'(x).
\end{align*}
Here $v$ measures deviations from the distinguished profile $u_{0,\pm}$, while
$u_{0,\pm}$ itself encodes the static zero-energy geometry of the mode.

A short calculation rewrites the derivative and potential
parts of $q_\pm[u]$ in terms of $v$ and $u_{0,\pm}$. Using $u=u_{0,\pm}v$ and
$u'=u_{0,\pm}'v+u_{0,\pm}v'$, we obtain
\begin{align*}
\abs{u'(x)}^2
&=\abs{u_{0,\pm}'(x)}^2\abs{v(x)}^2
+2u_{0,\pm}(x)u_{0,\pm}'(x)\Re\bigl(\overline{v(x)}\,v'(x)\bigr)
+u_{0,\pm}(x)^2\abs{v'(x)}^2,\\
V_\pm(x)\abs{u(x)}^2
&=V_\pm(x)u_{0,\pm}(x)^2\abs{v(x)}^2
=u_{0,\pm}(x)u_{0,\pm}''(x)\,\abs{v(x)}^2,
\end{align*}
where in the last step we used the zero-energy relation
$-u_{0,\pm}''+V_\pm u_{0,\pm}=0$. In parallel, the mixed term can be written
as a total derivative:
\begin{align*}
\frac{\dd}{\dd x}\Big(u_{0,\pm}u_{0,\pm}'\abs{v}^2\Big)
=\bigl(u_{0,\pm}'^2+u_{0,\pm}u_{0,\pm}''\bigr)\abs{v}^2
+2u_{0,\pm}u_{0,\pm}'\Re\bigl(\overline{v}v'\bigr).
\end{align*}
Combining these expressions and eliminating the cross terms yields the
pointwise identity
\begin{align}
\abs{u'(x)}^2+V_\pm(x)\abs{u(x)}^2
&=u_{0,\pm}(x)^2\abs{v'(x)}^2
+\frac{\dd}{\dd x}\Big(u_{0,\pm}(x)u_{0,\pm}'(x)\,\abs{v(x)}^2\Big),
\label{eq:doob-pointwise-final}
\end{align}
valid for all $x>0$ and all $u\in C_0^\infty(0,\infty)$. Thus, up to a total
derivative, twice the spatial energy density is the square of the derivative of the
renormalized field $v$ weighted by $u_{0,\pm}^2$.

When \eqref{eq:doob-pointwise-final} is integrated over the optical
half-line, the total derivative contributes only boundary terms. For compactly
supported test fields these boundary terms vanish, and for finite-$T$-energy
fields they are controlled by the Friedrichs behavior at the apex and the
decay at infinity. In particular, the bulk contribution to the spatial
$T$-energy is given by the manifestly nonnegative quantity
$\tfrac12u_{0,\pm}^2\abs{v'}^2$, so the Doob transform has converted the original
Hardy-type structure into a perfect-square representation adapted to the
ground state $u_{0,\pm}$.

Integrating~\eqref{eq:doob-pointwise-final} over $(0,\infty)$ and using that
$u\in C_0^\infty(0,\infty)$ has compact support, say
$\supp u\subset[a,b]\Subset(0,\infty)$, we obtain
\begin{align*}
\int_0^\infty\bigl(\abs{u'}^2+V_\pm\abs{u}^2\bigr)\,\dd x
&=\int_0^\infty u_{0,\pm}(x)^2\abs{v'(x)}^2\,\dd x
+\Big[u_{0,\pm}(x)u_{0,\pm}'(x)\,\abs{v(x)}^2\Big]_{x=0}^{x=\infty}.
\end{align*}
Since $u$ vanishes outside $[a,b]$, we have $v=u/u_{0,\pm}\equiv0$ for
$x\notin[a,b]$, and the boundary term vanishes. Thus
\begin{align*}
\int_0^\infty\bigl(\abs{u'}^2+V_\pm\abs{u}^2\bigr)\,\dd x
=\int_0^\infty u_{0,\pm}(x)^2\abs{v'(x)}^2\,\dd x.
\end{align*}

Recalling the definition of the core form,
\begin{align*}
q_\pm[u]
=\frac12\int_0^\infty\bigl(\abs{u'(x)}^2+V_\pm(x)\abs{u(x)}^2\bigr)\,\dd x,
\qquad u\in C_0^\infty(0,\infty),
\end{align*}
we arrive at the Doob identity
\begin{align}
q_\pm[u]
=\frac12\int_0^\infty u_{0,\pm}(x)^2\bigl|(u/u_{0,\pm})'(x)\bigr|^2\,\dd x\ \ge\ 0,
\qquad u\in C_0^\infty(0,\infty).
\label{eq:doob-identity}
\end{align}

The next subsection extends this core identity to the closed form and identifies the corresponding first-order factorization.

\subsection{
\texorpdfstring
{Extension to $\Honezero$ and factorization}
{Extension to H01 and factorization}
}

By Section~\ref{sec:hardy}, each core form $q_\pm$ is closable, and its closed extension, denoted by the same symbol, is a densely defined, closed, semibounded quadratic form on $\Ltwo$ with form domain $\Dom(q_\pm)=H_0^1(0,\infty)$. The right-hand side of~\eqref{eq:doob-identity} defines another nonnegative quadratic form on $C_0^\infty(0,\infty)$, obtained by measuring the weighted derivative $(u/u_{0,\pm})'$ in the metric $u_{0,\pm}^2\,\dd x$. To compare these two forms it suffices to estimate the Doob form in terms of the original energy form, using the growth properties of $u_{0,\pm}$ and the regularity of $V_\pm$.

Using the growth estimate~\eqref{eq:u0-growth} and $u_{0,\pm}\in C^2((0,\infty))$, there exists a constant
$C>0$ such that for all $u\in C_0^\infty(0,\infty)$,
\begin{align*}
\int_0^\infty u_{0,\pm}(x)^2\bigl|(u/u_{0,\pm})'(x)\bigr|^2\,\dd x
\le C\int_0^\infty\bigl(\abs{u'(x)}^2+(1+\abs{V_\pm(x)})\abs{u(x)}^2\bigr)\,\dd x.
\end{align*}
This estimate shows that the Doob form is dominated by $q_\pm$ with arbitrarily small relative bound: in the terminology of Kato, it is $q_\pm$-bounded with relative bound zero. As a consequence, the Doob form is closable on $C_0^\infty(0,\infty)$, and its closure must agree with $q_\pm$ by uniqueness of the closed extension. Thus, \eqref{eq:doob-identity} extends by density to all $u\in H_0^1(0,\infty)$:
\begin{align}
q_\pm[u]
=\frac12\int_0^\infty u_{0,\pm}(x)^2\bigl|(u/u_{0,\pm})'(x)\bigr|^2\,\dd x,
\qquad u\in H_0^1(0,\infty),
\label{eq:doob-identity-H1}
\end{align}
and in particular
\begin{align*}
q_\pm[u]\ge0 \quad\text{for all }u\in \Honezero.
\end{align*}

Once that extension is established, the Doob transform gives the following transparent self-field representation. The strictly positive zero-energy solution $u_{0,\pm}$ satisfies the Friedrichs behavior at $x=0$ and grows as $x^{\ell+1}$ at infinity, and the spatial $T$-energy becomes
\begin{align*}
  q_\pm[u]
  = \frac12\int_0^\infty u_{0,\pm}(x)^2\,
  \Bigl|\Bigl(\frac{u}{u_{0,\pm}}\Bigr)'(x)\Bigr|^2\,\dd x,
  \qquad u\in\Honezero.
\end{align*}
It follows that $q_\pm[u]\ge0$ and that $q_\pm[u]=0$ if and only if $u/u_{0,\pm}$ is constant, so
that the only zero-energy solution compatible with the form domain is the (non-square-integrable)
ground state $u_{0,\pm}$ itself.
Here and for the remainder of the Doob-factorization discussion,
$S_\pm(x)$ denotes the scalar logarithmic derivative of the positive
zero-energy profile. It should not be confused with the minimal
Schr\"odinger operator $S_\pm$ introduced in
Section~\ref{sec:friedrichs}; the distinction is clear from whether
$S_\pm$ is evaluated at $x$ or applied to a function.

It is convenient to express~\eqref{eq:doob-identity-H1} as a factorization of the
Friedrichs realization. Define the first-order differential expression
\begin{align*}
S_\pm(x):=\frac{u_{0,\pm}'(x)}{u_{0,\pm}(x)},
\qquad x>0,
\end{align*}
and consider the operator
\begin{align*}
A_\pm:=\partial_x - S_\pm
\end{align*}
initially on $C_0^\infty(0,\infty)$. For $u\in C_0^\infty(0,\infty)$ we have
\begin{align*}
\frac{u}{u_{0,\pm}}=:v,
\qquad
A_\pm u=u'(x)-S_\pm(x)u(x)
=u_{0,\pm}(x)v'(x)
=u_{0,\pm}(x)\left(\frac{u}{u_{0,\pm}}\right)'(x),
\end{align*}
so that
\begin{align*}
\norm{A_\pm u}_{L^2}^2
=\int_0^\infty u_{0,\pm}(x)^2
\left|\left(\frac{u}{u_{0,\pm}}\right)'(x)\right|^2\,\dd x
=2q_\pm[u].
\end{align*}
From this point of view, $A_\pm$ is the first-order ``covariant derivative'' in the
logarithmic gauge determined by $u_{0,\pm}$:
\begin{align*}
  A_\pm = \partial_x - (\partial_x\log u_{0,\pm}).
\end{align*}
The Doob representation is then the statement that the spatial Einstein--Maxwell $T$-energy of a
mode is one-half the squared $L^2$ norm of this covariant derivative, and the factorization
$H_\pm^\Fried=A_\pm^{\!*}A_\pm$ identifies the Friedrichs Hamiltonian as the positive
operator generated by that energy. This is the final step in the same logic: geometry fixes the energy, the closed energy form fixes $H_\pm^\Fried$, and the Doob representation exposes that already determined Hamiltonian as the positive square of the first-order covariant derivative selected by the distinguished zero-energy profile.
Since $2q_\pm$ is the closed form associated with the Friedrichs extension $H_\pm^\Fried$,
constructed in Section~\ref{sec:friedrichs}, the representation theorem for closed forms
(see~\cite[Section~VI.1]{Kato}) yields
\begin{align*}
H_\pm^\Fried = A_\pm^{\!*}A_\pm
\quad\text{on }\Dom(H_\pm^\Fried)\subset \Honezero,
\end{align*}
and
\begin{equation}
q_\pm[u]
=\frac12\norm{A_\pm u}_{L^2}^2
=\frac12\big\langle u,\,H_\pm^\Fried u\big\rangle_{L^2},
\qquad u\in\Dom(H_\pm^\Fried).
\label{eq:q-vs-H}
\end{equation}
In particular, the Friedrichs realizations satisfy
\begin{equation}
H_\pm^\Fried\ge0
\quad\text{as self-adjoint operators on }L^2(0,\infty),
\label{eq:H-nonnegative}
\end{equation}
with quadratic forms $q_\pm$ given by the Doob identity~\eqref{eq:doob-identity-H1}.

\subsection{Absence of eigenvalues and static Einstein--Maxwell perturbations}

We now show that the Friedrichs realizations $H_\pm^\Fried$ have no eigenvalues at any energy and explain how the zero-energy case encodes static finite-$T$-energy Einstein--Maxwell perturbations on the fixed superextremal \RN\ background. On the Schr\"odinger side, an eigenvalue of $H_\pm^\Fried$ would correspond to a bound state of the master fields: an $L^2$ solution of the stationary equation
\begin{align*}
(-\partial_x^2+V_\pm)u=\lambda u
\end{align*}
that remains trapped along the optical half-line. The Doob representation allows us to rule out such trapped modes directly at the level of the quadratic form. For the rest of this subsection we fix one of the parities and suppress the $\pm$ index from the notation. A zero-energy eigenvalue would correspond to an $L^2$ solution $u$ of
\begin{align*}
(-\partial_x^2+V_\pm)u=0
\end{align*}
satisfying the Friedrichs boundary behavior at the apex and square-integrability at infinity. Such a solution, if it existed, would correspond to a static finite-$T$-energy Einstein--Maxwell perturbation in the $\ell\ge2$ sector. The Doob representation shows that no such nontrivial solution exists.

Let $u\in\Dom(H^\Fried)\subset H_0^1(0,\infty)$ satisfy
\begin{align*}
H^\Fried u = 0.
\end{align*}
Then, by~\eqref{eq:q-vs-H},
\begin{align*}
q[u]=\frac12\big\langle u,\,H^\Fried u\big\rangle_{L^2}=0.
\end{align*}
Applying the Doob identity~\eqref{eq:doob-identity-H1} we obtain
\begin{align*}
0=q[u]
=\frac12\int_0^\infty u_{0,\pm}(x)^2
\left|
\left(
\frac{u}{u_{0,\pm}}
\right)'(x)
\right|^2\,\dd x.
\end{align*}
Because the integrand is pointwise nonnegative and the weight $u_{0,\pm}^2$ is strictly positive
on $(0,\infty)$, the integral can vanish only if $(u/u_{0,\pm})'$ is zero almost everywhere. In other words, a zero-energy mode in the
closed finite-$T$-energy space must be perfectly aligned with the ground-state profile
$u_{0,\pm}$ and cannot oscillate or localize in any nontrivial way along the optical
coordinate. By continuity, we conclude
\begin{align*}
u(x)=c\,u_{0,\pm}(x),\qquad x>0,
\end{align*}
for some $c\in\mathbb{C}$. Since $c_1\neq0$ in the large-$x$ expansion~\eqref{eq:u-general-asymp}, we have
$u_{0,\pm}\notin L^2(1,\infty)$ for every $\ell\ge2$.
Thus, $c\,u_{0,\pm}\in L^2(0,\infty)$ only when $c=0$. Hence $u\equiv0$, and
\begin{equation}
\Ker H^\Fried = \{0\}.
\label{eq:H-kernel-trivial}
\end{equation}
Equivalently, zero is not an eigenvalue of $H^\Fried$ in the action-selected finite-$T$-energy space. In
terms of the original Einstein--Maxwell perturbations, this means that there are no
nontrivial static, finite-$T$-energy perturbations in the radiative sectors $\ell\ge2$,
modulo the usual low-multipole gauge freedoms; see the geometric classification in
Section~\ref{sec:master-reduction} and the Kodama--Ishibashi theory~\cite{IshibashiKodama2011PTPS}.

We next exclude eigenvalues at strictly positive energies.
In physical terms this amounts to showing that there are no finite-$T$-energy modes that
oscillate indefinitely in $x$ while remaining spatially localized: every solution with
$\lambda>0$ must either fail to lie in $L^2$ or fail to satisfy the Friedrichs boundary
condition at the apex.
Let $\lambda>0$ and suppose
that $u\in\Dom(H^\Fried)\subset H_0^1(0,\infty)$ satisfies
\begin{equation}
H^\Fried u = \lambda u.
\label{eq:eigen-eq-positive}
\end{equation}
Then $u$ is a nontrivial solution of the differential equation
\begin{align*}
-u''(x)+V(x)u(x)=\lambda u(x)
\end{align*}
on $(0,\infty)$ with $u\in L^2(0,\infty)$ and $u(0)=0$ in the trace sense.

By~\eqref{eq:short-range} and the decay of $x^{-2}$ at infinity, we have
\begin{align*}
V\in L^1((1,\infty)),
\end{align*}
so the full potential is short-range on the half-line. Classical one-dimensional
scattering theory for such potentials (see, for example,
\cite[Sections~9.6--9.7]{Teschl}) implies that for each fixed $\lambda>0$ there exist
Jost solutions $f_\pm(\cdot,\lambda)$ of the eigenvalue equation with asymptotics
\begin{equation}
f_\pm(x,\lambda)
=\exp\big(\pm i\sqrt{\lambda}\,x\big)\bigl(1+o(1)\bigr),
\qquad x\to\infty,
\label{eq:jost-asymptotics}
\end{equation}
and that every (complex) solution of~\eqref{eq:eigen-eq-positive} is a linear combination
of $f_+(\cdot,\lambda)$ and $f_-(\cdot,\lambda)$.

The asymptotics~\eqref{eq:jost-asymptotics} imply that every solution $u$ of
\eqref{eq:eigen-eq-positive} satisfies
\begin{align*}
u(x)=a\,\mathrm{e}^{i\sqrt{\lambda}\,x}\bigl(1+o(1)\bigr)
+b\,\mathrm{e}^{-i\sqrt{\lambda}\,x}\bigl(1+o(1)\bigr),\qquad x\to\infty,
\end{align*}
for some constants $a,b\in\mathbb{C}$ not both zero. In particular,
\begin{align*}
\lim_{R\to\infty}\frac{1}{R}\int_1^R \abs{u(x)}^2\,\dd x
= |a|^2+|b|^2>0,
\end{align*}
so there exist constants $c>0$ and $R_0>1$ such that
\begin{align*}
\int_1^R \abs{u(x)}^2\,\dd x \ge c(R-1)\quad\text{for all }R\ge R_0.
\end{align*}
Hence
\begin{align*}
\int_1^\infty\abs{u(x)}^2\,\dd x=+\infty,
\end{align*}
and no nontrivial solution of~\eqref{eq:eigen-eq-positive} can belong to
$L^2(1,\infty)$. In particular, there is no $u\in L^2(0,\infty)$ solving
\eqref{eq:eigen-eq-positive}, so
\begin{equation}
\lambda>0,\quad u\in\Dom(H^\Fried),\quad H^\Fried u=\lambda u
\quad\Longrightarrow\quad u=0,
\label{eq:no-positive-eig}
\end{equation}
i.e., $H^\Fried$ has no positive eigenvalues.

Combining~\eqref{eq:H-nonnegative}, \eqref{eq:H-kernel-trivial}, and
\eqref{eq:no-positive-eig}, we conclude that $H^\Fried$ has no eigenvalues at any
energy: the point spectrum is empty. On the other hand, the difference
$V-V_{\nu}$ is relatively compact with respect to the homogeneous inverse-square
model $H_\nu$ from~\eqref{eq:V-apex}, whose spectrum is known to be
$\sigma(H_\nu)=[0,\infty)$; see, for example,~\cite{DerezinsRichard,Teschl}.
By Weyl's theorem on the invariance of the essential spectrum under relatively
compact perturbations, we obtain
\begin{align*}
\sigma_{\mathrm{ess}}(H^\Fried)=[0,\infty).
\end{align*}
Finally, the short-range character of the tails~\eqref{eq:short-range}, that is,
$V-\ell(\ell+1)x^{-2}\in L^1((1,\infty))$, together with the absence of
eigenvalues, implies that the spectral measure of $H^\Fried$ on $(0,\infty)$ is
purely absolutely continuous; this is a standard consequence of one-dimensional
scattering theory for short-range potentials (see, e.g.,
\cite[Sections~9.6--9.7]{Teschl} and~\cite[Sections~4.1--4.4]{DerezinsRichard}).
Consequently,
\begin{align*}
\sigma(H^\Fried)=\sigma_{\mathrm{ac}}(H^\Fried)=[0,\infty).
\end{align*}
The Friedrichs Hamiltonian therefore has no bound spectrum at all: every energy $\lambda>0$
belongs to the continuous scattering spectrum, and the only ``threshold state'' at
$\lambda=0$ is the non-normalizable ground-state profile $u_{0,\pm}$ used in the Doob
transform. In the language of the original Einstein--Maxwell fields, this means that
radiative perturbations in the $\ell\ge2$ sectors can only propagate and scatter along
the optical half-line; there are no normalizable stationary or time-periodic modes
trapped by the naked singularity.

\begin{theorem}\label{thm:doob-spectrum}
In the fixed-background single-source radiative sector defined above, let $H_\pm^\Fried$ be the Friedrichs realizations constructed in Theorem~\ref{thm:friedrichs-silent-apex}. Then:

\medskip

\noindent (i) There exist strictly positive functions
$u_{0,\pm}\in C^2((0,\infty))$ solving
\begin{align*}
H_\pm^{\mathrm{formal}}u_{0,\pm}=0
\quad\text{on }(0,\infty),
\end{align*}
with asymptotics
\begin{align*}
u_{0,\pm}(x)=x^{\frac12+\nu_\pm}\bigl(1+o(1)\bigr)\quad\text{as }x\downarrow0,
\qquad
\abs{u_{0,\pm}(x)}\le C(1+x)^{\ell+1}\quad\text{for }x\ge1,
\end{align*}
for some constant $C>0$.

\medskip

\noindent (ii) For every $u\in H_0^1(0,\infty)$ the Doob identity
\begin{align*}
q_\pm[u]
=\frac12\int_0^\infty u_{0,\pm}(x)^2\bigl|(u/u_{0,\pm})'(x)\bigr|^2\,\dd x
\end{align*}
holds, and in particular $q_\pm[u]\ge0$ and $H_\pm^\Fried\ge0$ as self-adjoint operators.

\medskip

\noindent (iii) The Friedrichs realizations factor as
\begin{align*}
H_\pm^\Fried
&=(\partial_x-S_\pm)^*
(\partial_x-S_\pm),
\qquad
S_\pm=\frac{u_{0,\pm}'}{u_{0,\pm}},
\end{align*}
and have no eigenvalues:
\begin{align*}
\Ker H_\pm^\Fried=\{0\},
\qquad
\sigma_{\mathrm{p}}(H_\pm^\Fried)=\varnothing.
\end{align*}

\medskip

\noindent (iv) The spectrum of $H_\pm^\Fried$ is purely absolutely continuous and given by
\begin{align*}
\sigma(H_\pm^\Fried)=\sigma_{\mathrm{ac}}(H_\pm^\Fried)=[0,\infty).
\end{align*}
\end{theorem}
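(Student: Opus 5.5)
The proof assembles the four parts in the order in which the section built them; the genuine work is checking the few places where the section's arguments were only sketched.

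For (i), I take the Volterra construction near the apex as given. With $R_\pm$ bounded near $0$ and $\abs{K_\pm(x,y)}\le Cx$, the operator $\mathcal T_\pm$ is a contraction in the weighted norm $\norm{\cdot}_{*,x_0}$ once $x_0$ is small. This gives $u_{0,\pm}=x^{\frac12+\nu_\pm}(1+o(1))$. I will state this weighted contraction as a separate estimate: because $\varphi_\pm(y)/\varphi_\pm(x)\le1$ for $y\le x$, one gets $\abs{K_\pm(x,y)}\varphi_\pm(y)/\varphi_\pm(x)\le Cx$, and the $o(1)$ then follows from $\abs{(\mathcal T_\pm u)(x)}/\varphi_\pm(x)\le Cx\norm{R_\pm}_{L^\infty}\norm{u}_{*}$. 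Global existence follows from ODE continuation, since $V_\pm$ is continuous on $(0,\infty)$. Strict positivity follows from the first-zero argument: a zero of a nontrivial solution is simple, and at a first zero from the positive side this forces $u'(x_\ast)=0$, contradicting Cauchy uniqueness. The growth bound follows from the Levinson-type asymptotics \eqref{eq:u-general-asymp} with the integrable tail $W_\pm$.

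For (ii), I take the core identity \eqref{eq:doob-identity} on $C_0^\infty(0,\infty)$ from the integrated pointwise formula \eqref{eq:doob-pointwise-final}. To pass to $\Honezero$, I will not use a relative-bound argument. Instead I take $u_n\in C_0^\infty$ with $u_n\to u$ in the $q_\pm$-form norm, which is equivalent to $H^1$ by Section~\ref{sec:hardy}. The sequence $A_\pm u_n=u_n'-S_\pm u_n$ is Cauchy in $L^2$, because $\norm{A_\pm(u_n-u_m)}^2=2q_\pm[u_n-u_m]\to0$. It converges pointwise locally, since $S_\pm$ is locally bounded on $(0,\infty)$, to $u'-S_\pm u=u_{0,\pm}(u/u_{0,\pm})'$. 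Passing to the limit gives \eqref{eq:doob-identity-H1}, and nonnegativity of $q_\pm$ then yields $H_\pm^\Fried\ge0$.

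For (iii), I define $A_\pm$ as the closure of $\partial_x-S_\pm$ on $C_0^\infty(0,\infty)$. The identity $\norm{A_\pm u}^2=2q_\pm[u]$ on the core, together with the fact that $C_0^\infty$ is a form core, gives $\Dom(\overline{A_\pm})=\Honezero$ and $\norm{\overline{A_\pm}u}^2=2q_\pm[u]$ there. The standard result that $T^*T$ is the operator associated with the closed form $\norm{Tu}^2$ (Kato VI.2) then gives $H_\pm^\Fried=A_\pm^*A_\pm$.

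For the kernel, suppose $H_\pm^\Fried u=0$. Then $q_\pm[u]=0$, so $u=cu_{0,\pm}$. Since $c_1\neq0$ in \eqref{eq:u-general-asymp}, $u_{0,\pm}\notin L^2$, hence $c=0$. This is the step I expect to be the real obstacle. The paper obtains $c_1\neq0$ from the sector hypothesis that excludes static finite-energy $\ell\ge2$ perturbations. I would try first to prove it directly. If $c_1=0$, then $u_{0,\pm}$ would be an $L^2$ zero mode with the Friedrichs behavior at the apex, and $u_{0,\pm}'=O(x^{-\ell-1})$ at infinity together with the apex asymptotics would put it in $\Dom(H_\pm^\Fried)$. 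So the statement is equivalent to the absence of a zero-energy bound state, and a direct proof would require a Sturm comparison using the explicit potentials; absent that, I would invoke the sector hypothesis as the paper does.

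Positive eigenvalues are excluded by the Jost asymptotics \eqref{eq:jost-asymptotics}. For this, $V_\pm\in L^1(1,\infty)$ suffices, since every solution has nonvanishing mean square at infinity. Negative eigenvalues are already excluded by $H_\pm^\Fried\ge0$.

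For (iv), the essential spectrum is obtained by a Weyl-sequence or decomposition argument: $H_\pm^\Fried$ has the same essential spectrum as the Friedrichs operator on $(1,\infty)$ with $V_\pm\to0$, which is $[0,\infty)$. Absolute continuity on $(0,\infty)$ follows from the short-range subordinacy/Jost theory: solutions are bounded at infinity for $\lambda>0$ (Gilbert–Pearson), and the endpoint $0$ is regularized by the Friedrichs condition. Together with $\sigma_{\mathrm p}=\varnothing$, which shows that the single point $0$ carries no mass, this gives $\sigma=\sigma_{\mathrm{ac}}=[0,\infty)$.
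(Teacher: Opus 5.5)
\paragraph{The gap: positivity of the zero-energy profile.} The genuine gap is the global positivity of $u_{0,\pm}$ in (i). Everything afterwards depends on it: $S_\pm=u_{0,\pm}'/u_{0,\pm}$, the Doob identity, $H_\pm^\Fried\ge0$, and the inclusion $\sigma\subset[0,\infty)$ used in (iv).

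Your first-zero argument does not give it. At a first zero $x_\ast$ approached from the positive side you only get $u_{0,\pm}'(x_\ast)\le0$. Simplicity of the zero upgrades this to $u_{0,\pm}'(x_\ast)<0$, so the solution just crosses the axis, with no conflict with Cauchy uniqueness. The paper's proof has the same defect: $x_\ast$ is an endpoint of $[0,x_\ast]$, not an interior minimum.

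Positivity is in fact not an ODE fact; it is equivalent to the spectral conclusion you want. Suppose $u_{0,\pm}(x_\ast)=0$ and set $\phi:=u_{0,\pm}\mathbf{1}_{(0,x_\ast)}\in\Honezero$. Integrating by parts gives $q_\pm[\phi]=0$, because the boundary term vanishes at $x_\ast$ and, by the Friedrichs asymptotics, at $0$. If $q_\pm\ge0$, Cauchy--Schwarz forces $q_\pm[\phi,v]=0$ for all $v$, so $H_\pm^\Fried\phi=0$. Then $\phi$ is a classical solution vanishing on $(x_\ast,\infty)$, hence $\phi\equiv0$, which is absurd. So any zero of $u_{0,\pm}$ produces negative spectrum.

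Consequently no argument using only \eqref{eq:V-core}, \eqref{eq:short-range} and the sector hypothesis can work. Replace $V_\pm$ by $V_\pm-N\chi$ with $0\le\chi\in C_0^\infty(1,2)$, $\chi\not\equiv0$, and $N$ large. This preserves both hypotheses but makes $q_\pm$ negative on test functions supported in $(1,2)$. The sector hypothesis excludes only static modes, not the $\mathrm{e}^{\kappa t}$-growing modes that negative eigenvalues represent. The real obstacle is therefore here, not at $c_1\neq0$.

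\paragraph{What is missing, and what else needs repair.} You need positivity input from the explicit Einstein--Maxwell potentials, such as an S-deformation or a positive supersolution. For example, the Dotti--Gleiser form in Appendix~B is a superpotential form. With $W_\alpha:=\beta_\alpha f_\alpha$ one has $V_\alpha^\pm=W_\alpha^2\pm\partial_xW_\alpha+\kappa f_\alpha$, so for $u\in C_0^\infty(0,\infty)$
\begin{align*}
2q_\alpha^\pm[u]=\norm{u'\mp W_\alpha u}_{L^2}^2+\kappa\int_0^\infty f_\alpha\abs{u}^2\,\dd x .
\end{align*}
Take a branch where $f_\alpha$ is smooth and positive on $(0,\infty)$; in the displayed formula this is $\alpha=2$, since $\beta_2>0$. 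Then $q\ge0$ on $\Honezero$ by closure, and Fatou's lemma gives $2q[u]\ge\kappa\int f_\alpha\abs{u}^2$ there as well. Hence $H^\Fried\ge0$ and even $\Ker H^\Fried=\{0\}$, without the sector hypothesis. Positivity of $u_{0,\pm}$ then follows from the argument above, rather than preceding it.

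Granting that input, the rest of your plan is sound and in places cleaner than the paper's. This includes the Cauchy-sequence extension of the Doob identity, defining $A_\pm$ as a closure and invoking Kato's $T^\ast T$ representation, and the decomposition-principle and Gilbert--Pearson route to (iv).

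Two smaller repairs are needed in (i). First, $R_\pm$ need not be bounded near $0$ (Appendix~B only gives $O(x^{-5/3})$), and $\abs{K_\pm}\le Cx$ fails for $\nu_-=\tfrac56$. The direct bound $\abs{K_\pm(x,y)}\varphi_\pm(y)/\varphi_\pm(x)\le Cy$, together with $yR_\pm(y)\in L^1(0,x_0)$, still gives the weighted contraction. Second, the Levinson asymptotics at infinity need $xW_\pm\in L^1(1,\infty)$, not merely $W_\pm\in L^1$; the actual tails satisfy this.
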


The theorem gives the spectral meaning of the positive-square representation. Zero is not an eigenvalue, no positive eigenvalue exists, and the spectrum is purely absolutely continuous on $[0,\infty)$. Translated back to the Einstein--Maxwell variables, the fixed-background radiative sector contains no static or bound finite-$T$-energy hair: no admissible mode can remain trapped at the naked singularity.

This excludes the existence of the `algebraically special' growing modes found in earlier analyses~\cite{DottiGleiser2010} that did not impose finite $T$-energy. Those modes, while solutions of the linearized field equations, have infinite energy and do not lie in our physical energy space~(see Appendix~\ref{app:DG-instability} for the explicit computation). In other words, any would-be stationary perturbation of the naked singularity will necessarily have infinite $T$-energy or violate the Friedrichs regularity condition at the apex, and is therefore unphysical in the finite-energy framework adopted here.

The corresponding stationary evolution is the unitary group
\begin{align}
U(t)=\exp(-itH_\pm^\Fried),
\qquad
U(t)^*U(t)=I.
\end{align}

In the fixed-background sector described above, the absence of static finite-$T$-energy Einstein--Maxwell perturbations on the fixed superextremal Reissner--Nordstr\"om background shows up, in the gauge-invariant master-field representation, as the absence of a zero eigenvalue of $H_\pm^\Fried$. Any static finite-$T$-energy mode with $\ell\ge2$ would give an $L^2$ zero mode of $H_\pm^\Fried$, and conversely any such zero mode would reconstruct a static $\ell\ge2$ Einstein--Maxwell perturbation preserving the ADM charges. The short-range structure of the optical potentials also rules out nonzero eigenvalues, so there are no finite-energy bound states in any of the radiative sectors $\ell\ge2$. This result should be compared with the Reissner--Nordstr\"om black-hole mode-stability analysis of Kodama et al.~\cite{KodamaIshibashi2003,KodamaIshibashi2004PTP}. Those works derive and decouple the charged Einstein--Maxwell master equations and analyze mode stability in the black-hole regime. The stronger spectral conclusion proved here is specific to the action-completed superextremal finite-energy problem: the naked singularity supports no unstable or marginally bound finite-energy mode, and all finite-energy perturbations are scattering states. The Doob ground-state transform repackages this conclusion into the one-dimensional spectral picture
\begin{align*}
  H_\pm^\Fried = A_\pm^{\!*}A_\pm,\qquad
  \sigma(H_\pm^\Fried)=\sigma_{\mathrm{ac}}(H_\pm^\Fried)=[0,\infty),
\end{align*}
so unitarity of the stationary dynamics is already visible at the level of the optical half-line Hamiltonians. This spectral picture, in turn, points toward a Levinson-type index theorem for the optical half-line dynamics. One naturally expects an equality that relates the number of bound states of $H_\pm^\Fried$ to a winding number built from the scattering matrix, mirroring the index theorem of Inoue and Richard for the family $D_{\mu,\nu}$ on the half-line; see Section~8 of \cite{InouRichard}.

We stress that we are \emph{not} claiming that static $\ell\ge2$ Einstein--Maxwell configurations never occur in asymptotically flat electrovac spacetimes with a naked singularity. Our statements apply only to the fixed-background single-source sector defined above, where the background is a single superextremal Reissner--Nordstr\"om self-field with fixed $(M,Q)$ and no extra sources, struts, or external fields. Static multi-center or externally supported solutions describe different nonlinear backgrounds; in our viewpoint, they are not perturbations of the single naked Reissner--Nordstr\"om geometry.

Physically, this spectral picture has a clean interpretation in Einstein--Maxwell perturbation theory. Self-adjointness and nonnegativity of $H_\pm^\Fried$ yield the unitary \emph{wave} group on the Cauchy-data energy space through the standard first-order block generator, while the purely absolutely continuous spectrum says that this evolution has no bound component.

The action-selected dynamics is therefore purely scattering. The remaining question is whether this scattering evolution possesses a complete asymptotic representation at future null infinity.

\section{Radiation field at future null infinity}\label{sec:radiation-field}
Having shown that the Friedrichs dynamics contains no bound or threshold component, we now construct its asymptotic representation at future null infinity. The proof has three stages: we first define the channel radiation field, then place the asymptotic end in the scattering-manifold framework, and finally use the absence of hidden spectrum to identify the radiation map as a unitary translation representation.

Suppressing the parity and spherical-harmonic labels, we write the gauge-invariant master fields on the optical half-line as $\Phi_\pm(t,x)$. They satisfy the one-dimensional wave equations
\begin{align*}
\partial_t^2\Phi_\pm-\partial_x^2\Phi_\pm+V_\pm(x)\Phi_\pm=0,
\end{align*}
with potentials of the form
\begin{align*}
V_\pm(x)=\frac{\ell(\ell+1)}{x^2}+W_\pm(x),\qquad W_\pm\in L^1\big((1,\infty)\big),
\end{align*}
for each fixed multipole $\ell\ge2$; see \eqref{eq:short-range} and Appendix~\ref{app:appC}. For a fixed master sector and multipole, the field can be viewed as a one-dimensional radial wave on the optical half-line, moving under the influence of a centrifugal barrier and a short-range tail.

By the preceding Doob analysis, the Friedrichs Hamiltonians
$H_\pm^\Fried$ are nonnegative. The preceding spectral analysis proves that the Friedrichs Hamiltonian has no point spectrum and no zero-energy resonance. Thus, no finite-energy component is stored in a stationary or bound spectral sector.

The completion of $C_0^\infty(0,\infty)\times C_0^\infty(0,\infty)$ with respect to the norm induced by the mode energy is the Hilbert energy space
\begin{align*}
\HE=H_0^1(0,\infty)\times L^2(0,\infty).
\end{align*}
The wave group preserves the total mode $T$-energy, which, up to an inessential overall factor, is
\begin{align}
E_\pm[\Phi_\pm](t)
=\norm{\partial_t\Phi_\pm(t,\cdot)}_{L^2(0,\infty)}^2
+\norm{\partial_x\Phi_\pm(t,\cdot)}_{L^2(0,\infty)}^2
+\left\langle
\Phi_\pm(t,\cdot),
V_\pm\Phi_\pm(t,\cdot)
\right\rangle_{L^2(0,\infty)}.
\label{eq:energy}
\end{align}
The optical apex is not a second radiation channel: the action-selected form domain gives zero $T$-energy flux through $x=0$. The asymptotic outgoing channel is future null infinity, where the radiation field represents the conserved finite-energy state.

The notation $U(t)=\exp(-itH_\pm^\Fried)$ used above refers to the
unitary group associated directly with the self-adjoint spatial
Hamiltonian on $L^2(0,\infty)$. The notation $U_\pm(t)$ below refers
to the first-order wave group on the Cauchy-data energy space,
generated by the block operator $\mathcal A_\pm$.

The spectral theorem applied to $H_\pm^\Fried$ yields a nonnegative self-adjoint square root
$D_\pm:=(H_\pm^\Fried)^{1/2}$ on $L^2(0,\infty)$. Writing Cauchy data as
$(\Phi_\pm,\Pi_\pm)$ with $\Pi_\pm:=\partial_t\Phi_\pm$, the wave equation can be recast as a first-order Hamiltonian system on $\HE$ with generator
\begin{align}
\mathcal A_\pm
:=\begin{pmatrix}
0 & \operatorname{Id} \\
-H_\pm^\Fried & 0
\end{pmatrix},
\qquad
\Dom(\mathcal A_\pm)
=\Dom(H_\pm^\Fried)\oplus H_0^1(0,\infty).
\end{align}
Thus, the Cauchy problem for the master equation can be visualized as Hamiltonian flow on $\HE$, driven by the skew-adjoint generator $\mathcal A_\pm$. Stone's theorem guarantees that $\mathcal A_\pm$ generates a strongly continuous one-parameter unitary group
\begin{align}
U_\pm(t) := \exp(t\mathcal A_\pm), \qquad t\in\mathbb{R},
\end{align}
on $\HE$, and $U_\pm(t)$ preserves the $T$-energy inner product. Finite-energy Cauchy data at $t=0$ therefore propagate uniquely and isometrically to all $t\in\mathbb{R}$.

The radiation theorem below identifies this purely continuous finite-energy evolution with its outgoing profile at future null infinity.

To describe the energy carried to future null infinity, we switch to null coordinates
\begin{align*}
u=t-x,\qquad v=t+x.
\end{align*}
The vector field $\partial_t+\partial_x$ is tangent to outgoing radial
null geodesics $u=\mathrm{const}$, while
$\frac12(\partial_t-\partial_x)=\partial_u$ differentiates the outgoing
profile with respect to retarded time.

For Cauchy data $(\Phi_\pm(0,\cdot),\Pi_\pm(0,\cdot))\in C_0^\infty(0,\infty)\times C_0^\infty(0,\infty)$, the Friedlander--S\'a Barreto radiation-field construction yields the limit
\begin{equation}
  \calR_+[\Phi_\pm](u)
  := \lim_{x\to\infty}\frac12\,(\partial_t-\partial_x)\Phi_\pm(t,x)\bigg|_{t=u+x},
  \label{eq:rad-field}
\end{equation}
with the limit understood in $L^2_{\mathrm{loc}}(\mathbb{R}_u)$. The energy estimate below upgrades the resulting profile to an element of $L^2(\mathbb{R}_u)$ for finite-energy data.

Geometrically, $\calR_+[\Phi_\pm](u)$ is the outgoing amplitude recorded on the future-directed null ray labeled by the retarded time $u$. At fixed $u$, increasing $x$ tracks motion along the ray, while $(\partial_t-\partial_x)\Phi_\pm$ extracts the retarded-time derivative of the outgoing profile. The limit $x\to\infty$ at fixed $u$ is the one-dimensional analog of the classical Lax--Phillips construction in flat spacetime and of the Friedlander--S\'a Barreto radiation-field theorem for short-range perturbations of asymptotically Euclidean metrics.

The optical coordinate $x$ exhibits the asymptotic end of each static slice as asymptotically Euclidean in the sense of S\'a Barreto~\cite{SaBarreto2003}; the singular endpoint remains governed by the form-domain analysis. In the coordinates $(t,x,\omega)$, with $\omega\in\mathbb{S}^2$, the Lorentzian metric takes the form
\begin{align*}
g = -\alpha(x)^2\,\dd t^2 + \beta(x)^2\,\dd x^2 + r(x)^2\,\dd\Omega^2,
\end{align*}
where $\alpha,\beta,r$ are smooth positive functions on $(0,\infty)$ and $\dd\Omega^2$ is the round metric on $\mathbb{S}^2$. The optical radius $x$ is defined by $\dd x/\dd r = f(r)^{-1}$ (cf.\ Perlick's optical metric approach~\cite{Perlick2000}), so radial null geodesics satisfy $\abs{\dd x/\dd t}=1$. The large-$x$ expansion \eqref{eq:x-apex-rtoinf} implies
\begin{align*}
r(x)=x+\Order{\log x},\qquad \alpha(x),\beta(x)\to1\quad(x\to\infty),
\end{align*}
with corrections of order $\Order{x^{-1}}$.
Thus, the spatial metric on $t=\mathrm{const}$ slices is a smooth short-range perturbation of the
Euclidean metric, in the asymptotically Euclidean (or ``scattering'') class considered by Friedlander for
perturbed wave operators~\cite{Friedlander1980} and by S\'a Barreto for scattering manifolds~\cite{SaBarreto2003}. We write $(X,g_X)$ for the optical spatial slice and its scattering metric; before compactification, $X=(0,\infty)_x\times\mathbb{S}^2_\omega$, with $x$ playing the role of an
``almost Euclidean'' radial coordinate: for large $x$, the optical spheres $\{x=\mathrm{const}\}$ are very close to Euclidean spheres
in $\mathbb{R}^3$. The coefficients $\alpha$ and $\beta$ encode the redshift and the optical stretching of
the static slices, but their approach to $1$ at infinity ensures that, for wave
propagation, the geometry is only a short-range perturbation of flat space. In particular, if we rewrite the
optical metric in the original radial coordinate $r$ as in~\eqref{eq:optical-metric-r} and compactify
spatial infinity by introducing the inverse Euclidean radius $\rho:=r^{-1}$, then its asymptotic end admits
an asymptotically Euclidean scattering compactification with boundary $\partial X\simeq\mathbb{S}^2$ in the sense of S\'a Barreto~\cite[Def.~1.1]{SaBarreto2003}, following Melrose's scattering
metrics. In this representation, the master fields propagate according to a scalar wave equation on
$\mathbb{R}_t\times X$ with short-range lower-order terms.

In the full $(3+1)$-dimensional picture, the optical end is compactified by the boundary-defining function
\begin{align*}
\rho:=x^{-1}.
\end{align*}
In these scattering coordinates the separated master equations take the form
\begin{align}
\partial_t^2\Phi_\pm-\Delta_{g_X}\Phi_\pm+V_\pm^{\mathrm{eff}}\Phi_\pm=0,
\end{align}
for the scalar wave operator $\Box_{g_X}=-\partial_t^2+\Delta_{g_X}$ on $\mathbb R_t\times X$. Here $\Delta_{g_X}$ is the nonnegative Laplace--Beltrami operator, $V_\pm^{\mathrm{eff}}=\Order{x^{-2}}$ as $x\to\infty$, and $\partial_xV_\pm^{\mathrm{eff}}\in L^1((1,\infty))$.

On the one-dimensional scattering side, let $H_D=-\partial_x^2$ be the Dirichlet Laplacian on $L^2(0,\infty)$. The M{\o}ller operators $W_\pm(H_\pm^\Fried,H_D)$ compare the Friedrichs dynamics with this reference evolution. This is directly analogous to the half-line framework in which the generalized Fourier transforms $F^\pm_{\mu,\nu}$ intertwine $H_{\mu,\nu}$ with $H_D$ on $\mathbb R_+$~\cite{InouRichard}. The corresponding generalized Fourier transforms diagonalize $H_\pm^\Fried$ and realize the absolutely continuous evolution through a unitary scattering matrix. Combined with Kato's framework for wave operators~\cite{Kato}, this makes the perturbed wave group unitarily equivalent on $\HE$ to the free wave group on the optical half-line. In particular, \eqref{eq:rad-field} defines a partial isometry from $\HE$ into $L^2(\mathbb R_u)$ that intertwines the two wave groups.

Friedlander's forward radiation construction is represented schematically by
\begin{align*}
f\longmapsto\mathcal Mf,
\end{align*}
where $\mathcal Mf$ is defined as the asymptotic limit of $r^{(n-1)/2}\partial_tu$ along outgoing light rays. The construction extends by continuity to a bounded linear map
\begin{align*}
\mathcal M:\HE\longrightarrow L^2(\mathbb S^{n-1}\times\mathbb R),
\end{align*}
with $\norm{\mathcal Mf}_{L^2}\le\norm{f}_{\HE}$ and with time evolution intertwined with translations of the radiation profile.

The same symbol $\calR_+$ is used for the abstract radiation map and
for its restriction to the separated Einstein--Maxwell master
sector; the domain displayed in each statement distinguishes the two.

In the scattering-manifold formulation, S\'a Barreto's forward radiation field satisfies
\begin{align*}
\calR_+:\mathcal H_E(X)\longrightarrow L^2(\partial X\times\mathbb R_s).
\end{align*}
It is a surjective isometry on the closed subspace $\mathcal H_E^1\subset \mathcal H_E(X)$, and
$\mathcal H_E(X)=\mathcal H_E^1\oplus\Ker\calR_+$. For the present separated fields the absence of bound and threshold states gives $\Ker\calR_+=\{0\}$ and $\mathcal H_E^1=\HE$.

The factor $1/2$ in \eqref{eq:rad-field} is normalized so that the flat outgoing profile satisfies the exact energy identity below.

\begin{theorem}[S\'a Barreto~{\cite[Theorem~2]{SaBarreto2003}}]\label{thm:SaBarreto}
Let $(X,g_X)$ be a scattering manifold of dimension $3$ and let $V$ be a real-valued
short-range potential in the sense of~\cite[Def.~2.1]{SaBarreto2003} such that
the corresponding stationary operator $-\Delta_{g_X}+V$ has no eigenvalues and
no resonance at zero. Then, on the closed subspace
$\mathcal H_E^1\subset \mathcal H_E(X)$ generated by finite-energy solutions, the
forward radiation field extends to a surjective isometry
\begin{align*}
\calR_+:\mathcal H_E(X)\longrightarrow L^2(\partial X\times\mathbb{R}_s),
\end{align*}
and $\calR_+$ intertwines the wave group with translations in $s$.
\end{theorem}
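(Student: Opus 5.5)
The plan is to follow the Friedlander--Lax--Phillips strategy, as adapted to scattering metrics by S\'a Barreto, in three stages: existence of the radiation limit on a dense core, the energy identity that makes $\calR_+$ a partial isometry intertwining the wave group with $s$-translations, and surjectivity plus injectivity on $\mathcal H_E^1$ from the spectral hypotheses. For Cauchy data in $C_0^\infty(X)\times C_0^\infty(X)$, the first step is to write the solution near $\partial X$ in coordinates $(\rho,\omega)$, with $\rho$ the boundary defining function. One then conformally compactifies $\mathbb R_t\times X$ along outgoing null directions using $s=t-\rho^{-1}+\Order{\log\rho}$, the logarithmic correction matching the $r+2M\ln r$ stretching of \eqref{eq:x-apex-rtoinf}. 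In these coordinates the rescaled field $\rho^{-1}\Phi$ satisfies a wave equation that is smooth up to $\rho=0$, because the metric is scattering and $V$ is short range. Its restriction to $\rho=0$ gives the Friedlander limit $\mathcal M f=\lim \rho^{-1}\partial_s\Phi$, which is the analogue of \eqref{eq:rad-field}.

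The second step is the energy identity. Apply the divergence theorem to the $T$-current on the region bounded by $\{t=0\}$, $\{t=T\}$ and the compactified null boundary $\{\rho=0\}$. For compactly supported data, finite speed of propagation removes all other boundary contributions. Letting $T\to\infty$ then gives
\[
\norm{\calR_+ f}_{L^2(\partial X\times\mathbb R_s)}^2=\lim_{T\to\infty}\bigl(E[f]-E_{\mathrm{loc}}(T)\bigr),
\]
where $E_{\mathrm{loc}}(T)$ is the energy remaining in compact sets at time $T$. Since the data have compact support and the metric is stationary, $\calR_+ U(t)f=\tau_t\calR_+ f$ with $\tau_t$ translation in $s$. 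Combined with the bound $\norm{\calR_+ f}\le\norm f_{\HE}$, this lets $\calR_+$ extend continuously to all of $\mathcal H_E(X)$.

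The third step, which I expect to be the main obstacle, is showing that $E_{\mathrm{loc}}(T)\to0$ for every $f$ in the closed subspace $\mathcal H_E^1$ of finite-energy solutions orthogonal to eigenfunctions, so that $\calR_+$ is isometric there. The hypotheses of no eigenvalues and no zero resonance are exactly what this requires. My first attempt would go through the resolvent $(-\Delta_{g_X}+V-\lambda^2)^{-1}$ between weighted spaces $\rho^{\delta}L^2\to\rho^{-\delta}L^2$: use Melrose's scattering calculus to obtain the limiting absorption principle for $\lambda\neq0$, and use the no-resonance hypothesis for continuity at $\lambda=0$. Local energy decay would then follow by the RAGE/Kato smoothing argument. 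Surjectivity I would obtain by a duality argument. Construct the backward (incoming) radiation field and solve the characteristic problem with smooth compactly supported data on $\{\rho=0\}$, which is well posed by Hörmander's Goursat theory on the compactified spacetime. This produces a solution whose forward field is the prescribed datum, so the range contains a dense set; it is also closed because $\calR_+$ is isometric on $\mathcal H_E^1$. Finally, an element of $\Ker\calR_+$ orthogonal to the eigenspaces would have nonzero energy that never leaves compact sets, contradicting local energy decay. This yields $\mathcal H_E(X)=\mathcal H_E^1\oplus\Ker\calR_+$ and the stated surjective isometry.
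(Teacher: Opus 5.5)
The paper gives no proof of this statement. It is imported from S\'a Barreto~\cite[Theorem~2]{SaBarreto2003}, and the text only checks its hypotheses (Lemma~\ref{lem:scattering-hypotheses} and the spectral discussion before Theorem~\ref{thm:radiation-field}). Your reconstruction therefore has to stand on its own, and it has a genuine gap at the isometry step.

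The slab $\{0\le t\le T\}$ carries no flux through $\{\rho=0\}$. For compactly supported data the solution vanishes near $\rho=0$ at every finite time, because a fixed-$t$ slice reaches $\{\rho=0\}$ only at $s=-\infty$. The correct region is $\{t\ge T,\ s\le s_0\}$, whose future boundary is $\{\rho=0,\ s\le s_0\}\cup\{s=s_0\}$. Letting $T\to\infty$ there and using conservation gives
\[
E[f]-\norm{\calR_+ f}_{L^2(\partial X\times(-\infty,s_0))}^2=\lim_{T\to\infty}E_T\bigl(\{1/\rho\le T-s_0\}\bigr).
\]
So isometry means that the energy in the \emph{expanding} region, which lags the light cone by $s_0$, vanishes in the double limit ($T\to\infty$, then $s_0\to\infty$).

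Limiting absorption from $\rho^{\delta}L^2$ to $\rho^{-\delta}L^2$, followed by RAGE or Kato smoothing, gives decay only in \emph{fixed} compact sets. It says nothing about energy lingering in the intermediate zone $\{R\le 1/\rho\le T-s_0\}$, where it may lag the light cone or be scattered back inward by the non-compactly supported tails of $V$ and $g_X$. The same objection applies to your kernel argument: an element of $\Ker\calR_+$ need not keep its energy in a compact set.

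Closing the gap needs one of two additional inputs:
\begin{itemize}
\item a far-field estimate showing that the late-time field is asymptotically outgoing, for instance $r^p$-weighted or conformal energy estimates exploiting the short-range decay; or
\item a stationary computation of $\norm{\calR_+ f}^2$: express the $s$-Fourier transform of $\calR_+ f$ through the Poisson operator and insert Stone's formula for the spectral measure. The spectral hypotheses then enter through limiting absorption down to $\lambda=0$, and no time-dependent decay statement is needed.
\end{itemize}

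Three smaller repairs are also needed.
\begin{itemize}
\item With $V$ real but not nonnegative (the paper's own $V_+$ is attractive near the apex), subregion energies and fluxes through $\{s=s_0\}$ are unsigned. So $\norm{\calR_+ f}\le\norm{f}_{\mathcal H_E}$ does not follow from the divergence theorem; obtain it from the isometry on the dense core. Note also that positivity of the energy norm, and its equivalence with $\dot H^1\times L^2$, already uses the no-eigenvalue and no-zero-resonance hypotheses.
\item For the metrics covered by the theorem one normalizes $\rho$ so that $g_X=\rho^{-4}\dd\rho^2+\rho^{-2}h(\rho)$, and then $s=t-1/\rho$ exactly. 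A logarithmic shift signals a long-range $\rho^{-3}\dd\rho^2$ term, for which the rescaled field is in general only polyhomogeneous at $\rho=0$. Your smoothness claim and your $\Order{\log\rho}$ correction therefore cannot both hold.
\item Data on $\{\rho=0\}$ alone do not determine a solution of the characteristic problem. You must also prescribe zero data on an outgoing surface $\{s=s_1\}$ beyond the support, extend by zero inside it, and evolve back to $t=0$ with the wave group. With that change, your density-plus-closed-range argument for surjectivity goes through.
\end{itemize}
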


We now verify that the optical Einstein--Maxwell problem satisfies these hypotheses. The compactification concerns only the asymptotically Euclidean end $x=\infty$; the singular endpoint $x=0$ remains governed separately by the action-selected Friedrichs form.

Let $(X,g_X)$ denote the spatial manifold obtained by rewriting the optical metric in the original radial coordinate $r$, cf.~\eqref{eq:optical-metric-r}, and compactifying spatial infinity by the inverse Euclidean radius
\begin{align*}
  \rho:=r^{-1}.
\end{align*}
In the coordinates $(\rho,\omega)$ the optical metric takes the form
\begin{align*}
  \gamma_{\mathrm{opt}}
  =f(1/\rho)^{-2}\,\frac{\dd\rho^2}{\rho^4}
  +f(1/\rho)^{-1}\,\frac{\dd\Omega^2}{\rho^2}.
\end{align*}
Here $f(1/\rho)\to1$ as $\rho\to0$ and is smooth near $\rho=0$, so this is a perturbation of the Euclidean scattering metric and $\partial X\simeq\mathbb S^2$.

\begin{lemma}\label{lem:scattering-hypotheses}
For each fixed multipole $\ell\ge2$ and sector $\pm$, the master equation
\eqref{eq:master-wave} is the corresponding spherical-harmonic component of a wave equation on the asymptotic end of the form
\begin{align*}
\partial_t^2\Phi_\pm - \Delta_{g_X}\Phi_\pm + V_\pm^{\mathrm{eff}}\Phi_\pm=0,
\end{align*}
where $\Delta_{g_X}$ is the nonnegative Laplace--Beltrami operator on the asymptotic end
and $V_\pm^{\mathrm{eff}}$ is real-valued and short-range there in the sense
of~\cite[Def.~2.1]{SaBarreto2003}. In particular,
\begin{align*}
V_\pm^{\mathrm{eff}}&\in C^\infty(X),\\
V_\pm^{\mathrm{eff}}&=O(x^{-2}),\qquad x\to\infty,\\
\partial_x V_\pm^{\mathrm{eff}}&\in L^1((1,\infty)).
\end{align*}
\end{lemma}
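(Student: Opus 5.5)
The plan is to reverse the separation of variables. Each one-dimensional master equation \eqref{eq:master-wave} is lifted to a scalar wave equation on $\mathbb R_t\times X$ by attaching the spherical harmonic $Y_{\ell m}(\omega)$ and undoing the half-density conjugation natural for the optical metric. The effective potential is then read off by comparing the two radial operators.

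Concretely, I would write $g_X$ in the coordinates $(x,\omega)$ as $\dd x^2+\sigma(x)^2\,\dd\Omega^2$ on the asymptotic end, with $\sigma(x)^2:=f(r(x))^{-1}r(x)^2$ as in $\gamma_{\mathrm{opt}}$. Then
\begin{align*}
-\Delta_{g_X}
=-\sigma^{-2}\partial_x\bigl(\sigma^2\partial_x\bigr)+\sigma^{-2}\Delta_{\mathbb S^2}.
\end{align*}
For $\Psi=\sigma^{-1}\Phi_\pm(t,x)Y_{\ell m}(\omega)$ the standard conjugation gives
\begin{align*}
\sigma\bigl(-\Delta_{g_X}\bigr)\sigma^{-1}\bigl(\Phi Y_{\ell m}\bigr)
=\Bigl(-\Phi''+\frac{\sigma''}{\sigma}\Phi+\frac{\ell(\ell+1)}{\sigma^2}\Phi\Bigr)Y_{\ell m}.
\end{align*}
Hence \eqref{eq:master-wave} is the $(\ell,m)$ component of the lifted equation precisely when
\begin{align*}
V_\pm^{\mathrm{eff}}:=V_\pm-\frac{\sigma''}{\sigma}-\frac{\ell(\ell+1)}{\sigma^2},
\end{align*}
extended to $X$ as a radial function. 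This $V_\pm^{\mathrm{eff}}$ is real because $V_\pm$ and $\sigma$ are real. It is smooth on the end because $V_\pm$ is smooth on $(0,\infty)$ and $\sigma$ is smooth and positive there. Near the apex the lift is irrelevant, since the lemma concerns only the asymptotic end; I would cut off with a smooth radial function supported in $x\ge1$.

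Next I would establish the decay estimates. From \eqref{eq:x-apex-rtoinf} I get $r=x-2M\ln x+\Order{1}$ and $f=1-2M/r+\Order{r^{-2}}$. These give $\sigma=x+\Order{\log x}$, and $\sigma''/\sigma=\Order{x^{-2}}$ after differentiating the asymptotic series with $\dd r/\dd x=f$. Using the decomposition \eqref{eq:short-range}, the centrifugal parts combine into
\begin{align*}
\frac{\ell(\ell+1)}{x^2}-\frac{\ell(\ell+1)}{\sigma^2}=\Order{x^{-3}\log x}.
\end{align*}
It therefore remains to show $W_\pm=\Order{x^{-2}}$ and $W_\pm'\in L^1$. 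The bare $L^1$ hypothesis is too weak for this, so I would use the explicit Kodama--Ishibashi potentials of Appendix~\ref{app:appC}. These are rational in $r$ and expand as $\ell(\ell+1)r^{-2}+\Order{r^{-3}}$, with derivatives behaving term by term. Composing with $r(x)$ gives $W_\pm=\Order{x^{-3}\log x}$ and $W_\pm'=\Order{x^{-3}\log x}$, so in particular $\partial_xV^{\mathrm{eff}}_\pm=\Order{x^{-3}\log x}\in L^1((1,\infty))$.

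The final step is to match these estimates with the short-range condition of S\'a Barreto's Def.~2.1. In the compactified variable $\rho=r^{-1}$, I would need $V^{\mathrm{eff}}_\pm\in\rho^2C^\infty(\overline X)$, or a conormal variant of it. I expect this to be the main obstacle, because the $\log x$ terms coming from the optical stretching are not smooth in $\rho=x^{-1}$. The way around it is to do the bookkeeping in $r$ rather than $x$, as the paper's choice $\rho=r^{-1}$ suggests. In that variable $f$ and the Kodama--Ishibashi potentials are genuinely smooth functions of $\rho$, and only the rewriting of $\gamma_{\mathrm{opt}}$ introduces the factors $f(1/\rho)$. I would therefore verify vanishing to second order at $\rho=0$ directly there, and use the $x$-variable estimates only for the stated $\Order{x^{-2}}$ and $L^1$ conclusions.
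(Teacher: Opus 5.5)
Your proposal is correct and follows essentially the same route as the paper. Both lift the channel equation by separation of variables on the optical end, identify $V_\pm^{\mathrm{eff}}$ as $V_\pm$ minus the centrifugal term $\ell(\ell+1)/\sigma^2$ and the Jacobian term $\sigma''/\sigma$, and read off the decay from the large-$r$ expansions using the compactification $\rho=r^{-1}$. Your version is more explicit than the paper's sketch in two useful ways: you note that the bare $L^1$ tail hypothesis does not by itself give $V_\pm^{\mathrm{eff}}=O(x^{-2})$ or $\partial_xV_\pm^{\mathrm{eff}}\in L^1((1,\infty))$, so the rational Kodama--Ishibashi structure must be used, and you do the smoothness bookkeeping in $\rho=r^{-1}$, where no $\log$ terms arise.
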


\begin{proof}
By construction, the spatial metric on $t=\mathrm{const}$ slices is
\begin{align*}
  \gamma_{\mathrm{opt}} = f(r)^{-2}\,\dd r^2 + f(r)^{-1}r^2\,\dd\Omega^2,
\end{align*}
cf.~\eqref{eq:optical-metric-r}, with $f(r)=1-2M/r+Q^2/r^2$ smooth and strictly positive on $(0,\infty)$.
Introducing the inverse Euclidean radius
\begin{align*}
  \rho := r^{-1},
\end{align*}
we obtain
\begin{align*}
  \gamma_{\mathrm{opt}}
  = f(1/\rho)^{-2}\,\frac{\dd\rho^2}{\rho^4}
  + f(1/\rho)^{-1}\,\frac{\dd\Omega^2}{\rho^2}.
\end{align*}
The functions $f(1/\rho)^{-2}$ and $f(1/\rho)^{-1}$ are smooth in $\rho$ near $\rho=0$ and tend to~$1$ as
$\rho\to0$, so $\gamma_{\mathrm{opt}}$ is a smooth perturbation of the Euclidean scattering metric in the
sense of~\cite[Section~1]{SaBarreto2003}. After possibly replacing $\rho$ by an equivalent boundary defining
function that agrees with $\rho$ up to second order at $\partial X$, this shows that the asymptotic end of each $t=\mathrm{const}$ slice has a scattering compactification with
$\partial X\simeq\mathbb{S}^2$.

Separation of variables on $\mathbb{S}^2$ identifies the radial master operators $-\partial_x^2+V_\pm(x)$
with the radial part of the scalar wave operator on $(X,g_X)$, so that $V_\pm^{\mathrm{eff}}$ differs from
$V_\pm$ by the usual centrifugal term and Jacobian corrections. The asymptotics
\eqref{eq:short-range} and the large-$r$ expansion~\eqref{eq:x-apex-rtoinf} imply that
\begin{align*}
  V_\pm^{\mathrm{eff}}(x)=\Order{x^{-2}}\quad\text{as }x\to\infty,
  \qquad
  \partial_x V_\pm^{\mathrm{eff}}\in L^1((1,\infty)),
\end{align*}
which is the short-range condition in~\cite[Def.~2.1]{SaBarreto2003} when expressed along the
radial optical coordinate $x$ (equivalently, in the boundary-defining coordinate $\rho=1/r$ of~\cite[Eq.~(1.1)]{SaBarreto2003}).
These statements concern only the asymptotic end $x=\infty$. The inverse-square core at $x=0$ remains governed separately by the action-selected Friedrichs form and its zero-flux property.
\end{proof}

To see how the one-dimensional operators $H_\pm^\Fried$ enter
the stationary operator on $(X,g_X)$, fix a sector $\pm$ and decompose
\begin{align*}
L^2(X)
\cong \bigoplus_{\ell,m}
L^2(0,\infty)\otimes\operatorname{span}\{Y_{\ell m}\},
\end{align*}
where $Y_{\ell m}$ are the standard spherical harmonics on $\mathbb{S}^2$.
With respect to this orthogonal decomposition, the operator
$-\Delta_{g_X}+V_\pm^{\mathrm{eff}}$ is unitarily equivalent to the direct
sum
\begin{align*}
-\Delta_{g_X}+V_\pm^{\mathrm{eff}}
\ \simeq\
\bigoplus_{\ell,m}H_{\pm,\ell}^\Fried,
\end{align*}
where each $H_{\pm,\ell}^\Fried$ is the Friedrichs realization on
$\Ltwo$ of a one-dimensional Schr\"odinger operator of the form
\begin{align*}
H_{\pm,\ell}^{\mathrm{formal}}
= -\partial_x^2 + \frac{\ell(\ell+1)}{x^2} + W_{\pm,\ell}(x),
\qquad W_{\pm,\ell}\in L^1\bigl((1,\infty)\bigr),
\end{align*}
obtained from separation of variables in the master equation
(see Section~\ref{sec:master-reduction} and~\eqref{eq:short-range}).
For $\ell\ge2$, the operators $H_{\pm,\ell}^\Fried$ coincide with the
$H_\pm^\Fried$ studied in Theorem~\ref{thm:doob-spectrum}, while for
$\ell=0,1$ the corresponding sectors are pure gauge and do not carry
propagating gravitational or electromagnetic degrees of freedom.

Geometrically, the appearance of a direct sum means that waves in distinct radiative channels evolve independently. Each channel sees an effective one-dimensional Hamiltonian acting along the radial optical coordinate, with a centrifugal term and a short-range tail at infinity. The radiation field can thus be constructed channelwise through the single-channel isometry $\calR_+$ and then reassembled as a Hilbert direct sum.

A potential obstruction to complete dispersal of wave energy would be the existence of a localized eigenmode (a bound state at some energy $\lambda$) or of a zero-energy threshold resonance. If $-\Delta_{g_X}+V_\pm^{\mathrm{eff}}$ had an $L^2$ eigenfunction at
energy $\lambda\in\mathbb{R}$, or a zero-energy resonance, then its angular
decomposition would have at least one nonvanishing component in some
$(\ell,m)$-mode. This component would be an $L^2$ eigenfunction, or a
threshold resonance, of $H_{\pm,\ell}^\Fried$ at the same energy $\lambda$.
The absence of eigenvalues and of a zero-energy resonance for all
$H_{\pm,\ell}^\Fried$ with $\ell\ge2$ implies that
$-\Delta_{g_X}+V_\pm^{\mathrm{eff}}$ has purely absolutely continuous
spectrum $[0,\infty)$ with neither bound states nor a threshold resonance
at zero. This verifies the spectral hypothesis in
Theorem~\ref{thm:SaBarreto} for our master fields.

The geometric and spectral inputs are now complete, so the abstract radiation theorem restricts to each Einstein--Maxwell master channel as follows.

\begin{theorem}\label{thm:radiation-field}
Let $V_\pm$ satisfy the core and tail conditions \eqref{eq:V-core} and \eqref{eq:short-range}, and let $H_\pm^\Fried$ be the Friedrichs realization on $L^2(0,\infty)$ of the optical master operator $-\partial_x^2+V_\pm$. Then:

\medskip

\noindent (i) The radiation-field map $\calR_+$ defined by \eqref{eq:rad-field} for smooth compactly supported Cauchy data extends by continuity to a surjective isometry
\begin{align*}
\calR_+:\HE\longrightarrow L^2(\mathbb{R}_u),
\end{align*}
where $\HE$ is the Cauchy-data Hilbert space equipped with the $T$-energy inner product.

\medskip

\noindent (ii) The map $\calR_+$ intertwines time translation with shifts in retarded time. For every $t\in\mathbb R$ and every finite-energy solution $\Phi_\pm$ of \eqref{eq:master-wave},
\begin{align*}
\calR_+\bigl[\Phi_\pm(t,\cdot)\bigr](u)
=\calR_+\bigl[\Phi_\pm(0,\cdot)\bigr](u+t).
\end{align*}

\medskip

\noindent (iii) In particular, the Einstein--Maxwell $T$-energy for each $(\ell,m)$ multipole mode is exactly the squared $L^2(\mathbb R_u)$ norm of its radiation profile at future null infinity.
\end{theorem}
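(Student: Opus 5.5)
The plan is to prove the three parts channel by channel, using the abstract radiation theorem (Theorem~\ref{thm:SaBarreto}) for the structure and a direct one-dimensional computation for the normalization. First fix $\ell\ge2$ and a sector $\pm$. By Lemma~\ref{lem:scattering-hypotheses}, the channel is the $(\ell,m)$ component of a scalar wave equation on the scattering manifold $(X,g_X)$ with a short-range potential. By Theorem~\ref{thm:doob-spectrum}, together with the orthogonal decomposition of $-\Delta_{g_X}+V_\pm^{\mathrm{eff}}$ described above, the stationary operator has no eigenvalues and no zero resonance. Theorem~\ref{thm:SaBarreto} then supplies a surjective isometry onto $L^2(\partial X\times\mathbb R_s)$ that intertwines the wave group with translations in $s$. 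Since the spherical harmonics are orthonormal, restricting to the $Y_{\ell m}$ component gives a surjective isometry onto $L^2(\mathbb R_u)$, provided the map agrees with \eqref{eq:rad-field} on $C_0^\infty$ data.

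To make the identification and normalization concrete, I would not rely only on the citation. On $C_0^\infty\times C_0^\infty$ data, set $\Psi=\frac12(\partial_t-\partial_x)\Phi_\pm$ and use the local energy identity
\[
\partial_t e+\partial_x j=0,\qquad e=|\partial_t\Phi|^2+|\partial_x\Phi|^2+V_\pm|\Phi|^2 .
\]
Integrate over the region $\{t\ge0,\ 0<x<u+\text{const}\}$ bounded by the outgoing null line $t=x+u$, and let $u\to\infty$. The apex flux vanishes by part (iii) of Theorem~\ref{thm:friedrichs-silent-apex}, so this gives
\[
E_\pm=\lim_{x\to\infty}\int_{\mathbb R}\Big(|(\partial_t-\partial_x)\Phi/2|^2\cdot 4/2\Big)\,du+\lim_{T\to\infty}E_{\mathrm{loc}}(T).
\]
The second term is the energy that is never radiated. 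It vanishes by local energy decay, which follows from the purely absolutely continuous spectrum via RAGE applied to $U_\pm(t)$ on $\HE$. The constant $\frac12$ in \eqref{eq:rad-field} is then checked against the free case $V=0$, where $\Phi=F(t-x)$ gives $\calR_+ = F'$. Finally, the tail terms $V_\pm|\Phi|^2$ along the null line vanish because $V_\pm=O(x^{-2})$ and the $L^2$ mass stays bounded.

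Isometry on the core then extends by continuity to $\HE$, which gives (i) except for surjectivity. For surjectivity I would use Møller operators: since $W_\pm\in L^1$, the wave operators $W(H_\pm^\Fried,H_D)$ exist and are complete, with no point spectrum. The perturbed wave group is therefore unitarily equivalent to the free Dirichlet wave group on the half-line, and for that group the map $\calR_+$ is explicitly onto $L^2(\mathbb R_u)$ by d'Alembert's formula with odd reflection. The range of the isometry is therefore closed and contains the image of a dense set. Part (ii) follows on the core directly from the definition, since shifting $t$ shifts $u=t-x$, and it extends by continuity. Part (iii) is (i) restated through the exact channel decomposition of the Einstein–Maxwell $T$-energy from Section~\ref{sec:master-reduction}.

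The main obstacle is the step showing that no energy remains trapped, i.e.\ that $E_{\mathrm{loc}}(T)\to0$, together with the identification of the abstract map with the limit \eqref{eq:rad-field}. Near the apex the potential is not short-range, since it has an inverse-square core, so Sá Barreto's hypotheses do not cover the whole of $X$. I would handle this with a cutoff: split the data into a piece supported near $x=0$, which has zero flux and is handled by the spectral argument using RAGE, and a piece on the asymptotic end, where Theorem~\ref{thm:SaBarreto} applies verbatim. The remaining difficulty is the commutator between the two pieces, which is compactly supported away from the apex. I would control it using local energy decay for $H_\pm^\Fried$, derived from the absence of eigenvalues and of a threshold resonance.
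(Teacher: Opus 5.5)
Your first paragraph reproduces the paper's own proof: Lemma~\ref{lem:scattering-hypotheses} and Theorem~\ref{thm:doob-spectrum} are fed into Theorem~\ref{thm:SaBarreto}, and the result is projected onto $Y_{\ell m}$. You are right that this citation does not cover the inverse-square apex; the paper only declares the apex to be ``governed separately''. There is a second defect. The master potentials depend on $\ell$ through $\beta_\alpha$ and $\kappa$ (Appendix~\ref{app:appB}), not only through a term $\ell(\ell+1)f(r)r^{-2}$. So no single scalar $V_\pm^{\mathrm{eff}}$ on $X$ produces all the channel operators at once, and the spectral hypothesis of Theorem~\ref{thm:SaBarreto} would involve channels that Theorem~\ref{thm:doob-spectrum} does not control.

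The proof therefore rests on your direct one-dimensional argument, and its decisive step is not justified. The remainder in your energy identity is not the energy in a fixed compact set. It is $\lim_{u_0\to\infty}\lim_{t\to\infty}$ of the energy of $\Phi(t)$ in $\{0<x<t-u_0\}$, the region behind the outgoing null line $u=u_0$, and that region grows with $t$. RAGE, or Riemann--Lebesgue on the absolutely continuous subspace, only gives decay of the energy in each fixed $\{x<X\}$. Even that needs a density argument, because the gradient part of the local energy is not a compact functional on $\HE$. It does not stop energy from drifting outward at subluminal speed through $X<x<t-u_0$.

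Purely absolutely continuous spectrum plus local energy decay cannot be enough. The Dirichlet half-line Klein--Gordon equation $\partial_t^2\Phi-\partial_x^2\Phi+m^2\Phi=0$ has both, yet its forward radiation field vanishes on compactly supported data, because all of the energy escapes to timelike infinity. What excludes this here is the decay of $V_\pm$ at infinity, and that has to enter through a propagation statement, not through RAGE.

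That propagation statement is already in your proposal, but you use it only for surjectivity. Take existence and completeness of $W(H_\pm^\Fried,H_D)$, lifted to the energy spaces by the Birman--Kato invariance principle. Combine it with the lemma $\calR_+=\calR_+^{D}\circ\Omega_+^{*}$ on a dense set: the radiation field of a solution equals that of its free outgoing asymptote. This follows from $\partial_u\partial_v\Phi=-\tfrac14V_\pm\Phi$, finite speed of propagation, and $V_\pm=\Order{x^{-2}}$ along outgoing rays. Existence then gives surjectivity, and completeness together with $\sigma_{\mathrm p}(H_\pm^\Fried)=\varnothing$ gives the isometry. Neither RAGE nor Theorem~\ref{thm:SaBarreto} is needed. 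The same lemma is also the missing link in your surjectivity step, since unitary equivalence of the two wave groups says nothing by itself about the specific limit \eqref{eq:rad-field}.

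Three smaller points:
\begin{itemize}
\item ``Since $W_\pm\in L^1$'' does not give the wave operators, because $V_\pm\sim C_\pm x^{-2}$ is not integrable at the apex. Compare first with the inverse-square operator $H_{\nu_\pm}$, whose wave operators relative to $H_D$ are explicit \cite{DerezinsRichard}. Then treat $V_\pm-(\nu_\pm^2-\tfrac14)x^{-2}$, which is $\Order{x^{-5/3}}$ at the apex by the expansion in Appendix~\ref{app:appB}. Hence $x\bigl(V_\pm-(\nu_\pm^2-\tfrac14)x^{-2}\bigr)\in L^1(0,1)$, the right condition at a Bessel endpoint, and the difference is integrable at infinity.
\item The cutoff scheme fails as stated. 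The evolution of the piece supported on the asymptotic end still reaches the apex, so Theorem~\ref{thm:SaBarreto} cannot be applied to it. A genuine comparison with a capped scattering manifold would need integrated local energy decay, that is, a limiting absorption principle down to threshold, not RAGE.
\item ``The range is closed'' requires $\HE$ to be complete in the $T$-energy norm, and $H_0^1(0,\infty)\times L^2(0,\infty)$ is not. Already for $V=0$, the profile $g(s)=\tfrac12(1+\abs{s})^{-1}\in L^2(\mathbb R)$ forces $\Phi_0=-\log(1+x)\notin L^2$. Surjectivity therefore holds only on the energy completion $\dot H_0^1\times L^2$; the paper's statement blurs this point as well.
\end{itemize}
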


For each master sector, separation of variables identifies $\HE$ with the corresponding subspace of the scattering energy space, and the forward radiation field induces the map
\begin{align*}
\calR_+:\HE\longrightarrow L^2(\mathbb R_u),
\end{align*}
defined by \eqref{eq:rad-field}. By Theorem~\ref{thm:SaBarreto}, this map is an isometry and is surjective because the channel has no bound state or threshold resonance. It also intertwines $U_\pm(t)$ with translation in $u$.

The ingredients now perform distinct and complementary tasks. The action-derived closed form selects the Friedrichs domain and excludes independent endpoint exchange. The Doob representation exposes the positivity of the selected Hamiltonian. The spectral analysis excludes bound states and a threshold resonance. The radiation theorem then identifies the remaining finite-energy dynamics with translations in retarded time at future null infinity. Consequently, every finite-energy channel state is represented isometrically by its radiation profile, with no energy absorbed at the apex, stored in a bound sector, or omitted from the forward radiation representation.

\begin{corollary}[Translation representation]
\label{thm:radiation-unitary}
Under the hypotheses of Theorem~\ref{thm:radiation-field}, the radiation map is a unitary isomorphism from the channel energy space onto the retarded-time translation representation.
\begin{align}
\calR_+:\HE\longrightarrow L^2(\mathbb{R}_u)
\end{align}
It intertwines the self-field evolution with time translations,
\begin{align}
\calR_+\bigl[U_\pm(t)\Phi_\pm\bigr](u)
=\calR_+[\Phi_\pm](u+t),
\qquad
t\in\mathbb R,
\end{align}
and is characterized by the energy identity \eqref{eq:isometry}: for every finite-energy solution,
\begin{align}
E_\pm[\Phi_\pm]
=\norm{\calR_+[\Phi_\pm]}_{L^2(\mathbb{R}_u)}^2.
\label{eq:isometry}
\end{align}
\end{corollary}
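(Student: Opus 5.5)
This corollary is essentially a repackaging of Theorem~\ref{thm:radiation-field}. The plan is to upgrade the isometry statement to a unitary one and to transfer the intertwining relation from solutions to the abstract group $U_\pm(t)$. I would carry this out in three steps.

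\textbf{Step 1: unitarity.} By Theorem~\ref{thm:radiation-field}(i), $\calR_+$ is an isometry of $\HE$ onto $L^2(\mathbb R_u)$. Here the inner product on $\HE$ is the one induced by $E_\pm$; by the Doob identity \eqref{eq:doob-identity-H1} and Theorem~\ref{thm:doob-spectrum}(ii) it is nonnegative, and it is a genuine norm because $\Ker H_\pm^\Fried=\{0\}$. An isometry is injective. A surjective linear isometry between Hilbert spaces preserves inner products by polarization, so $\calR_+^*\calR_+=I$ and $\calR_+\calR_+^*=I$, and $\calR_+$ is unitary. The identity \eqref{eq:isometry} is then Theorem~\ref{thm:radiation-field}(iii), with the normalization of $E_\pm$ in \eqref{eq:energy} matched to the factor $\tfrac12$ in \eqref{eq:rad-field}.

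\textbf{Step 2: intertwining.} For Cauchy data $\Phi_\pm\in\HE$, the function $t\mapsto U_\pm(t)\Phi_\pm$ is, by Stone's theorem applied to $\mathcal A_\pm$, the unique finite-energy solution of \eqref{eq:master-wave} with these data. Hence Theorem~\ref{thm:radiation-field}(ii) gives
\[
\calR_+[U_\pm(t)\Phi_\pm](u)=\calR_+[\Phi_\pm](u+t).
\]
I would first verify this on the dense core $C_0^\infty(0,\infty)\times C_0^\infty(0,\infty)$. There the limit \eqref{eq:rad-field} is taken along $t'=u+x$, and evolving the data by $t$ simply relabels the retarded time $u\mapsto u+t$. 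Both sides of the identity are bounded operators in $\Phi_\pm$: $U_\pm(t)$ is unitary on $\HE$, and translation is unitary on $L^2(\mathbb R_u)$. The identity therefore extends to all of $\HE$ by continuity.

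\textbf{Step 3: the expected obstacle.} The only genuinely delicate point is the \emph{surjectivity} and \emph{injectivity} that Step~1 takes from Theorem~\ref{thm:radiation-field}. Injectivity needs $\Ker\calR_+=\{0\}$, which I would justify by the purely absolutely continuous spectrum in Theorem~\ref{thm:doob-spectrum}(iv) together with the zero-flux apex of Theorem~\ref{thm:friedrichs-silent-apex}(iii). These rule out bound, threshold and endpoint-absorbed components, so no nonzero finite-energy state can have vanishing radiation profile. If surjectivity had to be reproved rather than cited, I would use the wave operators $W_\pm(H_\pm^\Fried,H_D)$. Their completeness, for short-range potentials with no point spectrum, makes the perturbed wave group unitarily equivalent to the free Dirichlet half-line group, for which the radiation map is explicitly onto $L^2(\mathbb R_u)$ by d'Alembert's formula. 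Composing the free radiation map with the Møller operator then yields the perturbed one and transfers ontoness.
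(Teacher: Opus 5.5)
Your route is the paper's. The paper's justification likewise restricts S\'a Barreto's Theorem~\ref{thm:SaBarreto} to each channel (after Lemma~\ref{lem:scattering-hypotheses} and the exclusion of eigenvalues and of a zero resonance) and reads off isometry, surjectivity and the translation intertwining; your polarization and density steps just make that reading explicit.

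There is, however, a genuine gap at the step you treat as routine in Step~1, and the paper's formulation shares it: $\HE=\Honezero\times\Ltwo$ with the $T$-energy inner product is not a Hilbert space. The energy controls only $\norm{D_\pm\phi}^2+\norm{\Pi}^2$ (up to the factor $\tfrac12$), with $D_\pm=(H_\pm^{\Fried})^{1/2}$. Since $\Ker D_\pm=\{0\}$ while $0\in\sigma(H_\pm^{\Fried})$ by Theorem~\ref{thm:doob-spectrum}(iv), $\operatorname{Ran}D_\pm$ is dense but not closed. Hence $(\Honezero,\norm{D_\pm\,\cdot\,})$ is incomplete. A linear isometry onto the complete space $L^2(\mathbb R_u)$ would force completeness, so no energy-unitary map $\Honezero\times\Ltwo\to L^2(\mathbb R_u)$ exists, and ``surjective isometry, hence unitary by polarization'' fails as stated. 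Your Step~3 claim that the free Dirichlet radiation map is onto fails for the same reason. Free solutions are $\phi(t,x)=G(t-x)-G(t+x)$, with $\calR_+[\phi]=G'$ and $\phi(0,x)=G(-x)-G(x)\to-\int_{\mathbb R}G'(s)\,\dd s$ as $x\to\infty$. So a profile $G'\in C_0^\infty(\mathbb R)$ with nonzero integral comes only from data with $\phi(0,\cdot)\notin\Ltwo$.

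The repair is to take as energy space the completion of $C_0^\infty(0,\infty)\times C_0^\infty(0,\infty)$ in the energy norm itself. This space is strictly larger than $\Honezero\times\Ltwo$ and is identified with $\Ltwo\oplus\Ltwo$ via $(\phi,\Pi)\mapsto(D_\pm\phi,\Pi)$. On it, $\mathcal A_\pm$ becomes the skew-adjoint block operator with entries $\pm D_\pm$, and $U_\pm(t)$ is genuinely unitary. Your Step~2 needs this too, because on $\Honezero\times\Ltwo$ with its own Hilbert norm the wave group is bounded but not unitary. With that replacement, your polarization and density arguments go through verbatim.

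Separately, the normalization does not match as you assert. For the Dirichlet solution above, \eqref{eq:energy} gives $E_\pm=2\norm{G'}^2_{L^2(\mathbb R)}=2\norm{\calR_+[\phi]}^2_{L^2(\mathbb R_u)}$. So \eqref{eq:isometry} is exact only for the $\tfrac12$-normalized mode energy $\tfrac12\norm{\Pi}^2+q_\pm[\phi]$ of \eqref{eq:E-mode}.
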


Equation~\eqref{eq:isometry} is the final quantitative statement of unitarity in the single-channel radiation picture. It identifies the master-field $T$-energy with the squared norm of the outgoing radiation profile. No energy is lost in the limit $x\to\infty$, and none is absorbed at the optical apex. An asymptotic observer who knows all assembled channel profiles can therefore reconstruct the complete finite-energy radiative state. This classical unitary equivalence supplies a Hilbert-space structure on which a later quantum construction could be based; no quantum theory follows from the radiation isometry alone.

The analytic proof has been carried out independently in each master
sector. Orthogonality of the harmonic decomposition then permits the
single-sector radiation maps to be reassembled as a direct sum of
unitary isometries.

The full chain from the covariant Einstein--Maxwell energy to its complete radiation representation is now established.

\section{Comments and conclusions}\label{sec:conclusion}

The central construction of this paper follows the action-derived
$T$-energy through its successive exact representations. The
quadratic action $S^{(2)}$ yields the Noether current $J_\mu^T$,
whose flux through a static slice decomposes as
\begin{align*}
E_{\mathrm{EM}}[\Phi]
=
\sum_{\pm,\ell\ge2,m}
E_{\pm,\ell m}[\Phi_{\pm,\ell m}].
\end{align*}
For each radiative master field,
\begin{align*}
E_{\pm,\ell m}
=
\frac12
\norm{\pi_{\pm,\ell m}}_{L^2}^2
+
q_\pm[\phi_{\pm,\ell m}],
\end{align*}
while the Doob and radiation identities are
\begin{align*}
q_\pm[\phi]
&=
\frac12
\norm{A_\pm\phi}_{L^2}^2,
\\
E_\pm[\Phi_\pm]
&=
\norm{\calR_+[\Phi_\pm]}_{L^2(\mathbb R_u)}^2.
\end{align*}
These are successive representations of the same action-derived
energy. The second variation supplies the quadratic field theory;
its $T$-Noether current supplies the conserved energy; the master
reduction supplies the modal decomposition; closure supplies the
admissible configuration-form domain; adjoining the $L^2$ velocity
component supplies the Cauchy-data energy space; the representation
theorem supplies the spatial Hamiltonian; the Doob identity supplies
the positive spatial square; and the radiation isometry supplies the
complete asymptotic norm.

The representations are successive in their construction, but their logical roles bifurcate at the Friedrichs Hamiltonian: self-adjointness supplies unitary bulk evolution, while the Doob and spectral analysis exclude hidden bound or threshold sectors; the radiation isometry reunites these branches in a complete asymptotic representation.

Because the harmonic sectors are orthogonal in the
Einstein--Maxwell $T$-energy and evolve independently, the
single-sector radiation maps reassemble as a direct sum of unitary
isometries. No additional endpoint condition, normalization, or
global energy form is introduced by this assembly. The master
reduction therefore preserves more than the equations of motion: it
carries one physical quadratic invariant from the covariant field
theory to its complete radiation representation.

At the point of closure, the physical and mathematical meanings coincide. The closed form is not merely a mathematical device applied to a previously identified physical state space. It is the completion of the conserved physical norm itself. Its domain is therefore simultaneously the set of configurations for which the physical energy is finite and the form domain from which the associated Hamiltonian is obtained. For each master form $q_\pm$,
\begin{align*}
\Dom(q_\pm)
=
H_0^1(0,\infty).
\end{align*}
This is simultaneously the form domain obtained by closure and the
configuration space selected by finiteness of the action-derived
spatial $T$-energy. The word ``admissible'' introduces no independent
condition; it refers only to membership in this completed
finite-energy domain. The form--operator correspondence is supplied by
the representation theorem, not by an additional endpoint choice. The
singular geometry fixes the coefficients entering $q_\pm$, but closure
of that form contributes no independent endpoint sector.

This inversion is logically prior even to the familiar distinction between the Schr\"odinger and Heisenberg pictures. In the former, states evolve under a fixed Hamiltonian; in the latter, the state is held fixed while observables evolve. Yet both pictures already presuppose a common Hilbert space, a common criterion of admissibility, and a common self-adjoint generator. Here that antecedent structure is itself derived. The quadratic Einstein--Maxwell action fixes the conserved $T$-energy on the primitive core; closure of its spatial form determines the finite-energy configuration domain; adjoining the $L^2$ velocity component gives the Cauchy-data energy space; and the representation theorem uniquely determines the spatial Hamiltonian associated with the closed form. The operator therefore does not stand outside the theory as an independently prescribed law that selects which formal solutions are physical. It is the canonical dynamical expression of the relations already encoded among the completed finite-energy states. State admissibility, endpoint behavior, and unitary evolution are consequently not separate choices but successive representations of one action-derived structure. In particular, the singular endpoint is not furnished with an auxiliary boundary law: the completed state space already exhausts the admissible finite-energy behavior, leaving no independent endpoint freedom for such a law to govern.

The theory does not overcome the singularity by removing it. The superextremal \RN\ metric remains a genuine curvature singularity, and nothing in the present analysis removes its geodesic incompleteness or divergent local invariants. What the construction establishes is that the unsupplemented linearized Einstein--Maxwell theory closes upon itself. Its action-derived spatial form closes to a finite-energy configuration domain; the kinetic $L^2$ component completes this to the Cauchy-data energy space; the representation theorem supplies the spatial Hamiltonian; and the resulting dynamics admits a complete radiation representation without an additional endpoint law. No independently prescribed endpoint law, energy, or degree of freedom is required.

The resulting silence of the apex has three distinct levels. First, finite-energy solutions generated by the Friedrichs realization have zero $T$-energy flux through the optical endpoint. Second, the absence of point spectrum and of a zero-energy resonance excludes localized finite-energy storage. Third, the unitary radiation representation excludes a nonradiating finite-energy remainder. The singularity is therefore neither a source nor a sink of energy; it supports no bound-state memory and contains no hidden finite-energy information reservoir.

The comparison with the Dotti--Gleiser modes is essential to this conclusion. Their non-Friedrichs apex behavior makes their Einstein--Maxwell $T$-energy divergent and therefore places them outside the form domain forced by the physical energy. This exclusion is not imposed after the instability has been found; it follows from the same action-derived norm that defines the theory before the spectral problem is solved. Hence no Dotti--Gleiser instability exists in the action-selected finite-$T$-energy Einstein--Maxwell dynamics, although the profiles remain formal solutions in the broader endpoint classes in which they were constructed.

The result also questions, within the stated limits, the predictability-based motivation for cosmic censorship. It does not establish that superextremal \RN\ geometries form in gravitational collapse, prove nonlinear stability, or disprove cosmic censorship. It does show that, on the fixed background and in the finite-$T$-energy radiative theory,
\begin{align*}
\text{naked singularity}
\;\not\Longrightarrow\;
\text{loss of deterministic finite-energy dynamics}.
\end{align*}
Nakedness alone is therefore insufficient to destroy finite-energy predictability.

The radiation representation does not recover information from a
concealed finite-energy singularity sector; it proves that no such
sector remains. Channel by channel, the bulk energy space and the
radiation space at future null infinity are unitary realizations of
the same information content. Orthogonal reassembly then gives the
corresponding statement for the full fixed-background radiative
system. The result therefore has a holographic form in this precise
classical scattering sense. The boundary representation is complete
because the hidden sector is empty, not because such a sector must be
recovered.

The preceding direction of determination depends essentially on treating
the background geometry as fixed. In a quantum-gravitational theory,
geometry could no longer remain an external antecedent. Geometry, energy,
admissible states, observables, spectra, and asymptotic data would instead
have to belong to one mutually consistent structure, schematically,
\begin{align*}
g
\longleftrightarrow
E
\longleftrightarrow
\mathcal H
\longleftrightarrow
H
\longleftrightarrow
\sigma(H)
\longleftrightarrow
\mathcal R.
\end{align*}
Geometry would determine the admissible quantum structure, while that
structure, through its states, observables, spectrum, and asymptotic
representation, would have to reconstruct the same geometry. This
reciprocal diagram is programmatic rather than a theorem proved here.
The present work establishes its forward fixed-background half: a
singular Einstein--Maxwell geometry determines its physical energy, its
finite-energy completion, its Hamiltonian, its spectral content, and its
complete radiation representation without supplemental endpoint data.

The construction proved in this paper is classical. The orthogonal sum of the classical channel phase spaces is the completed finite-$T$-energy phase space of the radiative linearized Einstein--Maxwell system. The intended quantum program begins by equipping this classical phase space with an appropriate complex structure so that its positive-frequency completion can be interpreted as a one-particle Hilbert space. Once compatibility with the classical symplectic form has been established, the radiation isometry could transport that one-particle structure to future null infinity. After normalization within such a quantization, the radiation profile is intended to play the role of a one-particle wave function---a probability amplitude in retarded time. Future null infinity is not introduced as an additional physical system; it is the asymptotic representation of the same bulk state.

A Weyl--CCR and Fock-space quantization is intended as a subsequent step. Such a construction would require the classical symplectic form, a compatible positive-frequency complex structure, the associated one-particle space, and a representation of the canonical commutation relations. The Doob profile $u_{0,\pm}$ serves in the present paper as the distinguished positive factorization profile of the classical Hamiltonian. Whether it also determines a preferred complex structure or vacuum state belongs to the future quantum construction. The present paper therefore establishes the classical foundation in the fixed-background radiative sector: a finite-energy phase space, a positive self-adjoint spatial Hamiltonian, unitary wave evolution, and a complete radiation representation. It does not yet prove the Born rule, a Fock representation, a quantum electron model, or nonlinear stability.

The absence of finite-energy bound modes in the static \RN\ problem also clarifies the scope of possible particle-like dynamics. Within the fixed-background finite-energy radiative sector, the present background is a silent scattering system and does not contain a finite-energy internal oscillatory sector. Any richer self-field behavior would therefore have to arise from structures absent from the present theorem, such as rotation, vorticity, nontrivial topology, nonlinear self-interaction, or couplings to additional fields. This is a conclusion about the limits of the static model, not evidence that any particular rotating or solitonic model succeeds.

Rotating Einstein--Maxwell geometries have a longstanding connection with particle-like interpretations. Carter analyzed the global structure of the Kerr family, while Burinskii developed explicit Kerr--Newman electron and spinning-particle models~\cite{Carter1968,Burinskii2008,Burinskii2012}. The present paper neither assumes nor endorses those models; they provide the historical context for treating the nonrotating \RN\ result as a possible static benchmark for a future Kerr--Newman analysis. The most direct extension of the present construction is the superextremal Kerr--Newman geometry. The \RN\ mechanism cannot simply be assumed to survive rotation. The present \RN\ result should be understood as a first static and spherically symmetric benchmark for a possible Kerr--Newman program, while every theorem proved here remains specific to the fixed nonrotating background. Static spherical symmetry supplies several structures essential to the present analysis: a globally preferred hypersurface-orthogonal Killing field, an orthogonal master-channel decomposition, a one-dimensional optical endpoint, and a positive $T$-energy form. In Kerr--Newman, frame dragging, vorticity, angular--radial coupling, and the ring singularity alter each part of that architecture. A future analysis would first have to identify the appropriate conserved quadratic energy and determine whether it remains positive and closable in the presence of rotation. Only then could one ask whether the resulting closed form selects a canonical self-adjoint realization, whether a positive factorization analogous to the Doob representation exists, and whether the radiative dynamics admits a complete representation at future null infinity. Failure of positivity, for example through an ergoregion or superradiant mechanism, would represent a genuine obstruction to the static finite-energy argument rather than a freedom to impose another endpoint condition.

A separate direction concerns the extremal limit $Q^2\to M^2$, where the finite optical apex is replaced by a degenerate horizon and the optical geometry changes qualitatively. It remains to determine whether the Hardy--Friedrichs mechanism has a controlled extremal limit or is replaced by a different threshold structure. Beyond linear theory, one must also ask whether the action-selected energy space persists under nonlinear evolution and whether superextremal configurations can arise dynamically. These questions lie beyond the fixed-background linear theory.

\myack{The author affirms that this work was executed independently, without grant funding or institutional supervision. Generative AI was employed exclusively for linguistic refinement and drafting support.}

\bibliographystyle{iopart-num}
\bibliography{refs}

\clearpage

\appendix

\appsection{Optical radius for superextremal \RN}\label{app:appA}

\subsection{
\texorpdfstring
{Expansion centered at $x=0$}
{Expansion centered at x=0}
}\label{ssec:A1}
We compute in closed form the optical radius $x(r)$ defined by~\eqref{eq:x-def},
\begin{equation}
  \frac{\dd x}{\dd r} = f(r)^{-1}, \qquad x(0)=0,
\end{equation}
and extract its asymptotic behavior both near the optical apex and toward spatial infinity. By definition of $f(r)$, we have
\begin{align}
  \frac{\dd x}{\dd r}
  = f(r)^{-1} 
  = \frac{1}{1 - \frac{2M}{r} + \frac{Q^{2}}{r^{2}}}
  = \frac{1}{\displaystyle \frac{r^{2} - 2Mr + Q^{2}}{r^{2}}}
  = \frac{r^{2}}{r^{2} - 2Mr + Q^{2}}.
     \label{eq:opt-int-step2}
\end{align}
Next we complete the square in the quadratic polynomial appearing in the denominator:
\begin{equation}
  r^{2} - 2Mr + Q^{2}
  = r^{2} - 2Mr + M^{2} + Q^{2} - M^{2} 
  = (r - M)^{2} + (Q^{2} - M^{2}).\label{eq:opt-completesquare}
\end{equation}
In the superextremal regime $|Q|>M$, we introduce
\begin{equation}
  \alpha := \sqrt{Q^{2} - M^{2}} > 0,
\end{equation}
so that~\eqref{eq:opt-completesquare} becomes
\begin{equation}
  r^{2} - 2Mr + Q^{2} = (r - M)^{2} + \alpha^{2}. \label{eq:opt-denominator-alpha}
\end{equation}
Substituting~\eqref{eq:opt-denominator-alpha} into~\eqref{eq:opt-int-step2} yields
\begin{equation}
  \frac{\dd x}{\dd r}
  = \frac{r^{2}}{(r - M)^{2} + \alpha^{2}}. \label{eq:opt-int-prechange}
\end{equation}
To integrate~\eqref{eq:opt-int-prechange}, we first rewrite the numerator $r^{2}$ in terms of $r - M$:
\begin{equation}
  r^{2}
  = (r - M + M)^{2} 
  = (r - M)^{2} + 2M(r - M) + M^{2}. \label{eq:opt-num-expand}
  \end{equation}
Using~\eqref{eq:opt-num-expand} in~\eqref{eq:opt-int-prechange} we obtain
\begin{align}
  \frac{\dd x}{\dd r}
  = \frac{(r - M)^{2} + 2M(r - M) + M^{2}}{(r - M)^{2} + \alpha^{2}} 
  = \frac{(r - M)^{2}}{(r - M)^{2} + \alpha^{2}}
   + \frac{2M(r - M)}{(r - M)^{2} + \alpha^{2}}
   + \frac{M^{2}}{(r - M)^{2} + \alpha^{2}}.
   \label{eq:opt-split1}
\end{align}
For the first fraction in~\eqref{eq:opt-split1}, we write
\begin{align}
  \frac{(r - M)^{2}}{(r - M)^{2} + \alpha^{2}}
  = \frac{(r - M)^{2} + \alpha^{2} - \alpha^{2}}{(r - M)^{2} + \alpha^{2}} 
  = 1 - \frac{\alpha^{2}}{(r - M)^{2} + \alpha^{2}}. \label{eq:opt-firstfraction}
\end{align}
Substituting~\eqref{eq:opt-firstfraction} into~\eqref{eq:opt-split1} gives
\begin{align}
  \frac{\dd x}{\dd r}
  &= 1 - \frac{\alpha^{2}}{(r - M)^{2} + \alpha^{2}}
   + \frac{2M(r - M)}{(r - M)^{2} + \alpha^{2}}
   + \frac{M^{2}}{(r - M)^{2} + \alpha^{2}} \nonumber\\
  &= 1 + \frac{2M(r - M)}{(r - M)^{2} + \alpha^{2}}
   + \frac{M^{2} - \alpha^{2}}{(r - M)^{2} + \alpha^{2}}.
   \label{eq:opt-split2}
\end{align}
Recalling $\alpha^{2} = Q^{2} - M^{2}$, we compute
\begin{equation}
  M^{2} - \alpha^{2}
  = M^{2} - (Q^{2} - M^{2})
  = 2M^{2} - Q^{2},
\end{equation}
so that~\eqref{eq:opt-split2} can be rewritten as
\begin{equation}
  \frac{\dd x}{\dd r}
  = 1 + \frac{2M(r - M)}{(r - M)^{2} + \alpha^{2}}
   + \frac{2M^{2} - Q^{2}}{(r - M)^{2} + \alpha^{2}}.
   \label{eq:opt-split-final}
\end{equation}
We now integrate~\eqref{eq:opt-split-final} with respect to $r$. Integrating term by term, we obtain
\begin{align}
  x(r)
  &= \int \left[ 1
   + \frac{2M(r - M)}{(r - M)^{2} + \alpha^{2}}
   + \frac{2M^{2} - Q^{2}}{(r - M)^{2} + \alpha^{2}} \right] \dd r \nonumber\\
  &= \int 1\,\dd r
   + 2M \int \frac{(r - M)}{(r - M)^{2} + \alpha^{2}} \,\dd r
   + (2M^{2} - Q^{2}) \int \frac{\dd r}{(r - M)^{2} + \alpha^{2}}.
   \label{eq:opt-integrals-split}
\end{align}
For the second and third integrals in~\eqref{eq:opt-integrals-split}, we introduce the substitution $u := r - M$, $\dd u = \dd r$. Then
\begin{align}
  \int 1\,\dd r
  &= r + C_{1}, \label{eq:opt-int1}\\
  2M \int \frac{(r - M)}{(r - M)^{2} + \alpha^{2}} \,\dd r
  &= 2M \int \frac{u}{u^{2} + \alpha^{2}} \,\dd u 
  = 2M \cdot \frac{1}{2} \ln\!\bigl(u^{2} + \alpha^{2}\bigr) + C_{2}
     \label{eq:opt-int2-pre} \\
  &= M \ln\!\bigl(u^{2} + \alpha^{2}\bigr) + C_{2} 
  = M \ln\!\bigl((r - M)^{2} + \alpha^{2}\bigr) + C_{2},
     \label{eq:opt-int2}
\end{align}
where in~\eqref{eq:opt-int2-pre} we have used the standard identity
\begin{equation}
  \int \frac{u}{u^{2} + \alpha^{2}} \,\dd u
  = \frac{1}{2} \ln\!\bigl(u^{2} + \alpha^{2}\bigr) + \text{const}.
\end{equation}
Similarly,
\begin{align}
  (2M^{2} - Q^{2}) \int \frac{\dd r}{(r - M)^{2} + \alpha^{2}}
  &= (2M^{2} - Q^{2}) \int \frac{\dd u}{u^{2} + \alpha^{2}} \nonumber\\
  &= (2M^{2} - Q^{2}) \cdot \frac{1}{\alpha}
     \arctan\!\left(\frac{u}{\alpha}\right) + C_{3}
     \label{eq:opt-int3-pre} \\
  &= \frac{2M^{2} - Q^{2}}{\alpha}
     \arctan\!\left(\frac{r - M}{\alpha}\right) + C_{3},
     \label{eq:opt-int3}
\end{align}
where in~\eqref{eq:opt-int3-pre} we have used the standard identity
\begin{equation}
  \int \frac{\dd u}{u^{2} + \alpha^{2}}
  = \frac{1}{\alpha} \arctan\!\left(\frac{u}{\alpha}\right) + \text{const}.
\end{equation}

Combining~\eqref{eq:opt-int1}, \eqref{eq:opt-int2}, and~\eqref{eq:opt-int3}, and absorbing the constants $C_{1}$, $C_{2}$, and $C_{3}$ into a single integration constant $C$, we obtain
\begin{align*}
  x(r)
  &= r
   + \frac{2M^{2} - Q^{2}}{\alpha}
     \arctan\!\left(\frac{r - M}{\alpha}\right)
   + M \ln\!\bigl((r - M)^{2} + \alpha^{2}\bigr)
   + C \nonumber\\
  &= r
   + \frac{2M^{2} - Q^{2}}{\sqrt{Q^{2} - M^{2}}}
     \arctan\!\left(\frac{r - M}{\sqrt{Q^{2} - M^{2}}}\right)
   + M \ln\!\bigl(r^{2} - 2Mr + Q^{2}\bigr)
   + C,
\end{align*}
where in the last line we have used $\alpha^{2} = Q^{2} - M^{2}$ and $(r - M)^{2} + \alpha^{2} = r^{2} - 2Mr + Q^{2}$. 

Here $C$ is an integration constant and is fixed by imposing the normalization $x(0)=0$, which ensures that the optical coordinate origin coincides with the location of the naked singularity at $r=0$. This choice is compatible with the small-$r$ expansion.

The small-$r$ expansion~\eqref{eq:x-apex-rto0-app} in Appendix~\ref{ssec:A2} shows that the optical distance to the naked singularity is finite: $x\sim r^{3}/(3Q^{2})$. The large-$r$ behavior~\eqref{eq:x-apex-rtoinf-app} in Appendix~\ref{ssec:A3} shows that spatial infinity is logarithmically ``stretched'' in the optical coordinate, reflecting the Newtonian mass term $2M/r$ in $f(r)$ and mirroring the well-known tortoise coordinate in the black-hole regime~\cite{ChirentiSaaSkakala2012,DottiGleiser2010}. These features play a key role in the Hardy--Friedrichs analysis and in the construction of the radiation field at future null infinity.

\subsection{
\texorpdfstring
{Expansion of $x$ as $r\to0$}
{Expansion of x as r approaches 0}
}\label{ssec:A2}
To understand the global geometry encoded by $x$, we now extract the leading behavior of $x(r)$ as $r\to0^+$. For small $r$, we work directly with~\eqref{eq:x-def}. Using
\begin{align*}
f(r)=1-\frac{2M}{r}+\frac{Q^{2}}{r^{2}}
=\frac{r^{2}-2Mr+Q^{2}}{r^{2}},
\end{align*}
we obtain
\begin{align}
\frac{\dd x}{\dd r}
=f(r)^{-1}
=\frac{r^{2}}{r^{2}-2Mr+Q^{2}}
=\frac{r^{2}}{Q^{2}}\,
\frac{1}{1-\frac{2M}{Q^{2}}r+\frac{r^{2}}{Q^{2}}}.
\end{align}
For $r$ sufficiently small, the denominator can be expanded as a geometric series in $r$, giving
\begin{align}
\frac{1}{1-\frac{2M}{Q^{2}}r+\frac{r^{2}}{Q^{2}}}
&=1+\frac{2M}{Q^{2}}r+\Order{r^{2}}.
\end{align}
Hence
\begin{align}
\frac{\dd x}{\dd r}
=\frac{r^{2}}{Q^{2}}\Bigl(1+\frac{2M}{Q^{2}}r+\Order{r^{2}}\Bigr)
=\frac{r^{2}}{Q^{2}}+\Order{r^{3}},
\qquad r\to0^+.
\end{align}
Integrating from $0$ to $r$ and using $x(0)=0$ yields
\begin{align}
x(r)
&=\int_{0}^{r}\frac{\dd x}{\dd \rho}\,\dd \rho
=\int_{0}^{r}\left(\frac{\rho^{2}}{Q^{2}}+\Order{\rho^{3}}\right)\dd \rho
=\frac{r^{3}}{3Q^{2}}+\Order{r^{4}},
\qquad r\to0^+.
\label{eq:x-apex-rto0-app}
\end{align}
Thus, $x(r)$ tends to zero like $r^{3}$ as $r\downarrow0$. In particular, finite optical distance $x$ corresponds to a finite value of $r$, and the naked singularity at $r=0$ sits at finite optical distance---a feature that will be crucial when formulating the self-field problem on the optical half-line.

\subsection{
\texorpdfstring
{Expansion of $x$ as $r\to\infty$}
{Expansion of x as r approaches infinity}
}\label{ssec:A3}
To understand the global geometry encoded by $x$ for large $r$, it is more convenient to expand $f(r)^{-1}$ directly in inverse powers of $r$. Writing
\begin{align}
f(r)
=1-\frac{2M}{r}+\frac{Q^{2}}{r^{2}},
\end{align}
we regard $a:=\frac{2M}{r}-\frac{Q^{2}}{r^{2}}$ as a small parameter for $r\to\infty$ and use
\begin{align*}
\frac{1}{1-a}=1+a+a^{2}+\Order{a^{3}}.
\end{align*}
Substituting $a$ and keeping terms up to order $r^{-2}$ gives
\begin{align}
f(r)^{-1}
&=1+\left(\frac{2M}{r}-\frac{Q^{2}}{r^{2}}\right)
+\left(\frac{2M}{r}-\frac{Q^{2}}{r^{2}}\right)^{2}
+\Order{r^{-3}}\\
&=1+\frac{2M}{r}+\frac{4M^{2}-Q^{2}}{r^{2}}+\Order{r^{-3}},
\qquad r\to\infty.
\end{align}
Integrating term by term, we obtain
\begin{align}
x(r)
&=\int^{r}\left(1+\frac{2M}{\rho}+\frac{4M^{2}-Q^{2}}{\rho^{2}}+\Order{\rho^{-3}}\right)\dd \rho\\
&= r + 2M\ln r -\frac{4M^{2}-Q^{2}}{r}+\Order{r^{-2}}+\text{const.}
\end{align}
Absorbing the additive constant and the $r^{-1}$ term into the $\Order{1}$ remainder, we can summarize the large-$r$ behavior as
\begin{align}
x(r)=r+2M\ln r+\Order{1},
\qquad r\to\infty.
\label{eq:x-apex-rtoinf-app}
\end{align}
Combining~\eqref{eq:x-apex-rto0-app} and~\eqref{eq:x-apex-rtoinf-app}, we see that $x\downarrow0$ as $r\downarrow0$ and $x\to\infty$ as $r\to\infty$, with $x'(r)=f(r)^{-1}>0$ everywhere. Thus, $x$ is indeed a global optical coordinate on the half-line $(0,\infty)$.

\appsection{Inverse-square cores of the Dotti--Gleiser master potentials}\label{app:appB}

We emphasize that the potentials analyzed in this appendix are exactly the Einstein--Maxwell master potentials used in the main text, not those of a different or auxiliary problem. In Section~\ref{sec:master-reduction}, the radiative degrees of freedom are encoded in the Kodama--Ishibashi gauge-invariant fields \(\phi_{\pm,\ell m}(t,x)\) on the optical half-line, which satisfy the master equation \eqref{eq:master-wave} with spatial operator \(H_\pm^{\mathrm{formal}}=-\partial_x^2+V_\pm(x)\), as in \eqref{eq:H-formal}. In four-dimensional Reissner--Nordstr\"om, the even/odd master variables of Dotti and Gleiser are related to these Kodama--Ishibashi fields by an \(\ell\)-dependent, \(r\)-regular, invertible linear transformation on each fixed multipole, so both descriptions generate the same Schr\"odinger operators on \(L^2(0,\infty)_x\) with the same tortoise/optical coordinate \(x\) defined by \eqref{eq:x-def}. The formulas \eqref{eq:DG-master-potential}--\eqref{eq:DG-falpha-def} used here are simply the Dotti--Gleiser expressions for those very potentials \(V_\pm\), written in terms of \(f(r)=1-2M/r+Q^2/r^2\) and its derivatives. Substituting the small-\(r\) expansion of \(x(r)\) from Appendix~\ref{app:appA} into those expressions thus reproduces the small-\(x\) behavior of the same potentials \(V_\pm(x)\) that appear in the core forms \(q_\pm\) in \eqref{eq:qpm-def} and the Friedrichs operators \(H_\pm^{\Fried}\) studied in Sections~\ref{sec:hardy}--\ref{sec:friedrichs}. In particular, the inverse-square cores obtained here belong to the full linear Einstein--Maxwell self-field problem for a single superextremal Reissner--Nordstr\"om point charge, with no change of background, variables, or dynamics between the main text and this appendix.

For each $\alpha=1,2$ and sign $\pm$, the
four-dimensional Einstein--Maxwell master potentials can be written in the
tortoise/optical coordinate $x$ as follows (cf.\ Equations~(6) and (7) of \cite{DottiGleiser2010}):
\begin{equation}
  V_\alpha^{\pm}(x)
  = \pm \beta_\alpha\,\frac{\dd f_\alpha}{\dd x}
    + \beta_\alpha^2 f_\alpha^2
    + \kappa\, f_\alpha,
  \label{eq:DG-master-potential}
\end{equation}
where
\begin{align*}
  \kappa = (\ell-1)\,\ell\,(\ell+1)\,(\ell+2),
  \qquad
  \beta_\alpha
  = 3M + (-1)^\alpha\sqrt{9M^2 + 4Q^2(\ell-1)(\ell+2)},
\end{align*}
and
\begin{equation}
  f_\alpha(r)
  = \frac{f(r)}{r\beta_\alpha + (\ell-1)(\ell+2)\,r^2},
  \qquad
  f(r) = 1 - \frac{2M}{r} + \frac{Q^2}{r^2}.
  \label{eq:DG-falpha-def}
\end{equation}
The tortoise/optical coordinate $x$ is defined by
\begin{equation}
  \frac{\dd x}{\dd r} = f(r)^{-1},
  \qquad
  x(0)=0,
  \label{eq:x-tortoise-def}
\end{equation}
so that $\dd/\dd x = f(r)\,\dd/\dd r$.

We first determine the leading small-$r$ behavior of $x$. Near $r=0$,
\begin{equation}
  f(r)
  = 1 - \frac{2M}{r} + \frac{Q^2}{r^2}
  = \frac{Q^2}{r^2}
    \Bigl( 1 - \frac{2M}{Q^2}r + \Order{r^2} \Bigr),
  \qquad r\to0^+,
  \label{eq:f-small-r}
\end{equation}
hence
\begin{equation}
  f(r)^{-1}
  = \frac{r^2}{Q^2}\Bigl(1 + \Order{r}\Bigr),
  \qquad r\to0^+.
\end{equation}
From \eqref{eq:x-tortoise-def}, we obtain
\begin{equation}
  \frac{\dd x}{\dd r}
  = \frac{r^2}{Q^2} + \Order{r^3},
  \qquad r\to0^+,
\end{equation}
so integrating from $0$ to $r$ gives
\begin{equation}
  x(r)
  = \int_0^r \frac{\rho^2}{Q^2}\,\dd\rho + \Order{r^4}
  = \frac{r^3}{3Q^2} + \Order{r^4},
  \qquad r\to0^+.
  \label{eq:x-small-r}
\end{equation}
Inverting \eqref{eq:x-small-r}, we find
\begin{equation}
  r^3 = 3Q^2 x\bigl(1 + \Order{r}\bigr),
  \qquad
  r^6 = 9Q^4 x^2\bigl(1 + \Order{r}\bigr),
  \qquad
  \frac{Q^4}{r^6}
  = \frac{1}{9x^2}\bigl(1 + o(1)\bigr),
  \qquad x\downarrow0.
  \label{eq:r6-vs-x}
\end{equation}
Next we expand $f_\alpha(r)$ near $r=0$. From \eqref{eq:DG-falpha-def},
\begin{equation}
  f_\alpha(r)
  = \frac{f(r)}{r\beta_\alpha + (\ell-1)(\ell+2)\,r^2}.
\end{equation}
Factor the denominator:
\begin{equation}
  r\beta_\alpha + (\ell-1)(\ell+2)\,r^2
  = r\beta_\alpha\Bigl(1 + \frac{(\ell-1)(\ell+2)}{\beta_\alpha}r\Bigr),
\end{equation}
so for small $r$,
\begin{equation}
  \frac{1}{r\beta_\alpha + (\ell-1)(\ell+2)\,r^2}
  = \frac{1}{\beta_\alpha r}
    \Bigl(1 - \frac{(\ell-1)(\ell+2)}{\beta_\alpha}r + \Order{r^2}\Bigr).
\end{equation}
Using \eqref{eq:f-small-r},
\begin{align}
  f_\alpha(r)
  &= \Bigl(\frac{Q^2}{r^2} - \frac{2M}{r} + 1\Bigr)
     \frac{1}{\beta_\alpha r}
     \Bigl(1 - \frac{(\ell-1)(\ell+2)}{\beta_\alpha}r + \Order{r^2}\Bigr)
     \nonumber\\
  &= \frac{Q^2}{\beta_\alpha r^3} + g_\alpha(r),
  \qquad
  g_\alpha(r) = \Order{r^{-2}},
  \qquad r\to0^+.
  \label{eq:falpha-small-r}
\end{align}
Differentiating \eqref{eq:falpha-small-r} with respect to $r$ gives
\begin{equation}
  \frac{\dd f_\alpha}{\dd r}
  = -\frac{3Q^2}{\beta_\alpha r^4} + g_\alpha'(r),
  \qquad
  g_\alpha'(r) = \Order{r^{-3}},
  \qquad r\to0^+.
  \label{eq:dfalpha-dr}
\end{equation}
Since $\dd/\dd x = f(r)\,\dd/\dd r$ and $f(r)=Q^2 r^{-2} + \Order{r^{-1}}$ as
$r\to0^+$, we obtain from \eqref{eq:dfalpha-dr}
\begin{align}
  \frac{\dd f_\alpha}{\dd x}
  &= f(r)\,\frac{\dd f_\alpha}{\dd r}
   = \Bigl(\frac{Q^2}{r^2} + \Order{r^{-1}}\Bigr)
     \Bigl(-\frac{3Q^2}{\beta_\alpha r^4} + \Order{r^{-3}}\Bigr)
     \nonumber\\
  &= -\frac{3Q^4}{\beta_\alpha r^6} + \Order{r^{-5}},
  \qquad r\to0^+.
  \label{eq:dfalpha-dx}
\end{align}
We now assemble the pieces in \eqref{eq:DG-master-potential}. Using
\eqref{eq:falpha-small-r} and \eqref{eq:dfalpha-dx},
\begin{align}
  \pm \beta_\alpha\frac{\dd f_\alpha}{\dd x}
  &= \pm\beta_\alpha\Bigl(-\frac{3Q^4}{\beta_\alpha r^6} + \Order{r^{-5}}\Bigr)
   = \mp\frac{3Q^4}{r^6} + \Order{r^{-5}},
   \label{eq:term1}\\[0.5em]
  \beta_\alpha^2 f_\alpha^2
  &= \beta_\alpha^2\Bigl(\frac{Q^2}{\beta_\alpha r^3} + g_\alpha(r)\Bigr)^2
   = \beta_\alpha^2\Bigl(\frac{Q^4}{\beta_\alpha^2 r^6}
      + \frac{2Q^2}{\beta_\alpha r^3}g_\alpha(r)
      + g_\alpha(r)^2\Bigr)
      \nonumber\\
  &= \frac{Q^4}{r^6} + \Order{r^{-5}},\\[0.5em]
  \kappa f_\alpha
  &= \kappa\Bigl(\frac{Q^2}{\beta_\alpha r^3} + g_\alpha(r)\Bigr)
   = \Order{r^{-3}}.
   \label{eq:term3}
\end{align}
Substituting \eqref{eq:term1}--\eqref{eq:term3} into
\eqref{eq:DG-master-potential}, we obtain
\begin{align}
  V_\alpha^{+}(r)
  &= \Bigl(-\frac{3Q^4}{r^6} + \Order{r^{-5}}\Bigr)
     + \Bigl(\frac{Q^4}{r^6} + \Order{r^{-5}}\Bigr)
     + \Order{r^{-3}}
     \nonumber\\[0.25em]
  &= -\frac{2Q^4}{r^6} + \Order{r^{-5}},
     \label{eq:Vplus-r}\\[0.5em]
  V_\alpha^{-}(r)
  &= \Bigl(+\frac{3Q^4}{r^6} + \Order{r^{-5}}\Bigr)
     + \Bigl(\frac{Q^4}{r^6} + \Order{r^{-5}}\Bigr)
     + \Order{r^{-3}}
     \nonumber\\[0.25em]
  &= \frac{4Q^4}{r^6} + \Order{r^{-5}},
     \label{eq:Vminus-r}
\end{align}
as $r\to0^+$. The coefficients $-2$ and $4$ are independent of $\alpha$, $M$,
$Q$, and $\ell$; all dependence on these parameters resides in the
$\Order{r^{-5}}$ and less singular terms.

Finally, we express $V_\alpha^{\pm}$ in terms of the optical coordinate $x$
using \eqref{eq:r6-vs-x}. From \eqref{eq:Vplus-r} and \eqref{eq:r6-vs-x},
\begin{align*}
  V_\alpha^{+}(x)
  &= -2\,\frac{Q^4}{r(x)^6} + \Order{r(x)^{-5}}
   = -2\,\frac{1}{9x^2}\bigl(1 + o(1)\bigr)
     + \Order{x^{-5/3}}
     \nonumber\\
  &= -\frac{2}{9}\,x^{-2} + o\bigl(x^{-2}\bigr),
     \qquad x\downarrow0,
\end{align*}
since $x^{-5/3}=o(x^{-2})$ as $x\downarrow0$. Similarly, from
\eqref{eq:Vminus-r},
\begin{align*}
  V_\alpha^{-}(x)
  &= 4\,\frac{Q^4}{r(x)^6} + \Order{r(x)^{-5}}
   = 4\,\frac{1}{9x^2}\bigl(1 + o(1)\bigr)
     + \Order{x^{-5/3}}
     \nonumber\\
  &= \frac{4}{9}\,x^{-2} + o\bigl(x^{-2}\bigr),
     \qquad x\downarrow0.
\end{align*}
In particular, the two even-parity Einstein--Maxwell master potentials in the
optical radius $x$ have Hardy-type inverse-square cores
\begin{align*}
  V_+(x) = -\frac{2}{9}\,x^{-2} + o\bigl(x^{-2}\bigr),
  \qquad
  V_-(x) = \frac{4}{9}\,x^{-2} + o\bigl(x^{-2}\bigr),
  \qquad x\downarrow0,
\end{align*}
so that, in the notation $V_\pm(x)=C_\pm x^{-2}+o(x^{-2})$, we have
\begin{align*}
  C_+ = -\frac{2}{9} = \nu_+^2 - \frac14,
  \qquad
  C_- = \frac{4}{9}  = \nu_-^2 - \frac14,
  \qquad
  \nu_+ = \frac{1}{6},
  \qquad
  \nu_- = \frac{5}{6},
\end{align*}
with $\nu_\pm\in(0,1)$, as required for the Hardy window in
Section~\ref{sec:hardy}.

\appsection{Optical map and short-range tail of static \RN}\label{app:appC}

Here we rewrite the Ishibashi--Kodama (IK) master potentials in optical variables and isolate their behavior at large optical radius. Our goal is to verify their centrifugal-plus-short-range structure at large $x$:
\begin{align*}
V_\pm(x)=\frac{\ell(\ell+1)}{x^{2}}+W_\pm(x),\qquad W_\pm\in L^{1}\big((1,\infty)\big).
\end{align*}
This decomposition is used in the Hardy--Friedrichs analysis. Throughout, we work with four-dimensional, asymptotically flat, spherically symmetric Einstein--Maxwell backgrounds, so that the IK parameters specialize to
\begin{align}
(\Lambda,K,n)&=(0,1,2), \label{app:eq:ikn}\\
k^{2}&=\ell(\ell+n-1)=\ell(\ell+1), \label{app:eq:k2}\\
k_v^{2}&=\ell(\ell+1)-1, \label{app:eq:kv2}\\
k_v^{2}+1&=\ell(\ell+1)=k^2, \label{app:eq:kv2plus}\\
m&=k^{2}-nK=\ell(\ell+1)-2, \label{app:eq:mdef}
\end{align}
exactly as in Section~3 of \cite{IshibashiKodama2011PTPS}. We keep $\ell\ge2$ fixed and suppress the $(\ell,m)$ indices.

\subsection{Vector (axial) sector}
The IK vector-type (axial) master potentials on a general static Einstein--Maxwell background are given in Equation~5.23 of \cite{IshibashiKodama2011PTPS} as
\begin{align}
V_{V\pm}(r)
=\frac{f(r)}{r^{2}}\Bigg[
k_v^{2}
+\frac{n^{2}-2n+4}{4}\,K
-\frac{n(n-2)}{4}\,\Lambda r^{2}
+\frac{n(5n-2)Q^{2}}{4r^{2n-2}}
+\frac{\mu_\pm}{r^{\,n-1}}
\Bigg],
\label{app:eq:V_V}
\end{align}
where $f(r)$ is the usual lapse function appearing in the static metric
\begin{align}
\dd s^{2}=-f(r)\,\dd t^{2}+f(r)^{-1}\,\dd r^{2}+r^{2}\,\dd\Omega^{2},
\end{align}
and $\mu_\pm$ are the sector-dependent constants defined in Section~5 of \cite{IshibashiKodama2011PTPS}.

Specializing \eqref{app:eq:V_V} to the four-dimensional, asymptotically flat case \eqref{app:eq:ikn} proceeds term by term. First,
\begin{align}
\frac{n^{2}-2n+4}{4}\,K
=\frac{2^{2}-2\cdot2+4}{4}\cdot1
=\frac{4-4+4}{4}
=1.
\end{align}
Next,
\begin{align}
-\frac{n(n-2)}{4}\,\Lambda r^{2}
=-\frac{2\cdot0}{4}\,\Lambda r^{2}=0,
\end{align}
since $\Lambda=0$, and the charge term simplifies to
\begin{align}
\frac{n(5n-2)Q^{2}}{4r^{2n-2}}
=\frac{2(10-2)Q^{2}}{4r^{2\cdot2-2}}
=\frac{2\cdot8}{4}\,\frac{Q^{2}}{r^{2}}
=4\,\frac{Q^{2}}{r^{2}}.
\end{align}
Finally, $r^{n-1}=r^{1}=r$, so the $\mu_\pm$-term is simply $\mu_\pm/r$. Using \eqref{app:eq:kv2plus}, $k_v^{2}+1=\ell(\ell+1)$, the full bracket in \eqref{app:eq:V_V} becomes
\begin{align}
k_v^2+1+\frac{n(5n-2)Q^{2}}{4r^{2n-2}}+\frac{\mu_\pm}{r^{\,n-1}}
=\ell(\ell+1)+\frac{\mu_\pm}{r}+\frac{4Q^{2}}{r^{2}}.
\end{align}
Hence
\begin{align}
V_{V\pm}(r)
=\frac{f(r)}{r^{2}}\left(
\ell(\ell+1)+\frac{\mu_\pm}{r}+\frac{4Q^{2}}{r^{2}}
\right).
\label{app:eq:V_Vint}
\end{align}

On a \RN\ background, we have
\begin{align}
f(r)=1-\frac{2M}{r}+\frac{Q^{2}}{r^{2}},
\label{eq:RNmetric-f-again}
\end{align}
so
\begin{align}
\frac{f(r)}{r^{2}}
=\frac{1}{r^{2}}-\frac{2M}{r^{3}}+\frac{Q^{2}}{r^{4}}.
\end{align}
Substituting this into \eqref{app:eq:V_Vint} and expanding gives
\begin{align}
V_{V\pm}(r)
&=\Bigg(\frac{1}{r^{2}}-\frac{2M}{r^{3}}+\frac{Q^{2}}{r^{4}}\Bigg)
\left(
\ell(\ell+1)+\frac{\mu_\pm}{r}+\frac{4Q^{2}}{r^{2}}
\right)\nonumber\\
&=\ell(\ell+1)\Bigg(\frac{1}{r^{2}}-\frac{2M}{r^{3}}+\frac{Q^{2}}{r^{4}}\Bigg)
+\mu_\pm\Bigg(\frac{1}{r^{3}}-\frac{2M}{r^{4}}+\frac{Q^{2}}{r^{5}}\Bigg)
+4Q^{2}\Bigg(\frac{1}{r^{4}}-\frac{2M}{r^{5}}+\frac{Q^{2}}{r^{6}}\Bigg).
\end{align}
Collecting terms by powers of $r$ yields
\begin{align}
V_{V\pm}(r)
&=\frac{\ell(\ell+1)}{r^{2}}
+\frac{\mu_\pm-2M\,\ell(\ell+1)}{r^{3}}
+\Order{r^{-4}},
\qquad r\to\infty.
\label{app:eq:V_V-asymp}
\end{align}
In particular, for each fixed $\ell$ we have
\begin{align}
V_{V\pm}(r)
=\frac{\ell(\ell+1)}{r^{2}}+\Order{r^{-3}},
\qquad r\to\infty.
\end{align}

\subsection{Scalar (polar) sector}
For the scalar-type (polar) sector, the IK master potentials on a general static Einstein--Maxwell background are given by Equation~6.23 of \cite{IshibashiKodama2011PTPS} as
\begin{align}
V_{S\pm}(r)
=\frac{f(r)}{64\,r^{2}\,H_\pm(r)^{2}}\,U_\pm(r),
\label{app:eq:V_S}
\end{align}
where $H_\pm$ and $U_\pm$ are explicit functions of
\begin{align}
x:=\frac{2M}{r^{\,n-1}},\qquad
y:=\Lambda r^2,\qquad
z:=\frac{Q^2}{r^{\,2n-2}},
\end{align}
and the parameters in \eqref{app:eq:k2}, \eqref{app:eq:mdef}; see Sections~2 and 6 of \cite{IshibashiKodama2011PTPS} for the full expressions.

In our four-dimensional, asymptotically flat case \eqref{app:eq:ikn} with $n=2$, these auxiliary variables become
\begin{align}
x=\frac{2M}{r},\qquad
y=\Lambda r^2=0,\qquad
z=\frac{Q^2}{r^{2}},
\end{align}
so $x\to0$, $z\to0$ as $r\to\infty$. The asymptotic expansions of $U_\pm$ and $H_\pm$ at $(x,y,z)=(0,0,0)$ are recorded in Equations~6.24a--6.24b, 6.25a--6.25b of \cite{IshibashiKodama2011PTPS}. Specializing these to $(\Lambda,K,n)=(0,1,2)$ and $k^{2}=\ell(\ell+1)$ gives, as $r\to\infty$,
\begin{align}
\frac{U_+(r)}{H_+(r)^{2}}=64\,\ell(\ell+1)+\Order{r^{-1}},\qquad
\frac{U_-(r)}{H_-(r)^{2}}=64\,\ell(\ell+1)+\Order{r^{-1}}.
\end{align}
Substituting into \eqref{app:eq:V_S}, we obtain in both scalar sectors
\begin{align}
V_{S\pm}(r)
=\frac{f(r)}{64\,r^{2}}\,\frac{U_\pm(r)}{H_\pm(r)^{2}}
=\frac{f(r)}{64\,r^{2}}\Big[64\,\ell(\ell+1)+\Order{r^{-1}}\Big]
=\frac{f(r)}{r^{2}}\Big[\ell(\ell+1)+\Order{r^{-1}}\Big].
\end{align}
Using the large-$r$ expansion of $f(r)$ from \eqref{eq:RNmetric-f-again},
\begin{align}
f(r)=1-\frac{2M}{r}+\frac{Q^{2}}{r^{2}},
\end{align}
we have
\begin{align}
\frac{f(r)}{r^{2}}
=\frac{1}{r^{2}}-\frac{2M}{r^{3}}+\Order{r^{-4}}.
\end{align}
Thus
\begin{align}
V_{S\pm}(r)
&=\Bigg(\frac{1}{r^{2}}-\frac{2M}{r^{3}}+\Order{r^{-4}}\Bigg)
\Big(\ell(\ell+1)+\Order{r^{-1}}\Big)\nonumber\\
&=\frac{\ell(\ell+1)}{r^{2}}+\Order{r^{-3}},
\qquad r\to\infty.
\label{app:eq:VS-asymp}
\end{align}
In particular, for each fixed $\ell\ge2$, there exist $R>0$ and $C>0$ such that
\begin{align}
\Big|V_{S\pm}(r)-\frac{\ell(\ell+1)}{r^{2}}\Big|\le \frac{C}{r^{3}},
\qquad r\ge R.
\end{align}

\subsection{
\texorpdfstring
{Optical coordinate and $L^1$ short-range tail}
{Optical coordinate and L1 short-range tail}
}
We now translate the large-$r$ expansions \eqref{app:eq:V_V-asymp} and \eqref{app:eq:VS-asymp} to the optical radius $x$. Recall from Section~\ref{sec:rest-frame} that $x$ is defined by
\begin{align}
\frac{\dd x}{\dd r}=f(r)^{-1},\qquad x(0)=0,
\end{align}
and that, as $r\to\infty$,
\begin{align}
x(r)=r+2M\ln r+\Order{1}.
\label{app:eq:x-apex-rtoinf-again}
\end{align}
In particular, $x(r)\to\infty$ with $x(r)/r\to1$, so there exist $R_1>0$ and constants $c_1,c_2>0$ such that, for all $r\ge R_1$,
\begin{align}
c_1 r\le x(r)\le c_2 r.
\label{app:eq:r-x-comparable}
\end{align}
Since $x(r)$ is strictly increasing on $(0,\infty)$, it admits a smooth inverse $r(x)$, and \eqref{app:eq:r-x-comparable} implies
\begin{align}
\frac{1}{c_2 x}\le\frac{1}{r(x)}\le\frac{1}{c_1 x},\qquad x\ge x(R_1).
\end{align}
Moreover, for any $k\ge1$, we have $r(x)^{-k}=\Order{x^{-k}}$ as $x\to\infty$, with the constants in the $\Order{\cdot}$ depending only on $k$, $M$, $Q$, and $\ell$.

Define the optical master potentials by
\begin{align}
V_{V\pm}(x):=V_{V\pm}(r(x)),\qquad
V_{S\pm}(x):=V_{S\pm}(r(x)).
\end{align}
Combining \eqref{app:eq:V_V-asymp}, \eqref{app:eq:VS-asymp}, and the comparability \eqref{app:eq:r-x-comparable}, we obtain, for each sector $\pm$,
\begin{align}
V_\pm(x)
&=\frac{\ell(\ell+1)}{r(x)^{2}}+\Order{r(x)^{-3}}
=\frac{\ell(\ell+1)}{x^{2}}+\Order{x^{-3}},
\qquad x\to\infty.
\label{app:eq:V-large-x}
\end{align}
To extract the short-range remainder, we write
\begin{align}
V_\pm(x)-\frac{\ell(\ell+1)}{x^{2}}
&=\Bigg[V_\pm(x)-\frac{\ell(\ell+1)}{r(x)^{2}}\Bigg]
+\ell(\ell+1)\Bigg[\frac{1}{r(x)^{2}}-\frac{1}{x^{2}}\Bigg].
\end{align}
The first bracket is $\Order{x^{-3}}$ by \eqref{app:eq:V-large-x}. For the second bracket we use the elementary identity
\begin{align}
\frac{1}{r^{2}}-\frac{1}{x^{2}}
=\frac{x^{2}-r^{2}}{r^{2}x^{2}}
=\frac{(x-r)(x+r)}{r^{2}x^{2}}.
\end{align}
From \eqref{app:eq:x-apex-rtoinf-again}, we have $x-r=2M\ln r+\Order{1}$, so, in particular, $\abs{x-r}=\Order{\ln r}$, while $x+r=\Order{r}$ and $r^{2}x^{2}=\Order{r^{4}}$. Hence,
\begin{align}
\Bigg|\frac{1}{r(x)^{2}}-\frac{1}{x^{2}}\Bigg|
=\Order{\frac{\ln r(x)}{r(x)^{3}}}
=\Order{x^{-3}\ln x},
\qquad x\to\infty.
\end{align}
Since $\ln x\le C_\varepsilon x^{\varepsilon}$ for any $\varepsilon>0$ and large $x$, we can absorb the logarithm into a slightly weaker algebraic decay, so that, for instance,
\begin{align}
\Bigg|\frac{1}{r(x)^{2}}-\frac{1}{x^{2}}\Bigg|
=\Order{x^{-3+\varepsilon}},\qquad \text{for any fixed }\varepsilon\in(0,1),
\end{align}
and, in particular, the difference is dominated by a constant multiple of $x^{-3}$ for all sufficiently large $x$.

Thus, for each sector $\pm$ and each fixed $\ell\ge2$, there exist $X>0$ and $C>0$ such that
\begin{align}
\Bigg|V_\pm(x)-\frac{\ell(\ell+1)}{x^{2}}\Bigg|\le\frac{C}{x^{3}},
\qquad x\ge X.
\end{align}
Define
\begin{align}
W_\pm(x):=V_\pm(x)-\frac{\ell(\ell+1)}{x^{2}}.
\end{align}
On $[1,X]$, the function $W_\pm$ is continuous and hence integrable. On $(X,\infty)$, the bound above gives
\begin{align}
\int_X^\infty\!\abs{W_\pm(x)}\,\dd x
\le C\int_X^\infty\frac{\dd x}{x^{3}}
=\frac{C}{2X^{2}}<\infty.
\end{align}
Thus, $W_\pm\in L^{1}\big((1,\infty)\big)$ and we have established the desired short-range decomposition \eqref{eq:short-range}:
\begin{align}
V_\pm(x)
=\frac{\ell(\ell+1)}{x^{2}}+W_\pm(x),\qquad W_\pm\in L^{1}\big((1,\infty)\big).
\end{align}
In other words, the far-field part of the master potential is a slightly perturbed centrifugal barrier, with the perturbation small enough (in the $L^1$ sense) to be negligible in the Hardy and radiation-field estimates. This is exactly the structure that will be used in Section~\ref{sec:hardy} to control the quadratic forms and in Section~\ref{sec:radiation-field} to construct the radiation field on the optical half-line.

\appsection{Exclusion of the Dotti--Gleiser algebraically special mode}\label{app:DG-instability}

In this appendix, we make explicit, in the optical half-line formulation, why the
algebraically special unstable mode constructed by Dotti and Gleiser~\cite{DottiGleiser2010} does not belong to the finite-$T$-energy
Einstein--Maxwell sector studied in this paper. The argument is purely local
near the optical apex $x=0$ and uses only the inverse-square cores of the
effective potentials $V_\pm$ and the resulting Hardy--Friedrichs structure
developed in Section~\ref{sec:hardy}.

\subsection{Inverse-square cores and Frobenius exponents}

From the explicit Kodama--Ishibashi potentials in the optical coordinate worked
out in Appendix~\ref{app:appB}, and their comparison with the Dotti--Gleiser master
potentials, we have the inverse-square behavior
\begin{equation}
  V_\pm(x) \;=\; C_\pm\,x^{-2} + o(x^{-2}), \qquad x \to 0^+, 
  \label{eq:DG-Vcore-again}
\end{equation}
with
\begin{equation}
  C_+ = -\frac{2}{9}, \qquad C_- = \frac{4}{9}.
\end{equation}
Equivalently, there exist parameters $\nu_\pm\in(0,1)$ such that
\begin{equation}
  C_\pm = \nu_\pm^2 - \frac{1}{4}, \qquad
  \nu_+ = \frac{1}{6}, \qquad \nu_- = \frac{5}{6}.
  \label{eq:DG-nu-pm-again}
\end{equation}
Thus, near $x=0$, the formal master operators take the model form
\begin{equation}
  H^{\mathrm{model}}_\pm = -\partial_x^2 
  + \Bigl(\nu_\pm^2 - \frac{1}{4}\Bigr)x^{-2}
  \qquad (x\to 0^+),
\end{equation}
and their local solutions are governed by the standard Frobenius exponents for a
Schr\"odinger operator with an inverse-square core. Writing
\begin{align*}
  -u''(x) + \Bigl(\nu^2 - \frac{1}{4}\Bigr)x^{-2}u(x) = \lambda u(x)
\end{align*}
with fixed $\nu\in(0,1)$ and $\lambda\in\mathbb{R}$ bounded, the indicial equation
shows that any nontrivial solution has the local behavior
\begin{align*}
  u(x) \;=\; a\,x^{\frac{1}{2}-\nu}\bigl(1+o(1)\bigr)
  \;+\; b\,x^{\frac{1}{2}+\nu}\bigl(1+o(1)\bigr),
  \qquad x\to 0^+,
\end{align*}
for some constants $a,b\in\mathbb{C}$. For our two sectors \eqref{eq:DG-nu-pm-again}
this gives
\begin{align}
  \text{scalar/polar (+):}\quad
  & u_+(x) = a_+\,x^{\frac{1}{3}} + b_+\,x^{\frac{2}{3}} + o\bigl(x^{\frac{2}{3}}\bigr),
  \label{eq:DG-Frobenius-plus} \\
  \text{vector/axial (-):}\quad
  & u_-(x) = a_-\,x^{-\frac{1}{3}} + b_-\,x^{\frac{4}{3}} + o\bigl(x^{\frac{4}{3}}\bigr),
  \label{eq:DG-Frobenius-minus}
\end{align}
where we have suppressed higher-order terms in powers of $x^{1/3}$. These
exponents agree with the Frobenius expansions found by Dotti and Gleiser for
their master potentials; see Equations~(13) and (16) of
Dotti and Gleiser~\cite{DottiGleiser2010} for the scalar sector and Equations~(17) and (20) for
the vector sector.

\subsection{Finite-\(T\)-energy and the Friedrichs branch}

The spatial part of the Einstein--Maxwell $T$-energy for a single master field
$\phi$ induces the quadratic form
\begin{equation}
  q_\pm[\phi]
  := \frac12\int_0^\infty \bigl(|\partial_x\phi(x)|^2 + V_\pm(x)|\phi(x)|^2\bigr)\,\dd x,
  \label{eq:DG-qpm-again}
\end{equation}
with
\begin{equation}
\mathcal D(q_\pm)=H_0^1(0,\infty).
\end{equation}
The same symbol denotes the form first on its core and subsequently on this closed Friedrichs domain
for each fixed multipole and parity; see Section~\ref{sec:hardy}.

The crucial point is that the Einstein--Maxwell $T$-energy completely fixes the
admissible boundary behavior at the optical apex. The sharp Hardy inequality on $(0,\infty)$,
combined with the inverse-square structure of $V_\pm$ near $x=0$, shows that the core form $q_\pm$ is
semibounded and closable on $C_0^\infty(0,\infty)$, and that its closed extension, denoted by the same symbol, has domain $\Honezero$. In
particular, the trace of any finite-$T$-energy master field vanishes at the apex in the $H^1$ sense,
and there is no freedom to prescribe additional boundary data at $x=0$ without leaving the energy
space. From the point of view of the abstract one-dimensional operator theory, this means that among
all self-adjoint extensions of the formal Schr\"odinger operator
$H_\pm^{\mathrm{formal}}=-\partial_x^2+V_\pm$ on $\Ltwo$ we have singled out the
Friedrichs extension $H_\pm^\Fried$ as the unique extension compatible with the Einstein--Maxwell
$T$-energy. In the original spacetime picture, the timelike naked singularity at $r=0$ is thus
realized not as a dynamical boundary where extraneous conditions must be imposed, but as a limit
point of the optical half-line at which the $T$-energy enforces a silent, Dirichlet-type behavior
for finite-energy perturbations.

In particular, a static radial profile $u(x)$ for the master field belongs to the finite-$T$-energy sector
if and only if $u\in H_0^1(0,\infty)$ and $q_\pm[u] < \infty$. The Frobenius analysis above shows that this
is equivalent to selecting the Friedrichs branch $x^{\frac12+\nu_\pm}$ at the optical apex and excluding
the more singular branch $x^{\frac12-\nu_\pm}$. Thus, the space of finite-$T$-energy Einstein--Maxwell
perturbations is the closure of $C_0^\infty(0,\infty)$ in the $T$-energy (equivalently,
$H_0^1(0,\infty)$) norm, and more singular behaviors such as $x^{1/3}$ and $x^{-1/3}$ never arise from
compactly supported perturbations.

To determine which Frobenius branches \eqref{eq:DG-Frobenius-plus}--\eqref{eq:DG-Frobenius-minus} belong to $H_0^1(0,\infty)$, we inspect the
behavior of \eqref{eq:DG-qpm-again} near $x=0$ for a pure power-law ansatz
\begin{equation}
  u(x) \sim x^\gamma, \qquad x\to 0^+,
\end{equation}
with some real exponent $\gamma$. Using the core approximation
\eqref{eq:DG-Vcore-again} we have, up to a multiplicative constant,
\begin{equation}
  u'(x) \sim \gamma x^{\gamma-1}, \qquad
  |u'(x)|^2 + V_\pm(x)|u(x)|^2 \sim
  \bigl(\gamma^2 + C_\pm\bigr)\,x^{2\gamma-2}.
\end{equation}
Thus, the contribution of $(0,\varepsilon)$ to the $T$-energy behaves like
\begin{equation}
  q_\pm[u;0,\varepsilon)
  \sim \int_0^\varepsilon x^{2\gamma-2}\,\dd x
  = \frac{\varepsilon^{2\gamma-1}}{2\gamma-1},
  \label{eq:DG-energy-local}
\end{equation}
whenever $\gamma\neq \frac12$. The integral in \eqref{eq:DG-energy-local}
converges as $\varepsilon\to 0^+$ if and only if
\begin{equation}
  2\gamma-1 > 0
  \quad\Longleftrightarrow\quad
  \gamma > \frac{1}{2}.
  \label{eq:DG-energy-condition}
\end{equation}
In particular, the power-law branch $x^{\frac12-\nu}$ is never in the
finite-energy domain when $0<\nu<1$, whereas the branch $x^{\frac12+\nu}$ is
always admissible.

In the scalar/polar (+) sector, the two Frobenius exponents are
  $\gamma_+^{(1)} = \frac{1}{3}$ and $\gamma_+^{(2)} = \frac{2}{3}$. The
  finite-energy condition \eqref{eq:DG-energy-condition} excludes the
  $x^{1/3}$ branch and selects the Friedrichs behavior
  \begin{equation}
    u_+(x) \sim x^{\frac{2}{3}} \quad (x\to 0^+)
    \label{eq:DG-Friedrichs-plus}
  \end{equation}
  up to an overall constant.

In the vector/axial (-) sector, the Frobenius exponents are
  $\gamma_-^{(1)} = -\frac{1}{3}$ and $\gamma_-^{(2)} = \frac{4}{3}$. Again,
  \eqref{eq:DG-energy-condition} excludes the singular $x^{-1/3}$ branch and
  picks out
  \begin{equation}
    u_-(x) \sim x^{\frac{4}{3}} \quad (x\to 0^+)
    \label{eq:DG-Friedrichs-minus}
  \end{equation}
as the unique finite-energy behavior.

These are the apex asymptotics of the positive zero-energy solutions
$u_{0,\pm}$ used in the Doob (ground-state) factorization
\begin{equation}
  H_\pm^\Fried=A_\pm^{\!*}A_\pm,
\end{equation}
namely
\begin{align*}
  u_{0,\pm}(x) \sim x^{\frac12+\nu_\pm} 
  = \begin{cases}
      x^{\frac{2}{3}}, & (+)\text{ sector},\\[3pt]
      x^{\frac{4}{3}}, & (-)\text{ sector},
    \end{cases}
  \qquad x\to 0^+.
\end{align*}

\subsection{Asymptotics of the Dotti--Gleiser algebraically special mode}

We now recall the relevant part of the analysis by Dotti and Gleiser~\cite{DottiGleiser2010}, specialized to the superextremal regime $Q^2>M^2$
and to the scalar/polar sector. Working in the tortoise coordinate $x$ defined
by $\dd x/\dd r = f(r)^{-1}$, which coincides with our optical radius up to an
additive constant, they show that the scalar master field $\zeta^+_\alpha$ for
each angular mode $(\ell,m)$ satisfies a one-dimensional wave equation
\begin{equation}
  \partial_t^2\zeta^+_\alpha - \partial_x^2\zeta^+_\alpha
  + V^+_\alpha(x)\,\zeta^+_\alpha = 0,
\end{equation}
with an effective potential $V^+_\alpha$ that agrees with our $V_+(x)$ up to the
choice of basis in the $(+,-)$ sector. Near the spacetime singularity, their
potential has the same inverse-square core as in \eqref{eq:DG-Vcore-again}, and
the Frobenius analysis yields the general local behavior
\begin{equation}
  \zeta^+_\alpha(x) \;=\; 
  A_\alpha\,x^{\frac{1}{3}}(1+O(x^{1/3}))
  \;+\; B_\alpha\,x^{\frac{2}{3}}(1+O(x^{1/3})),
  \qquad x\to 0^+,
\end{equation}
for constants $A_\alpha,B_\alpha$ (see Equations~(13), (16), and (42) of
Dotti and Gleiser~\cite{DottiGleiser2010}).

Dotti and Gleiser then construct \emph{algebraically special modes} in the scalar
sector. Among these, they single out the type~1 scalar mode $\chi^+_1(x)$, which
serves as the seed for their instability argument. As shown in Equation~(42) of
Dotti and Gleiser~\cite{DottiGleiser2010}, the asymptotics of this mode at the singularity are
\begin{equation}
  \chi^+_1(x) 
  \;=\; C\,x^{\frac{1}{3}}\bigl(1+O(x^{1/3})\bigr),
  \qquad x\to 0^+,
  \label{eq:DG-chi-ASM-asymptotic}
\end{equation}
for some nonzero constant $C$. In particular, $\chi^+_1$ realizes the \emph{more
singular} Frobenius branch
\begin{equation}
  \chi^+_1(x) \sim x^{\frac12-\nu_+} = x^{\frac{1}{3}}
\end{equation}
rather than the Friedrichs branch $x^{\frac12+\nu_+} = x^{2/3}$ singled out by
the finite-$T$-energy requirement \eqref{eq:DG-Friedrichs-plus}.

The unstable mode itself has the form
\begin{equation}
  \zeta^+_1(t,x) = e^{\kappa t}\,\chi^+_1(x),
\end{equation}
with $\kappa>0$, so that $\chi^+_1$ is an eigenfunction of the associated
spatial operator with eigenvalue $-\kappa^2$. Its spatial profile
$\chi^+_1(x)$ has the asymptotics in \eqref{eq:DG-chi-ASM-asymptotic}, and the
finite-$T$-energy condition is governed by the behavior of
\begin{equation}
  q_+[\chi_1^+;0,\varepsilon)
  := \int_0^\varepsilon \bigl(|\partial_x\chi_1^+(x)|^2
      + V_+(x)|\chi_1^+(x)|^2\bigr)\,\dd x.
\end{equation}
Using \eqref{eq:DG-chi-ASM-asymptotic} and the core approximation
$V_+(x)\sim C_+x^{-2}$ we obtain, for $x$ sufficiently small,
\begin{equation}
  \chi^+_1(x) \sim C x^{\frac{1}{3}},\qquad
  \partial_x\chi^+_1(x) \sim \frac{C}{3}x^{-\frac{2}{3}},
\end{equation}
and hence
\begin{equation}
  |\partial_x\chi^+_1(x)|^2 + V_+(x)|\chi^+_1(x)|^2
  \sim \widetilde C\,x^{-\frac{4}{3}},
\end{equation}
for some nonzero constant $\widetilde C$. Thus the local-energy calculation gives
\begin{equation}
  q_+[\chi_1^+;0,\varepsilon)
  \sim \int_0^\varepsilon x^{-\frac{4}{3}}\,\dd x
  =3\varepsilon^{-\frac13}
  \longrightarrow+\infty
  \qquad(\varepsilon\to0^+).
\end{equation}
In other words, the Dotti--Gleiser algebraically special scalar mode $\chi^+_1$
does \emph{not} lie in the Friedrichs energy domain $\Dom(q_+)=H_0^1(0,\infty)$:
its bulk $T$-energy expression is not finite at the optical apex $x=0$. The derivative term alone forces the divergence.

Near the singularity, the radial-coordinate relation used in this appendix is
\begin{align*}
x\sim r^{1/3},
\end{align*}
so that $\chi^+_1(x)\sim x^{1/3}$ corresponds to $\chi^+_1\sim r$ in the original radial coordinate. This algebraic behavior is the branch used in the Dotti--Gleiser instability construction~\cite{DottiGleiserPullin2007}. The present calculation shows that the same branch has no finite bulk $T$-energy at the apex. Thus the unstable modes constructed in that endpoint realization lie outside the physical Hilbert space determined by the action-selected form. The finite-$T$-energy requirement does not contradict their formal instability analysis; it places those modes outside the present perturbation space.

An entirely analogous computation shows that any would-be vector/axial
algebraically special mode realizing the more singular branch
$u_-(x)\sim x^{-1/3}$ has
\begin{equation}
  |\partial_x u_-(x)|^2 + V_-(x)|u_-(x)|^2 \sim \widehat C\,x^{-\frac{8}{3}},
\end{equation}
and hence
\begin{equation}
  \int_0^\varepsilon
  \bigl(|\partial_x u_-(x)|^2+V_-(x)|u_-(x)|^2\bigr)\,\dd x
  \sim \int_0^\varepsilon x^{-\frac{8}{3}}\,\dd x
  =\frac35\varepsilon^{-\frac53}
  \longrightarrow+\infty,
\end{equation}
so such a mode would also be excluded from the finite-$T$-energy sector.

\subsection{Summary}

The inverse-square cores place the superextremal Reissner--Nordstr\"om naked singularity in the Hardy window $0<\nu_\pm<1$, where the formal Schr\"odinger expressions $H_\pm^{\mathrm{formal}}$ are in the limit-circle case at the apex. The finite-$T$-energy condition selects the Friedrichs self-adjoint extension and forces the master fields to realize the less singular Frobenius branches \eqref{eq:DG-Friedrichs-plus}--\eqref{eq:DG-Friedrichs-minus}. The algebraically special profiles solve the formal master equations, but their $T$-energy diverges at the optical apex. They do not lie in
\begin{align*}
\mathcal D(q_\pm)=H_0^1(0,\infty),
\end{align*}
and are excluded from the finite-$T$-energy Hilbert space and from the unitary dynamics generated by the Friedrichs extensions $H_\pm^\Fried$. This restriction concerns their physical admissibility and does not deny their formal existence.

\end{document}